\def \thesis {1}
\newtheorem{theorem}{Theorem}
\newtheorem{claim}{Claim}
\newtheorem{conjecture}{Conjecture}
\newtheorem{cor}{Corollary}
\newtheorem{prop}{Proposition}
\newaliascnt{lemma}{theorem}
\newtheorem{lemma}[lemma]{Lemma}
\newtheorem{sublemma}{Sublemma}
\newtheorem{corollary}{Corollary}
\newtheorem{proposition}{Proposition}
\newtheorem{fact}{Fact}
\newtheorem{observation}{Observation}
\newtheorem{definition}{Definition}
\newtheorem{openq}{Open Question}
\newtheorem{theorem}{Theorem}[chapter]
\newaliascnt{lemma}{theorem}
\newtheorem{lemma}[lemma]{Lemma}
\newtheorem{observation}{Observation}[chapter]
\newtheorem{definition}{Definition}[chapter]
\newcommand{\vs}{{\vspace *{0.2 in}}}
\newcommand{\edpmax}{\textsc{Max-EDP}}
\newcommand{\ufpbag}{\textsc{Bag-UFP}}
\newcommand{\ufpmax}{\textsc{Max-UFP}}
\newcommand{\ufpround}{\textsc{Round-UFP}}
\newcommand{\ufprounduniform}{\textsc{Round-UFP-Uniform}}
\newcommand{\oic}{\textsc{Online Interval Coloring}}
\newcommand{\RR}{\mathcal{R}}
\newcommand{\eps}{\epsilon}
\newcommand{\inv}[1]{\frac{1}{#1}}
\newcommand{\col}{{\tt col}}
\newcommand{\D}{{\cal D}}
\newcommand{\cl}{{\tt cl}}
\newcommand{\profit}{{\tt profit}}
\newcommand{\lar}{{\tt l}}
\newcommand{\sm}{{\tt s}}
\newcommand{\nba}{\emph{no-bottleneck assumption}}
\newcommand{\roundufp}{\textsc{Unsplittable Flow Problem with Rounds}}
\newcommand{\maxufp}{\textsc{Unsplittable Flow Problem}}
\newcommand{\bagufp}{\textsc{Unsplittable Flow Problem with Bag Constraints}}
\newcommand{\resjob}{\textsc{Resource Allocation for Job Scheduling Problem}}
\newcommand{\junk}[1]{}
\newcommand{\eat}[1] {}
\newcommand{\alg}{{\textrm{ALG}}}
\newcommand{\opt}{{\textsc{OPT}}}
\newcommand{\J}{{\cal J}}
\newcommand{\I}{{\cal I}}
\newcommand{\E}{{\cal E}}
\renewcommand{\S}{{\cal S}}
\renewcommand{\O}{{\cal O}}
\newcommand{\ResAll} {{\sc ResAll}}
\newcommand{\ZeroOneResAll} {{\sc (0-1)-ResAll}}
\newcommand{\PResAll}{{\sc PartialResAll}}
\newcommand{\PCResAll}{{\sc PrizeCollectingResAll}}
\newcommand{\lspc}{{\sc LSPC}}
\newcommand{\smfc}{{\sc SMFC}}
\newcommand{\calI} {{\cal R}}
\newcommand{\cM}{{\cal M}}
\newcommand{\cC}{{\cal C}}
\newcommand{\cJ}{{\cal J}}
\newcommand{\lptr}{{\em l-ptr}}
\newcommand{\rptr}{{\em r-ptr}}
\newcommand{\ljob}{{\em l-job}}
\newcommand{\rjob}{{\em r-job}}
\newcommand{\sh} {{\rm sh}}
\newcommand{\calL} {{\cal L}}
\newcommand{\calT} {{\cal T}}
\newcommand{\calM} {{\cal M}}
\newcommand{\cA} {{\cal A}}
\newcommand{\calA} {\cA}
\newcommand{\cB} {{\cal B}}
\newcommand{\DP} {{\rm DP}}
\newcommand{\ceil}[1] {\lceil #1 \rceil}
\newcommand{\wh}[1] {\widehat{#1}}
\DeclareMathOperator*{\argmin}{arg\,min}
\begin{document}
\doublespacing
\pagestyle{empty}
\begin{titlepage}

\begin{center}

\Large \textbf{\MakeUppercase{Approximation Algorithms for Covering and Packing Problems on Paths}}\\ 

\vspace{4cm}

\large

\textbf{\MakeUppercase{Arindam Pal}} 

\vspace{8cm}
\hspace{0cm}
\includegraphics[width=1in]{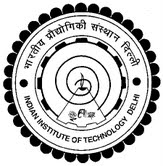}
\vspace{1cm}

\large{\textbf{DEPARTMENT OF COMPUTER SCIENCE AND ENGINEERING}}\\
\large{\textbf{INDIAN INSTITUTE OF TECHNOLOGY DELHI}}\\
\large{\textbf{NOVEMBER 2012}}\\

\end{center}

\end{titlepage}

\newpage \ \newpage
\pagestyle{empty}
\begin{center}
\Large{\textbf{\MakeUppercase{Approximation Algorithms for Covering and Packing Problems on Paths}}} 

\vs \vs \vs \vs \vs
by \\
\large{\textbf{\MakeUppercase{Arindam Pal}}} \\
\large{\textbf{Department of Computer Science and Engineering}}\\
\vs
\vs
\vs \vs \vs
Submitted \\ in fulfillment of the requirements of the degree of \\
{\bf Doctor of Philosophy} \\
\vs \vs \vs \vs \vs
to the \\
\vs
\begin{figure}[h]
\begin{center}
\includegraphics[width=1in]{iitlogo.jpg}
\end{center}
\end{figure}
\vs
\bf{\Large{Indian Institute of Technology Delhi \\
November 2012}} \\
\end{center}
\pagebreak

\newpage


\newpage \ \newpage
\vspace*{\fill}
\begingroup
\begin{center}
\textsf{\Large{To my family for their love, support and patience}}
\end{center}
\endgroup
\vspace*{\fill}

\newpage \ \newpage
\begin{center}
{\Huge \textbf{Certificate}}
\end{center}
\ \\ \ \\
\ \\ \ \\

\noindent This is to certify that the thesis titled \textbf{Approximation Algorithms for Covering and Packing Problems on Paths} being submitted by Arindam Pal for the award of the degree of
\textsf{Doctor of Philosophy} in Computer Science and Engineering is a record of original bonafide research work carried out by him under our guidance and supervision at the Department of Computer Science and Engineering, Indian Institute of Technology Delhi. The results contained in this thesis have not been submitted in part or full to any other university or institute for the award of any degree or diploma.

\ \\ \ \\  \ \\ \ \\

\begin{table}[h]
\begin{center}
\begin{tabular}{ccc}
\textbf{Naveen Garg} & \hspace{1cm} &\textbf{Amit Kumar}\\
Professor & & Professor \\
Department of & & Department of \\
Computer Science and Engineering & & Computer Science and Engineering \\
Indian Institute of Technology Delhi & & Indian Institute of Technology Delhi \\
\end{tabular}
\end{center}
\end{table}

\newpage \ \newpage
\begin{center}
{\Huge \textbf{Acknowledgments}}
\end{center}
\ \\ \ \\
I am grateful to my doctoral research advisors Professor Amit Kumar and Professor Naveen Garg for being a constant source of inspiration. They gave me the freedom to work on problems that I like. They patiently listened to my ideas, even when some of them were not so great. They also gave valuable suggestions to improve the ideas and gave many ideas of their own. I am indebted to them for their help and advice during my graduate studies. This thesis would not have been possible without their cooperation.

I learned a great deal from my professors -- Sandeep Sen, Amitabha Bagchi and Ragesh Jaiswal. The courses taught by them (along with my advisors) built the foundation of my research on different topics of mathematics and theoretical computer science. In addition, I had the good fortune of working with them on some research problems, which also shaped my thoughts on theoretical research.

I would also like to thank my coauthors Sambuddha, Venkat, Yogish, Prashant and Saurav. Without their help, many of the results in my thesis would not have seen the light of the day. Thank you very much for all those discussions and ideas that we had over the last few years.

I spent a great time at IIT Delhi with my friends Muralidhara, Rudra, Ayesha, Syamantak, Anamitra, Shibashis, Pravesh, Brojeswar, Manoj, Anuj, Chinmay, Swati and Mona. Sorry if I forgot anyone's name. You gave me company during both good and bad times. I will fondly remember and cherish those moments in the years to come.

I thank my good friends Dinesh, Vinay and Gopalda for giving me company. Special thanks to Dinesh for those intensive discussions at IBM Research and at my home. Thanks to Vinay for the good times we had in JNU. Thanks also to Gopalda for being such a good friend and mentor. I will always remember the good moments that I spent with all of you.

My special thanks to Roger Federer for playing such great game of tennis and providing quality entertainment over the last 10 years. You showed me by your own example, that hard work, determination, dedication and discipline can help a man to reach great heights. I love you Roger!

Last but not the least, I would like to thank my family members -- my uncle Dr Debi Prasad Pal (Jatha), late father Bani Prasad (Baba), mother Bani (Ma), wife Sushmita, sister Anindita (Didi), brother-in-law Kanai (Dada), and nephews Arkajyoti and Debajyoti (Bhagna). Without their moral support, encouragement and cooperation, this thesis would not have been possible. I affectionately dedicate this thesis to them.

\begin{center}
{\Huge \textbf{Abstract}}
\end{center}
\ \\ \ \\
Routing and scheduling problems are fundamental problems in combinatorial optimization, and also have many applications. Most variations of these problems are NP-Hard, so we need to use heuristics to solve these problems on large instances, which are fast and yet come close to the optimal value. In this thesis, we study the design and analysis of approximation algorithms for such problems. We focus on two important class of problems. The first is the {\maxufp} and some of its variants and the second is the {\resjob} and some of its variants. The first is a \emph{packing} problem, whereas the second is a \emph{covering} problem.

In the {\maxufp}, we are given a path or a tree, each edge of which has a capacity. We are also given a set of requests, each of which has a start vertex, an end vertex, a demand and a profit. The objective is to select a subset of requests so as to maximize the total profit, subject to the condition that on every edge the total demand of the selected requests is at most it's capacity. We also study variants of this problem such as {\roundufp} and {\bagufp}. We give constant factor approximation algorithms for all of these problem on paths and trees under the {\nba}. We also give a constant factor competitive algorithm for the {\oic} problem.

In the {\resjob}, the timeline is divided into a set of discrete timeslots.. We are given a set of jobs, each of which has a start time, an end time and a demand requirement. We are also given a set of resources, each of which has a start time, an end time, a capacity  and a cost. A feasible solution is a set of resources satisfying the constraint that at any timeslot, the sum of the capacities offered by the resources is at least the demand required by the jobs active at that timeslot, \emph{i.e.}, the selected resources must cover the jobs. The objective is to select a subset of resources of minimum cost, which will cover all the jobs. This is called the \emph{resource allocation problem} ({\ResAll}). We consider the partial covering version ({\PResAll}) and the prize-collecting version ({\PCResAll}) of this problem. We give an $O(\log (n+m))$-approximation algorithm for the {\PResAll} problem, where $n$ is the number of jobs and $m$ is the number of resources respectively. We also give a $4$-approximation algorithm for the {\PCResAll} problem.


\pagestyle{fancy}

\tableofcontents
\newpage \ \newpage
\listoffigures
\newpage \ \newpage
\listoftables


\newpage \ \newpage
\ifx \thesis \undefined
\section {Introduction}
\else
\chapter {Introduction}
\label {chap1}
\fi

In this thesis, we study several important classes of covering and packing problems restricted to paths. In the class of covering problems, each edge (or a consecutive set of edges) of a path has a \emph{demand}, and we would like to allocate resources to meet the demands under various constraints. We broadly call this class of problems \emph{resource allocation for job scheduling}. In the packing scenario, we consider the problems where each edge has a \emph{capacity}, and we would like to route demands under these constraints. We broadly call this class of problems \emph{routing problems in communication networks}.

Many combinatorial optimization problems which are \textsc{NP-Hard} on general graphs remain \textsc{NP-Hard} on paths. A path is a natural setting for modeling many applications, where a limited resource is available and the amount of the resource varies over time. Many routing and scheduling problems fit into this framework. For packing problems, we can represent time instants as vertices, time intervals as edges and the amount of resource available in a time interval as the capacity of the corresponding edge. The requirement of a resource between two time instants can be represented as a demand between the corresponding vertices with a certain profit associated with it. Similarly for covering problems, we can think of the time interval between two time instants as jobs, whose demands must be satisfied on every time interval on their span by the resources.

\section {Preliminaries}
In this section, we define the notation and terminology that we will use throughout
the thesis. We work with undirected graphs, unless stated otherwise. We begin with some definitions.

\subsection {Approximation algorithms and approximation factors}
Since almost all the problems considered in this thesis are NP-hard, it is unlikely that there exist polynomial-time algorithms to compute the optimal solution for them. So, our goal will be to compute an approximate solution, which is close to the optimal solution. An \emph{$\alpha$-approximation algorithm} for an optimization problem $\Pi$ is a polynomial-time algorithm that for all instances of the problem produces a solution whose value is within a factor of $\alpha$ of the value of an optimal solution for that instance. If $\alg(I)$ is the value of the solution computed by an algorithm and $\opt(I)$ is the value of the optimal solution on input instance $I \in \Pi$ then, $\opt(I) \le \alg(I) \le \alpha \cdot \opt(I)$ (for \emph{minimization} problems) or $\opt(I) \ge \alg(I) \ge \alpha \cdot \opt(I)$ (for \emph{maximization} problems) for every instance $I$. The number $\alpha$ is called the \emph{approximation factor} of the algorithm.

\subsection {Online algorithms and competitive ratios}
In the online setting, data arrives over time, and at each point of time the algorithm has to maintain a solution for the data that has already arrived. In contrast, an \emph{offline algorithm} has the entire input available for processing. Often, we can't hope to compute the optimal solution without seeing the whole input in advance. Let $\sigma$ be an input sequence and let $\alg(\sigma)$ and $\opt(\sigma)$ be the costs of the solution of the algorithm and the optimal offline solution on $\sigma$. An online algorithm is \emph{$\rho$-competitive} if for every sequence $\sigma$, $\opt(\sigma) \le \alg(\sigma) \le \rho \cdot \opt(\sigma)$ (for \emph{minimization} problems) or $\opt(\sigma) \ge \alg(\sigma) \ge \rho \cdot \opt(\sigma)$ (for \emph{maximization} problems). The number $\rho$ is called the \emph{competitive ratio} of the algorithm.

\section {Routing problems in communication networks}
A \emph{communication network} consists of \emph{nodes} communicating with each other through a set of \emph{links} interconnecting these nodes. We can think of these nodes as transmitters and receivers and the links as channels. Each channel has some \emph{capacity} or \emph{bandwidth}. A fundamental problem in communication networks is to allocate bandwidth and assign paths to connection requests. A \emph{connection request} consists of two nodes called its \emph{source} and \emph{destination}. There is a \emph{demand} associated with the request. The objective is to allocate bandwidth on some path from source to destination to satisfy the demand. Since there are several requests, it may not be possible to satisfy all demands without exceeding the capacities of some channels.

Most of these problems can be modeled as variants of the \emph{multicommodity flow problem} in a graph. Here, we are given a graph $G = (V,E)$, where $V$ is the set of vertices and $E$ is the set of edges. Let $n = |V|$ be the number of vertices and $m = |E|$ be the number of edges of the graph. Each edge $e \in E$ has a \emph{capacity} $c_e \equiv c(e)$. We are also given a set of \emph{requests} $\RR = \{R_1,\ldots,R_k\}$. Each request $R_i$ has a \emph{source vertex} $s_i$, a \emph{destination vertex} $t_i$ and a bandwidth \emph{demand} $d_i$. Sometimes, there is also a \emph{profit} $w_i$ associated with $R_i$. The goal is to route the requests without violating any edge capacity. This is the \emph{feasibility} condition. The objective function that we want to optimize varies for different problems. Here are some natural objective functions.

\begin{enumerate}
	\item What is the \emph{maximum} number of requests that can be satisfied feasibly?
	\item What is the \emph{minimum} number of rounds required to satisfy \emph{all} requests, so that in every round the set of requests that are satisfied are feasible?
\end{enumerate}

\subsection {Notations}
We summarize the symbols we will use along with their meanings in \autoref{symtab} on page \pageref{symtab}.

\begin{table}[ht]
\begin{center}
\begin{tabular}{|c|l|}
\hline
\textbf{Symbol} & \textbf{Explanation} \\
\hline
$c_{\max},c_{\min}$ & Maximum and minimum capacities. \\
$d_{\max},d_{\min}$ & Maximum and minimum demands. \\
$w_{\max},w_{\min}$ & Maximum and minimum profits. \\
$\alpha$ & Expansion of a graph. \\
$\Delta$ & Maximum degree of a graph. \\
$\omega$ & Maximum clique size of a graph. \\
$r$ & Maximum edge congestion of a graph. \\
\hline
\end{tabular}
\caption{Notations used in the thesis}
\label{symtab}
\end{center}
\end{table}

\subsection {Problem definition and motivation}
We now define the various problems that we will study.

\noindent {\edpmax} (\textsc{Maximum Edge-Disjoint Paths Problem})\\
\textsf{Input:} Graph $G = (V,E)$, requests $\RR = \{(s_i,t_i):i=1,\ldots,k\}$.\\
\textsf{Output:} A feasible subset of requests $S \subseteq \RR$ along with a path $P_i$ connecting $(s_i,t_i)$ for all $i \in S$, such that $P_i$ and $P_j$ are edge-disjoint for $i \ne j$.\\
\textsf{Objective:} Maximizing the number of feasible requests $|S|$.\\

\noindent {\ufpmax} (\textsc{The Unsplittable Flow Problem})\\
\textsf{Input:} Graph $G = (V,E,c)$, requests $\RR = \{(s_i,t_i,d_i,w_i):i=1,\ldots,k\}$.\\
\textsf{Output:} A feasible subset of requests $S \subseteq \RR$ along with a path $P_i$ connecting $(s_i,t_i)$ for all $i \in S$.\\
\textsf{Objective:} Maximizing the total profit $\sum_{i \in S} w_i$.\\

\noindent {\ufpround} (\textsc{Unsplittable Flow Problem with Rounds})\\
\textsf{Input:} Graph $G = (V,E,c)$, requests $\RR = \{(s_i,t_i,d_i):i=1,\ldots,k\}$.\\
\textsf{Output:} Partition $\RR$ into a number of sets such that each set is feasible.\\
\textsf{Objective:} Minimizing the total number of sets.\\

\noindent {\ufpbag} (\textsc{Unsplittable Flow Problem with Bag Constraints})\\
\textsf{Input:} Graph $G = (V,E,c)$, \emph{bags} of requests $\RR^1,\ldots,\RR^p$, where each bag $\RR^j = \{(s^j_i,t^j_i,d^j_i):i=1,\ldots,k\}$ has a profit $w^j$.\\
\textsf{Output:} A subset of bags $B$ and at most one request from each bag along with the paths for all selected requests such that the set of requests are feasible.\\
\textsf{Objective:} Maximizing the total profit $\sum_{\RR^j \in B} w^j$.\\

We will study these problems when the input graph is a path or a tree. Note that there is a unique path between any two vertices, and so we only need to figure out which requests to choose. These problems when restricted to a path can also be used for modeling a time-varying resource. For each time $t$, we have a vertex. The capacity of the edge $(t,t+1)$ denotes how much resource is available.

\begin{figure}[ht]
\begin{center}
\includegraphics[scale=0.7]{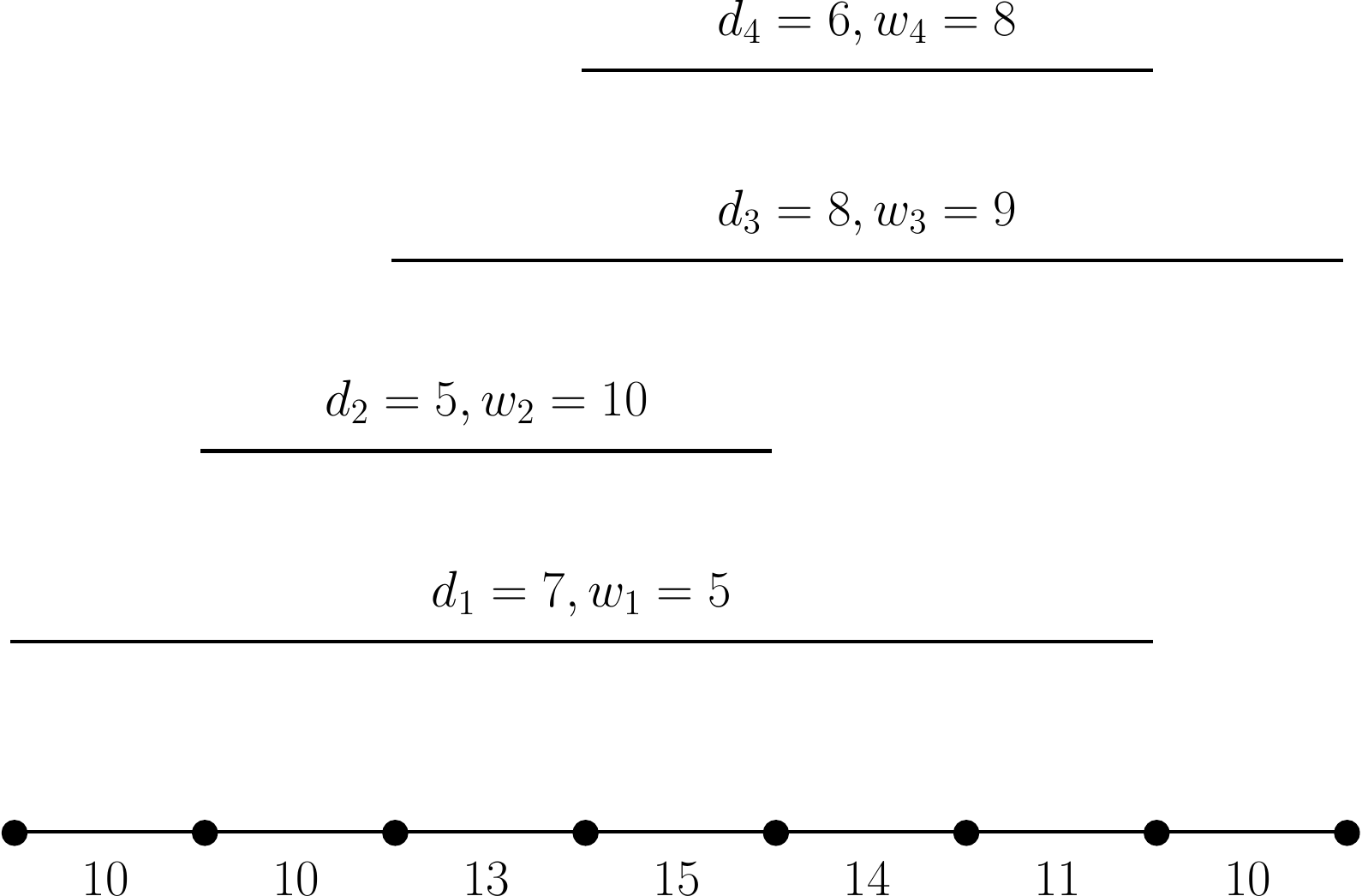}
\caption{A sample {\ufpmax} instance}
\label{ufp-example}
\end{center}
\end{figure}

In {\ufpmax}, there are a set of users who want to use different amounts of this resource over different time intervals and are ready to pay for this. The goal is to select a subset of these users to maximize the profit, while satisfying the resource availability constraint at each instant, \emph{i.e.}, the total demand of selected users at any instant does not exceed the resource available. An example of {\ufpmax} is shown in \autoref{ufp-example}.

The concept of \emph{bag constraints} (at most one request can be selected from each bag) in {\ufpbag} is quite powerful. Apart from handling the notion of release time and deadline, it can also work in a more general setting where a job can specify a set of possible time intervals where it can be scheduled. Moreover, it allows for different instances of the same job to have different bandwidth requirements, processing times and profits.

In {\ufpround}, we can model the number of \emph{copies} of the time-varying resource needed to satisfy all requests. This can also model routing in optical networks, where each copy of the resource corresponds to a distinct frequency. As the number of distinct available frequencies is limited, minimizing the number of rounds for a given set of requests is a natural objective.

One important assumption we make is the {\nba} (NBA), which states that the maximum demand requirement of any request is at most the minimum edge capacity, i.e., $\max_i d_i \leq \min_e c_e$. Note that this assumption is stronger than the \emph{feasibility} requirement, which says that the demand of any request is at most the minimum edge capacity on its source-sink path. This is a standard assumption in these settings. From a practical perspective, it should hold. From an algorithmic perspective, it is needed to ensure that the integrality gap of the linear programming relaxation is small.

\subsection {Related work}
{\edpmax} is NP-hard, even for restricted classes of graphs like planar graphs. However, {\edpmax} can be solved optimally in polynomial time for some classes of graphs. When the graph is a path, it translates to finding the maximum number of pairwise disjoint intervals. This is equivalent to finding a maximum independent set in an interval graph, which can be done in linear time \cite{GuptaLL82}. For trees, a polynomial time algorithm was given in \cite{GargVY97}. In undirected rings, {\edpmax} can also be solved optimally in polynomial time \cite{WanL98}. For undirected graphs, {\edpmax} is known to be APX-hard \cite{Erlebach01} and there is no algorithm with approximation factor $\Omega((\log n)^{\frac{1}{2} - \epsilon})$ for any $\epsilon > 0$, unless $\mathbf{NP \subseteq ZPTIME}(n^{\mathrm{polylog}(n)})$ \cite{AndrewsCZ05}. Here, $\mathbf{ZPTIME}(n^{\mathrm{polylog}(n)})$ is the set of languages that have randomized algorithms that always give the correct answer and have expected running time $n^{\mathrm{polylog}(n)} \equiv n^{O(\log^{O(1)} n)}$. For directed graphs, there is no algorithm with approximation factor $\Omega(m^{\frac{1}{2} - \epsilon})$ for any $\epsilon > 0$, unless $\mathbf{P = NP}$ \cite{GuruswamiKRSY03}. For directed graphs, {\edpmax} has a bounded-length greedy (BGA) $O(\sqrt{m})$-approximation algorithm \cite{Kleinberg96}.

\ifx \synopsis \undefined
{\edpmax} has also been studied for special type of graphs. For bounded-degree expander graphs, Kleinberg and Rubinfeld \cite{KleinbergR96} showed that the bounded-length greedy algorithm gives an $O(\log n \log \log n)$-approximation. Kolman and Scheideler \cite{KolmanS04} gave an improved $O(\log n)$-approximation by using the fact that routing number for expanders is $O(\log n)$. For two-dimensional meshes, Kleinberg and Tardos \cite{KleinbergT95} gave a randomized polynomial-time algorithm that achieves a constant-factor approximation with high probability. For hypercubes, Kolman and Scheideler \cite{KolmanS06} gave an $O(\log n)$-approximation by exploiting the fact that hypercubes have flow number $O(\log n)$.
\fi

{\ufpround} is NP-Hard, since it contains the \textsc{Bin Packing} problem as a special case, where the graph is just a single edge. \textsc{Bin Packing} is known to be APX-hard, so a \emph{polynomial time approximation scheme} (PTAS) is not possible. However, it has an \emph{asymptotic  polynomial time approximation scheme} (APTAS). There are also simple greedy algorithms like \emph{first-fit} and \emph{best-fit}, which give constant-factor approximations \cite{Borodin98,Hochbaum97, Vazirani01, WilliamsonShmoys11}. When all capacities and demands are 1, {\ufpround} reduces to the interval coloring problem on paths, for which a simple greedy algorithm gives the optimal coloring in linear time.

The {\ufpround} problem for paths has been well-studied in the context of online algorithms. Here the demands (intervals) arrive in arbitrary order, and we need to assign them a color on their arrival so that all intervals with one color form a feasible packing, \emph{i.e.,} total demand on any edge does not exceed its capacity. In this context, it is also called the \emph{interval coloring} problem. When all capacities and demands are 1, \emph{i.e.}, when no two intersecting intervals can be given the same color, the first-fit algorithm achieves a constant competitive ratio. Kierstead \cite{Kierstead88} first proved that first-fit requires at most $40 \omega$ colors to color an interval graph with clique size $\omega$. Later Kierstead and Qin \cite{KiersteadQ95} improved it to $26 \omega$. Subsequently, Pemmaraju et al. \cite{PemmarajuRV11} improved it to $8 \omega$, which is currently the best known upper bound. Chrobak and Slusarek \cite{ChrobakS88} showed that first-fit uses at least $4.4 \omega$ colors in the worst case. Kierstead and Trotter \cite{KiersteadT81} gave a different online algorithm which uses at most $3 \omega - 2$ colors. They also proved that any deterministic online algorithm in the worst case will require at least $3 \omega - 2$ colors, so this algorithm is the best possible one can hope for.

Adamy and Erlebach \cite{AdamyE03} introduced the \emph{interval coloring with bandwidth} problem. In this problem, all edge capacities are 1 and each interval has a demand in $(0, 1]$. They gave a 195-competitive algorithm for this problem. Later, the competitive ratio was improved to 10 by Narayanaswamy \cite{Narayanaswamy04} and Azar et al. \cite{AzarFLN06}. Epstein et al.~\cite{EpsteinEL09} further generalized the problem by allowing arbitrary edge capacities and arbitrary demands. They gave a 78-competitive algorithm for this problem satisfying the no-bottleneck assumption (NBA). Without NBA, they gave a $O\left(\log \left(\frac{d_{\max}}{c_{\min}}\right)\right)$-competitive algorithm. They also showed that without this assumption, there is no deterministic online algorithm for interval coloring with nonuniform capacities and demands, that can achieve a competitive ratio better than $\Omega(\log \log n)$ or $\Omega\left(\log \log \log \left(\frac{c_{\max}}{c_{\min}}\right)\right)$. Here, $c_{\max}$ and $c_{\min}$ are the maximum and minimum edge capacities of the path respectively.

{\ufpround} has been studied on trees and meshes ($n \times n$ two-dimensional grids) for the special case when all capacities and demands are 1. Bartal and Leonardi \cite{BartalL99} gave an online algorithm for trees with competitive ratio $O(\log n)$. The also showed that any online algorithm for trees cannot have competitive ratio better than $\Omega\left(\frac{\log n}{\log \log n}\right)$. For meshes, they gave matching upper and lower bounds of $O(\log n)$.

{\ufpmax} and {\ufpbag} are \emph{weakly} NP-Hard, since they contain the \textsc{Knapsack} problem as a special case, where the graph is just a single edge. For \textsc{Knapsack}, an FPTAS is known, and it has a simple greedy 2-approximation algorithm \cite{Vazirani01, WilliamsonShmoys11}. When all capacities, demands and profits are 1, {\ufpmax} specializes to {\edpmax}. Recently, it has been proved that the problem is \emph{strongly} NP-hard, even for the restricted case where all demands are chosen from $\{1,2,3\}$ and all capacities are uniform \cite{BonsmaSW11}. However, the problem is not known to be APX-hard, so a \emph{polynomial time approximation scheme} (PTAS) may still be possible.

With NBA, Chakrabarti et al.~\cite{ChakrabartiCGK07} gave the first
constant factor approximation algorithm for {\ufpmax} on the path and the approximation ratio was subsequently improved to $(2+\eps)$ for any constant $\eps > 0$ by Chekuri et al.~\cite{ChekuriMS07}. They also gave
a constant factor approximation algorithm for {\ufpmax} on trees. These algorithms are based on the idea
of rounding a natural LP relaxation of the {\ufpmax} problem. Without NBA, Bonsma et al. \cite{BonsmaSW11} gave a polynomial time $(7 + \epsilon)$-approximation algorithm for any $\epsilon > 0$, and a $25.12$-approximation algorithm with running time $O(n^4 \log n)$. Their algorithm divides the demands into three classes: small, medium and large. For small and medium demands, they use LP rounding to get a $(3 + \epsilon)$-approximation algorithm. For large demands, they model this as a maximum weight independent set problem for a set of rectangles. Using a dynamic programming based algorithm, they give a $4$-approximation
algorithm for large demands.

{\ufpmax} has also been studied for other graph classes. We mention some of the results for cycles and trees. Under NBA, it can be shown that {\ufpmax} on a cycle can be reduced to two instances of {\ufpmax} on a path, by splitting the cycle at a carefully selected edge. From this, if we have a $\rho$-approximation for {\ufpmax} on a path, we can get a $(\rho+1)$-approximation for {\ufpmax} on a cycle \cite{ChakrabartiCGK07}. Hence, by using the $(2+\epsilon)$-approximation for {\ufpmax} on a path given by \cite{ChekuriMS07}, we can immediately get a $(3+\epsilon)$-approximation for {\ufpmax} on a cycle. By directly modeling the problem as an LP, one can get an improved $(2 + \epsilon)$-approximation \cite{ChekuriMS07}. For trees, under NBA, there is a $4$-approximation for unit demands and a $48$-approximation for arbitrary demands, the profits being arbitrary in both the cases \cite{ChekuriMS07}. Without NBA, there is an $O(\log n)$-approximation for unit profits and an $O(\log^2 n)$-approximation for arbitrary profits, the demands being arbitrary in both cases \cite{ChekuriEK09}.

\ifx \synopsis \undefined
For general graphs, Kolman and Scheideler \cite{KolmanS06} gave an $O\left(\inv{\alpha} \Delta \frac{c_{\max}}{c_{\min}} \log n\right)$-approximation for {\ufpmax} with NBA, when profit of a request is equal to its demand. Without NBA, under the same assumption they gave an $O(\sqrt{m})$-approximation algorithm. Azar and Regev \cite{AzarR06} gave a combinatorial $O(\sqrt{m})$-approximation algorithm with NBA. Chakrabarti et al. \cite{ChakrabartiCGK07} gave an LP-based $O\left(\inv{\alpha} \Delta \log n\right)$-approximation with uniform edge capacities and an $O\left(\inv{\alpha} \Delta \log^2 n\right)$-approximation with arbitrary edge capacities. Azar and Regev \cite{AzarR06} showed that for directed graphs, there is no algorithm for {\ufpmax} with approximation factor $\Omega(m^{1 - \epsilon})$ for any $\epsilon > 0$, unless $\mathbf{P = NP}$.
\fi

The {\ufpbag} problem was introduced by Chakaravarthy et al. \cite{ChakaravarthyPSS10}, who gave an $O\left(\log \left(\frac{c_{\max}}{c_{\min}}\right)\right)$-approximation algorithm. Chakaravarthy et al.~\cite{ChakaravarthyCS10} gave the first constant factor approximation algorithm for the {\ufpbag} problem on paths -- the approximation ratio is 120. A related problem is the job interval selection problem for which Chuzhoy et al. \cite{ChuzhoyOR06} gave an $\left(\frac{e}{e-1}\right)$-approximation algorithm. See also Erlebach et al. for some additional results \cite{ErlebachS03}.

The round version of {\ufpbag} is hard to approximate, because scheduling jobs with interval constraints is a special case of this. Recall that here, we have a collection of $n$ jobs where each job is associated with a set of intervals on which it can be scheduled. The goal is to minimize the total number of machines needed to schedule all jobs subject to these interval constraints. In the continuous
version, the intervals associated with a job form a continuous time segment, described by a release date and a deadline. Chuzhoy et al. \cite{ChuzhoyGKN04} gave an $O\left(\sqrt{\frac{\log n}{\log \log n}}\right)$-approximation algorithm for this version. This was subsequently improved by Chuzhoy and Codenotti \cite{ChuzhoyC09} to an $O(1)$-approximation algorithm. They also showed that the linear programming formulation for the problem has an integrality gap of $\Theta\left(\frac{\log n}{\log \log n}\right)$. In the discrete version, where the set of allowed intervals for a job is given explicitly, Raghavan and Thompson \cite{RaghavanT87} gave an $O\left(\frac{\log n}{\log \log n}\right)$-approximation algorithm using randomized rounding. Chuzhoy et al. \cite{ChuzhoyN06} proved that it is $\Omega(\log \log n)$-hard to approximate the discrete version.

\subsection {Our contributions}
There has been lot of recent work on obtaining constant factor approximation algorithms
for these NP-Hard problems. Obtaining constant factor approximation algorithms for these problems without NBA remains a challenging task; the only exception being the recent result of  Bonsma et al.~\cite{BonsmaSW11} which gives a constant factor approximation algorithm for {\ufpmax} on the line.  We will assume that NBA holds in subsequent discussions.

\subsubsection{Linear Programming formulation for Max-UFP}
A natural linear programming formulation for {\ufpmax} on a path is given below. Here $x_i$ denotes the fraction of the demand $i$ that is satisfied and $I_i$ is the unique path between $s_i$ and $t_i$.
\begin{align*}
\textbf{maximize} \quad \; \sum_{i=1}^k w_i x_i \! & \qquad \qquad \textbf{(UFP-LP)} \\
\textbf{such that} \quad \sum_{i:e \in I_i} d_i x_i &\le c_e \qquad \! \forall e \in E \\
0 \le x_i &\le 1 \qquad \; \forall i \in \{1,\ldots,k\}
\end{align*}
If we replace the constraints $x_i \in [0,1]$ by the constraints $x_i \in \{0,1\}$ we get an integer program, which precisely models {\ufpmax}.

\subsubsection{Convex decomposition of a fractional LP solution}
Suppose $x$ is a feasible fractional solution for a maximization LP and $z_1,\ldots,z_k$ are feasible integral solutions for the LP, such that $x = \sum_{i=1}^k \lambda_i z_i$ and $\sum_{i=1}^k \lambda_i = \alpha$. Then the value of the best solution, say $z_{\max}$ among $z_1,\ldots,z_k$ is at least $\frac{1}{\alpha}$ fraction of the value of $x$. We can think of this as approximate convex decomposition of a fractional solution.

\subsubsection {Our results}
Starting with a simple algorithm for {\ufpround} on paths,
 we give a unified framework for these problems. We round natural LP relaxations for {\ufpmax} and {\ufpbag}. The
rounding algorithm essentially shows that one can express a fractional solution to the LP as an approximate convex combination
of integer solutions. We show how to do this using our algorithm for
{\ufpround}. This leads to improved approximation algorithms for several of these problems. More specifically, our results are:
\begin{itemize}
\item We give a $24$-approximation algorithm for the {\ufpround} problem on paths. This is much simpler than the $78$-competitive algorithm of \cite{EpsteinEL09}, and gives an improved approximation ratio.
\item We give a $17$-approximation algorithm for the {\ufpmax} problem on paths. Although a $(2+\epsilon)$-approximation is known for this problem, our approach using convex decompositions may be of independent interest.
\item We give a $65$-approximation algorithm for the {\ufpbag} problem on paths, thus improving the constant approximation factor of $120$ given by Chakaravarthy et al.~\cite{ChakaravarthyCS10}.
\item For trees, we give the first constant factor approximation algorithm for the {\ufpround} problem -- our approximation factor is $64$.
\end{itemize}
These results have appeared in \cite{ElbassioniGGKNP12}.\\

\noindent For the online version of the {\ufpround} problem on paths, we have the following result.
\begin{itemize}
\item We give a $58$-competitive algorithm for the online version of the {\ufpround} problem on paths. This is simpler than the $78$-competitive algorithm of \cite{EpsteinEL09}, and gives a better competitive ratio.
\end{itemize}
This result appears in \cite{KumarPSS12}.

\section {Resource allocation for scheduling jobs}
We consider the problem of allocating resources to schedule jobs.
As before, we are give a path $G$, and a set of jobs.
Each job $j$ is specified by a triplet $(s_j, t_j, d_j)$, where
$[s_j, t_j]$ denotes the interval corresponding to the job (also denoted by $I_j$),
and $d_j$ is its demand requirement. We shall assume that $d_j$ values are 1.
Further, we are also given a set of resources.
Each resource is specified by its starting and ending vertex, and  the capacity it offers and its associated cost.
A feasible solution is a set of resources satisfying the constraint that for any edge,
the sum of the capacities offered by the resources containing this edge is at least the demand required by
the jobs containing that edge, i.e., the selected resources must cover the jobs.
We call this the Resource Allocation problem ({\ResAll}).

The above problem is motivated by applications in cloud and grid computing.
Consider jobs that require a common resource such as network bandwidth or storage.
The resource may be available under different plans; for instance, it is common for network bandwidth to be priced
based on the time of the day to account for the network usage patterns during the day.
The plans may offer different capacities of the resource at different costs.
Moreover, it may be possible to lease multiple units of the resource under some plan by paying a cost
proportional to the number of units.

Bar-Noy et al. \cite{Bar-Noy} presented a $4$-approximation algorithm for the {\ResAll} problem.
We consider two variants of this problem.
The first variant is the partial covering version. In this problem, the input also specifies a number $k$ and
a feasible solution is only required to cover $k$ of the jobs.
The second variant is the prize collecting version wherein each job has a penalty associated with it;
for every job that is not covered by the solution, the solution incurs an additional cost,
equivalent to the penalty corresponding to the job.
These variants are motivated by the concept of service level agreements (SLA),
which stipulate that a large fraction of the client's jobs are to be completed.
We study these variants for the case where the demands of all the jobs are uniform (say $1$ unit)
and a solution is allowed to pick multiple copies of a resource by paying proportional cost.
We now define our problems formally.

\subsection{Problem definition}
We consider the graph $G=(V,E)$ which is a path with vertices numbered $1, 2, \ldots, |V|$ from left to right.
An input instance  consists of a set of  {\em jobs} ${\cal J}$, and a set of {\em resources} $\calI$. The number of jobs is $n$ and the number of resources is $m$.

Each job $j \in {\cal J}$ is specified by an interval $I_j = [s_j,t_j]$ in the path. Recall that each job has
demand requirement of 1.
Each resource $i \in \calI$ is specified
by an interval $I_i=[s(i),e(i)]$ in the path, capacity $w_i$ and cost $c_i$. We shall assume that the capacities $w_i$ are
integers.
We interchangeably refer to the resources as {\em resource intervals}. We shall also refer to the interval $I_j$ (or $I_i$) as
the {\em span} of the job $j$ (or resource $i$).
A typical scenario of such a collection of jobs and resources is shown in \autoref{fig:cc}.
\begin{figure*}[t]
\begin{center}
\fbox{
\includegraphics[scale=0.5]{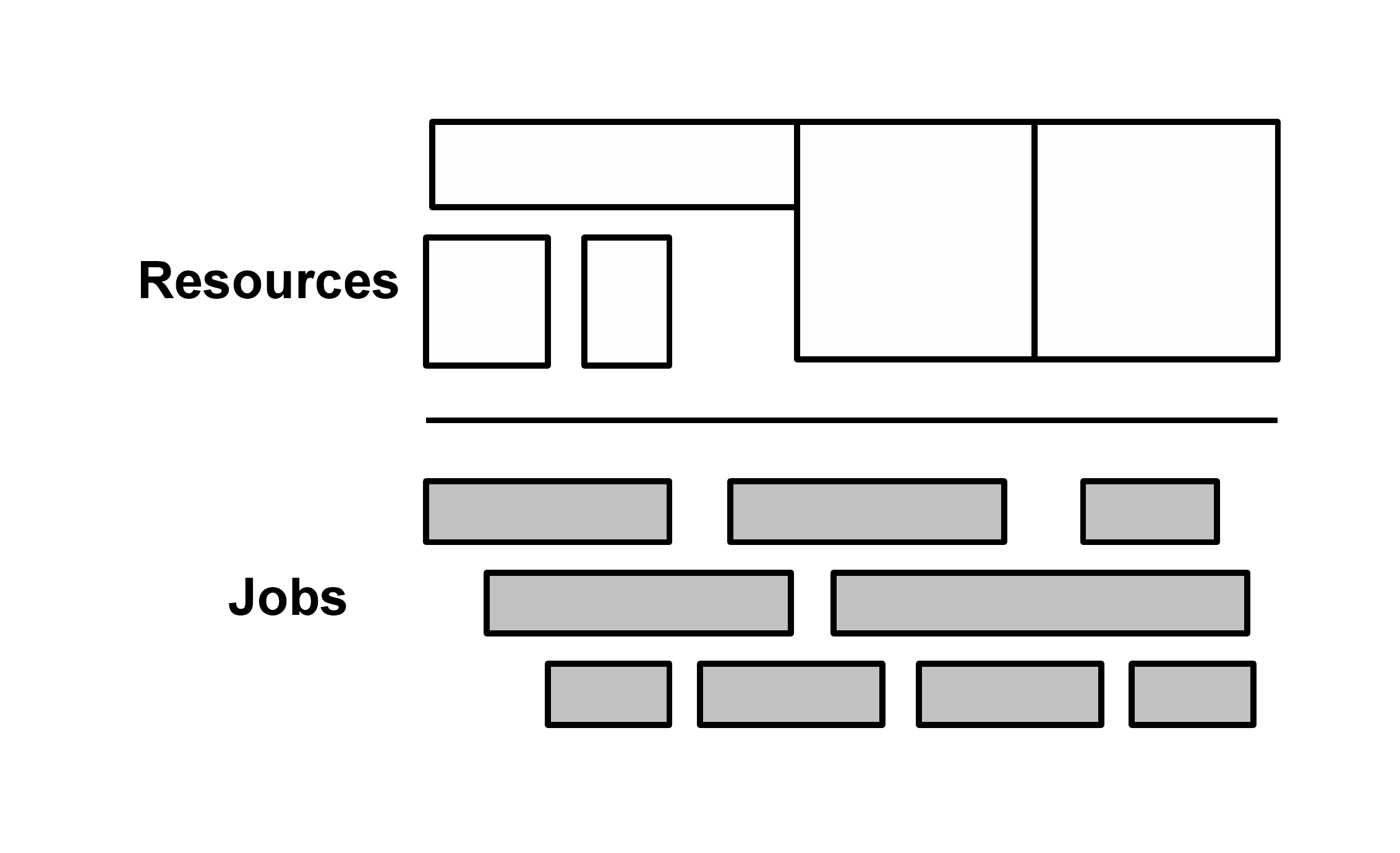}
}
\end{center}
\caption{Illustration of the input}
\label{fig:cc}
\end{figure*}
We say that a job $j$ (or resource $i$) {\em contains} an edge $e$ if the associated interval $I_j$ (or $I_i$) contains $e$; we  denote this as $j \sim e$ ($i \sim e$).
We define a {\em profile} $P: E \rightarrow \mathbb{N}$ to be a mapping that assigns an integer value
to every edge of the path. For two profiles, $P_1$ and $P_2$, $P_1$ is said to {\em cover} $P_2$,
if $P_1(e) \geq P_2(e)$ for all $e \in E$.
Given a set $J$ of jobs, the profile $P_J(\cdot)$ of $J$ is defined to be the mapping determined by
the cumulative demand of the jobs in $J$, i.e. $P_J(e) = |\{ j \in J:j \sim e \}|$.
Similarly, given a multiset $R$ of resources, its profile is: $P_R(e) = \sum_{i \in R:i \sim e} w_i$
(taking copies of a resource into account).
We say that $R$ {\em covers} $J$ if $P_R$ covers $P_J$.
The cost of a multiset of resources $R$ is defined to be the sum of the costs of all the resources
(taking copies into account).

We now formally define  the problems.
\begin{itemize}
\item  {\ResAll}: In this problem, a feasible solution is a multiset of resources $R$, which covers the set of \emph{all} jobs ${\cal J}$. The cost of the solution is the sum of the costs of the resources in $R$ (taking copies into account).	The problem is to find a feasible solution of minimum cost.
\item  {\ZeroOneResAll}: This is similar to the {\ResAll} problem, except that a resource can be used at most once to cover any job.
\item  {\PResAll}: In this problem, the input also specifies a number $k$ (called the {\em partiality parameter})
       that indicates the
	number of jobs to be covered. A feasible solution is a pair $(R,J)$ where $R$ is a multiset of resources
	and $J$ is a set of jobs such that $R$ covers $J$ and $|J| \ge k$. The cost of the solution is the sum of the costs of the resources in $R$ (taking copies into account).
	The problem is to find a feasible solution of minimum cost.
\item  {\PCResAll}: In this problem, every job $j$ also has a penalty $p_j$ associated with it.
	A feasible solution is a pair $(R,J)$ where $R$ is a multiset of resources
	and $J$ is a set of jobs such that $R$ covers $J$.
	The cost of the solution is the sum of the
	costs of the resources in $R$ (taking copies into account) and the penalties of the jobs not in $J$.
	The problem is to find a feasible solution of minimum cost.
\end{itemize}

\begin{figure*}[t]
\begin{center}
\fbox{
\includegraphics[scale=0.4]{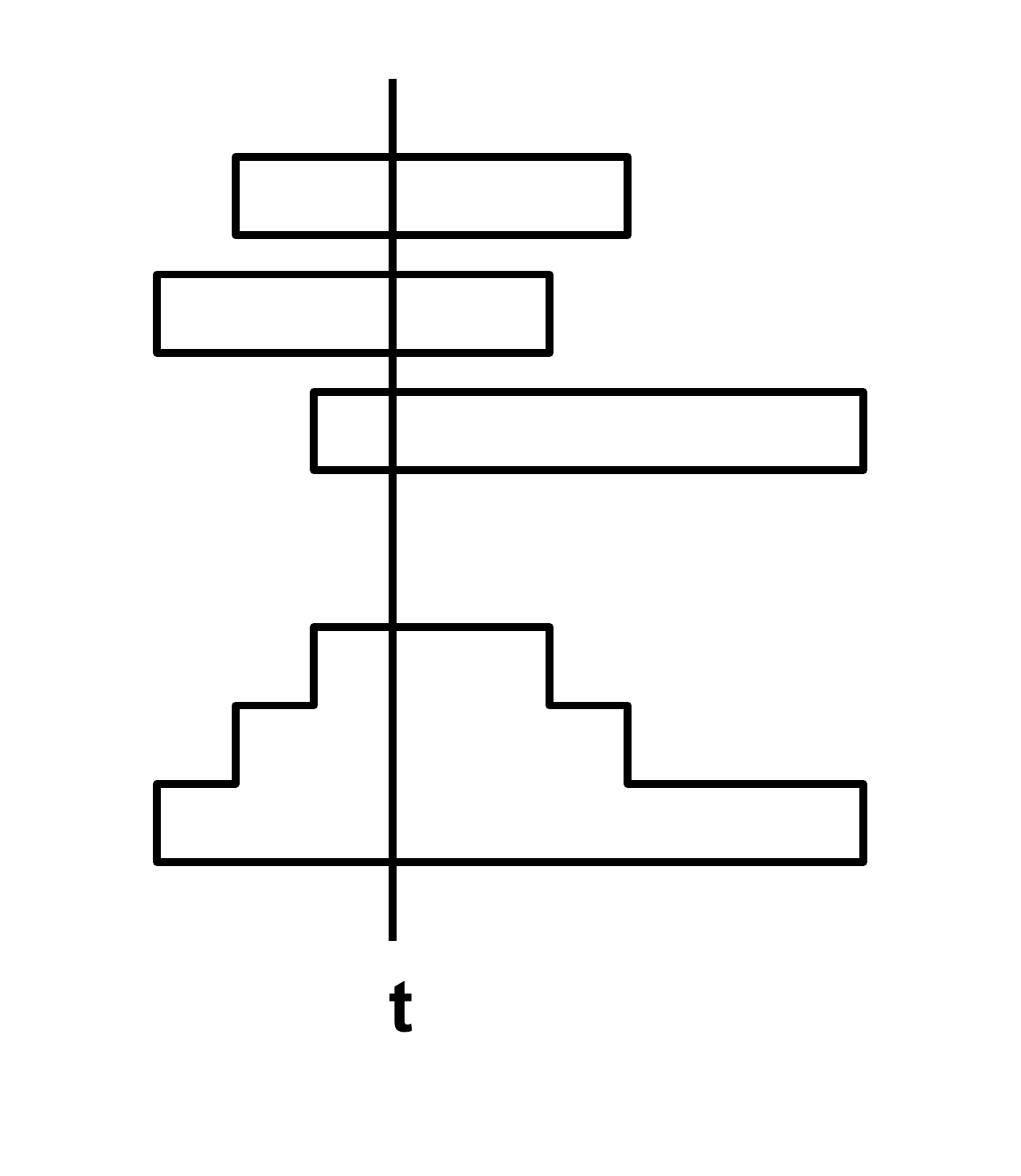}
}
\end{center}
\caption{A Mountain $M$}
\label{fig:aaa}
\end{figure*}

\subsection{Related work}
Our work belongs to the class of {\em partial} covering problems, which are a
natural variant of the corresponding full cover problems. There is a significant body of work
that consider such problems in the literature, for instance, see \cite{Garg05,Bar01,JV01,KPS11,GKS04}.

In the setting where resources and jobs are embodied as intervals, the objective of finding a minimum cost collection of
resources that fulfill the jobs is typically called the {\em full cover} problem. Full cover problems in this context have been dealt with in various earlier works \cite{Bar-Noy,bhatia07,cgk10}. Partial cover problems in the interval context have been considered earlier in \cite{esa2011}.

The work in existing literature that is closest in spirit to our result is that of
Bar-Noy et al. \cite{Bar-Noy}, and Chakaravarthy et al. \cite{esa2011}.
In \cite{Bar-Noy}, the authors consider the full cover version, and present
a $4$-approximation algorithm. In this case, all the jobs have to be covered,
and therefore the demand profile to be covered is fixed. The goal is to find the
minimum cost set of resources, for covering this profile.
In our setting,
we need to cover only $k$ of the jobs.
A solution needs to select $k$ jobs to be covered in such a manner
that  the resources required to cover the resulting demand profile has minimum cost.

In \cite{esa2011}, the authors consider a scenario, wherein the
edges have demands and a solution must satisfy the demand for at least $k$ of the edges(\textsc{PartialMultiResAll}). They give a 16-approximation algorithm for the \textsc{PartialMultiResAll} problem. They also give a 4-approximation algorithm for the {\ZeroOneResAll} problem, where each resource can be used at most once. This is a generalization of the {\ResAll} problem, where each resource can be used any number of times. In contrast, in our setting, a solution needs to satisfy $k$ {\em jobs}, wherein
each job can span multiple edges.
A job may not be completely spanned by any resource, and thus may require
{\em multiple} resource intervals for covering it.

Jain and Vazirani \cite{JV01} provide a general framework for achieving approximation algorithms for partial
covering problems, wherein the prize collecting version is considered. In this framework, under suitable conditions,
a constant factor approximation for the prize collecting version implies a constant factor approximation
for the partial version as well. However, their result applies only when the prize collecting algorithm has a
 certain strong property, called the {\em Lagrangian Multiplier Preserving} (LMP) property.
While we are able to achieve a constant factor approximation for the {\PCResAll} problem,
our algorithm does not have the LMP property. Thus, the Jain-Vazirani framework does not apply
to our scenario.

\subsection {Our contributions}
A collection of jobs $M$ is called a {\em mountain}, if there exists a edge $e$ such that
all the jobs in this collection contain the edge $e$;  (see \autoref{fig:aaa};
jobs are shown on the top and the profile is shown below).
The justification for this linguistic convention is that if we look at the profile of such a
collection of jobs, the profile forms a bimodal sequence, increasing in height until the peak, and then decreasing.
The {\em span} of a mountain is the set of edges which are contained in one of the jobs in the mountain.  
 A collection of jobs $\calM$ is called a {\em mountain range}, if the jobs can be partitioned into
a sequence $M_1, M_2, \ldots, M_r$ such that each $M_i$ is a mountain and the spans of any two mountains
are non-overlapping (see \autoref{fig:bbb}).

We show that the input set of jobs can be
partitioned into a logarithmic number of mountain ranges. Then we give a constant factor approximation algorithm for the special case of the {\PResAll} problem, where the input set of jobs form a single mountain range $\calM$. Using these two results along with dynamic programming, we get an approximation algorithm for the {\PResAll} problem.

We give a approximation factor preserving reduction from the {\PCResAll} problem to a certain full-cover problem and then use the approximation algorithm for that problem to derive an approximation algorithm for the {\PCResAll} problem.

\begin{figure*}[t]
\begin{center}
\fbox{
\includegraphics[scale=0.4]{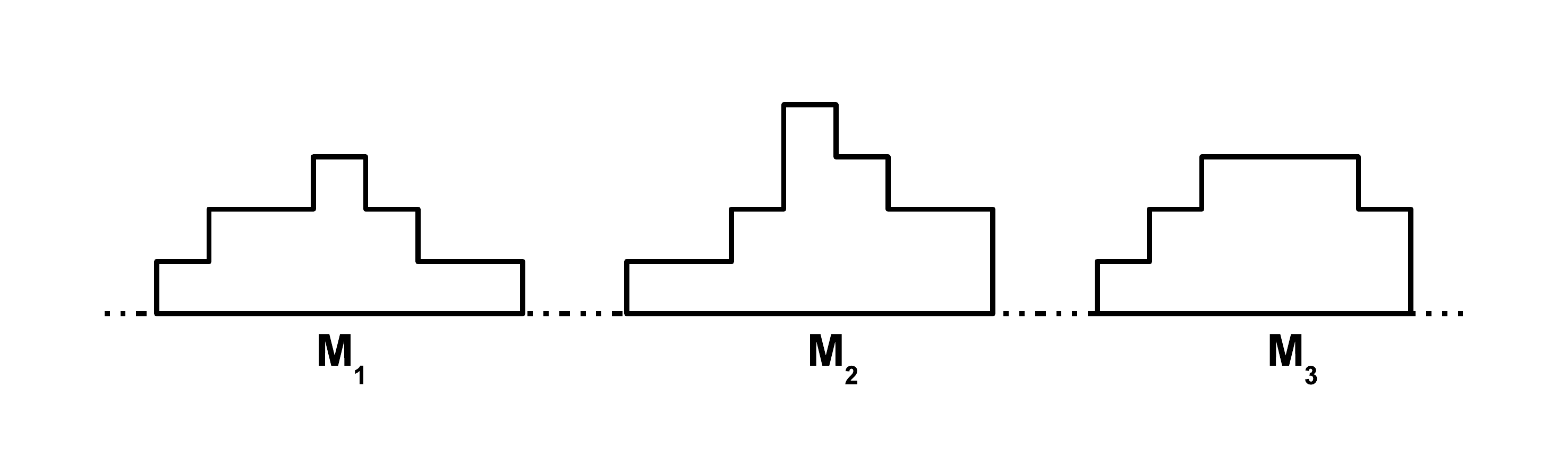}
}
\end{center}
\caption{A Mountain Range ${\cal M}=\{M_1, M_2, M_3\}$}
\label{fig:bbb}
\end{figure*}

\subsubsection {Our results}
\begin{itemize}
\item We present an $O(\log (n+m))$-approximation algorithm for the {\PResAll} problem, where
$n$ is the number of jobs and $m$ is the number of resources respectively.
\item We give a $4$-approximation algorithm for the {\PCResAll} problem, by reducing it to the {\ZeroOneResAll} problem.
\end{itemize}
These results have appeared in \cite{ChakaravarthyPRS12}.

\section {Organization of the thesis}
In Chapter 2, we study the {\ufpround} problem and give constant factor approximation algorithms for this problem on paths and trees. Building on this, we give constant factor approximation algorithms for the {\ufpmax} and the {\ufpbag} problems in Chapter 3. In Chapter 4, we study the online version of the {\ufpround} problem, also known as the {\oic} problem, and give an improved constant factor competitive algorithm. We discuss the {\PResAll} and the {\PCResAll} problems in Chapter 5 and give $O(\log (n+m))$-approximation and $4$-approximation algorithms for these two problems respectively. We conclude the thesis in Chapter 6 and discuss possible future directions on these problems along with some open problems.

\chapter {The Round-UFP Problem}
\label {chap2}

We define the {\ufpround} problem and give a constant factor approximation algorithm under NBA. We also give improved algorithms for some special cases of this problem.

\section{Preliminaries}
\label{sec:pre}
We are given a graph $G=(V,E)$, which is either a path or a tree, with edge capacities 
$c_e$ for all edges $e \in E$. We are also given a set of requests $R_1, \ldots, R_k$. Request $R_i$ has an associated source-sink pair $(s_i, t_i)$ and a demand $d_i$. We shall use $I_i$ to denote the associated unique path between $s_i$ and $t_i$ in $G$. A subset of demands will be called {\em feasible} if they can be routed without violating the edge capacities. The goal is to partition the set of demands into minimum number of colors, such that demands with a particular color are feasible.

\begin{definition}
The \emph{load} on an edge $e$, $l_e = \sum_{i : e \in I_i} d_i$, i.e., the total demand passing through the edge $e$.
\end{definition}

\begin{definition}
The \emph{congestion} of an edge $e$, $r_e = \left\lceil {\frac{l_e}{c_e}} \right\rceil,$ i.e., the ratio of the load on the edge $e$ to its capacity. Let $r = \max_{e \in E} r_e$ be the \emph{maximum congestion} on any edge in the input graph.
\end{definition}

\section{Approximation Algorithms for Round-UFP on Paths}
\subsection{A 3-approximation algorithm for uniform capacities and arbitrary demands}
We consider the special case, where each edge of the path has a capacity $c$. We separate the demands into large and small demands. A demand $d_i$ is called \emph{large} if $d_i > \frac{1}{2}c$. Otherwise, it is called \emph{small}. Let $\opt(L)$ and $\opt(S)$ be the optimum number of colors required for the instance containing only large demands and only small demands respectively. The algorithms for large and small demands are given below.

\subsubsection{An optimal algorithm for large demands}
We maintain several copies of the path, one copy for each color. We fill demands in the copies in an iterative manner. We sort the demands based on their left endpoints. Let $\RR_i$ be the set of requests starting at $v_i, 1 \le i \le n-1$. We will pack the requests in $\RR_1,\ldots,\RR_{n-1}$ in this order. Starting with the requests in  $\RR_1$, we try to allocate the requests in $\RR_i, 1 \le i \le n-1$ in one of the copies of the path, if it does not violate any edge capacities. Otherwise, we allocate a new copy and assign it there.

\begin{lemma}
If $\chi$ is the number of colors required to pack all the large demands, then $\opt(L) \ge \chi$.
\end{lemma}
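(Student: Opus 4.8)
The plan is to show that the greedy algorithm never uses more colors than any feasible coloring, by exhibiting, at the moment a new color is opened, a set of $\chi$ large demands that every solution is forced to color differently.

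First I would record the defining feature of large demands: since every large demand has $d_i > \frac{1}{2}c$, any two large demands that share a common edge have combined demand exceeding $c$ and hence cannot be placed in the same color. Two consequences follow. Within each color produced by the algorithm the large demands are pairwise edge-disjoint, and the statement ``placing $R$ in color $j$ violates a capacity'' is equivalent to ``the interval of $R$ overlaps some request already assigned to color $j$.''

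Next I would analyze the instant the algorithm first opens color $\chi$, say for a request $R \in \RR_i$ with left endpoint $v_i$. Because $R$ could not be placed into any of the colors $1,\ldots,\chi-1$, for each such color $j$ there is a request $R_j$ already in color $j$ whose interval overlaps $R$. Here the sorting by left endpoints is crucial: each $R_j$ was processed before $R$, so its left endpoint is at most $v_i$; since $R_j$ overlaps $R$ and starts no later than $v_i$, the interval $R_j$ must extend past $v_i$ and therefore contain the first edge $(v_i,v_{i+1})$ of $R$. Thus the $\chi$ requests $R, R_1,\ldots,R_{\chi-1}$ all contain this single edge.

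Finally, these $\chi$ large demands pairwise share the edge $(v_i,v_{i+1})$, so by the first observation no two of them can lie in the same color in \emph{any} feasible coloring. Hence every solution, and in particular the optimal one, needs at least $\chi$ colors, giving $\opt(L) \ge \chi$. The only points requiring care are the equivalence between overlap and capacity violation and the claim that an overlapping earlier request must cover the first edge of $R$; both follow directly from the largeness of the demands together with the left-to-right processing order, so I do not expect a serious obstacle beyond stating these two observations precisely.
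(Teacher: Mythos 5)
Your proposal is correct and follows essentially the same argument as the paper: you consider the request that forces the opening of color $\chi$, find $\chi-1$ blocking large demands (one per earlier color), and use the left-to-right processing order to conclude that all $\chi$ demands pass through the first edge of that request, so any feasible coloring needs $\chi$ colors. Your two explicitly stated observations (overlap being equivalent to a capacity violation for large demands, and an earlier overlapping interval necessarily covering the first edge) are exactly the facts the paper's proof uses implicitly, so there is no gap.
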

\begin{proof}
Note that if two large demands share any edge, they can't be given the same color, because the total load on the edge is more than $c$. Consider the demand $d$ for which the last color $\chi$ was opened. Since $d$ could not be assigned any one of the first $\chi-1$ colors, there are $\chi-1$ large demands, one for each color, which shared an edge with $d$. Since the demands have been considered in a left to right manner, all these $\chi-1$ large demands will pass through the first edge $e$ of $d$. Together with $d$, there are $\chi$ large demands passing through the edge $e$. Hence, the optimum has to give each of them a separate color, so it will also require at least $\chi$ colors. Hence, this algorithm uses the minimum number of colors.
\end{proof}

\begin{figure}[ht]
\centering
\includegraphics[scale=0.7]{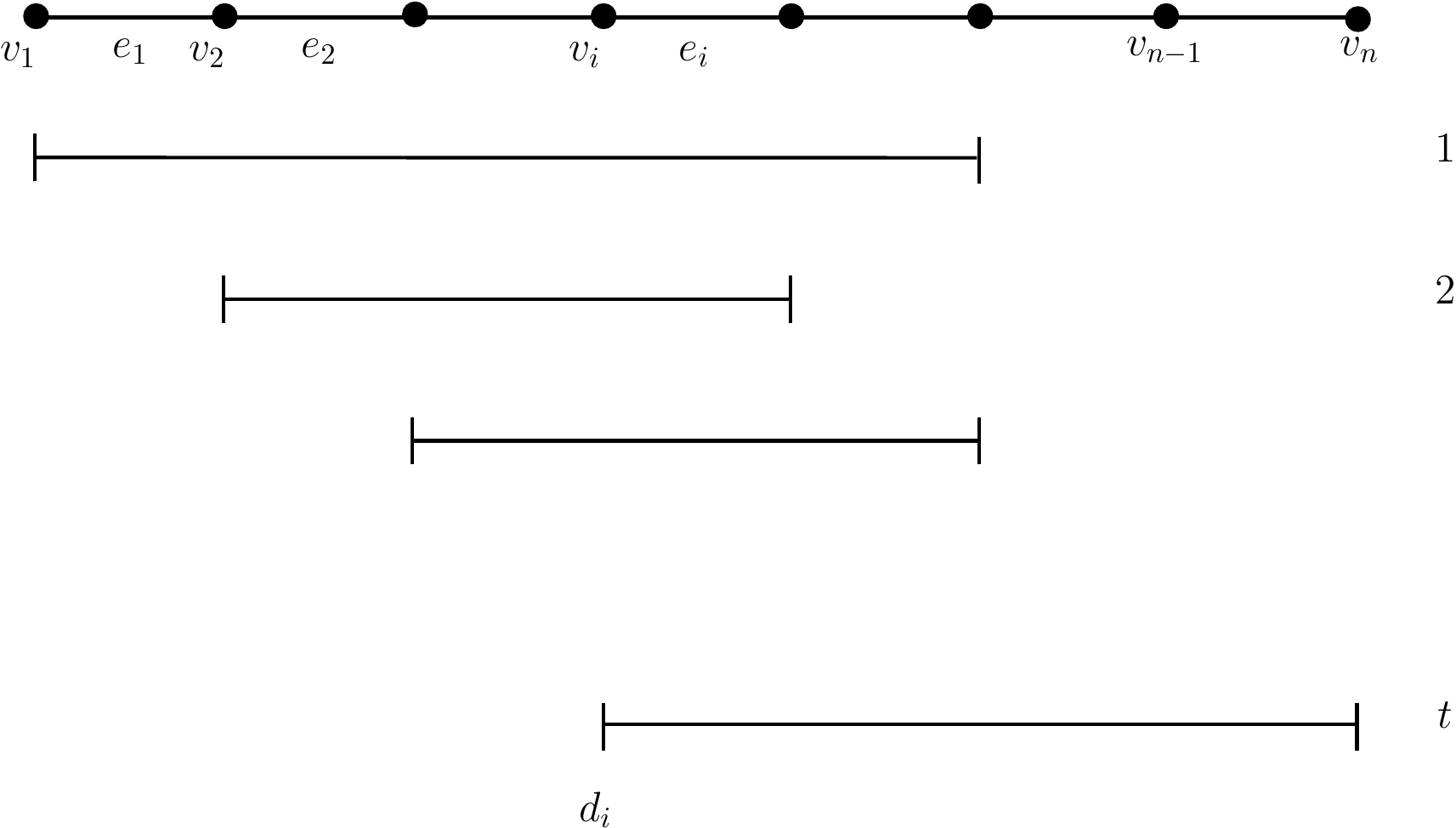}
\caption{Illustration for the analysis of small demands.}
\label{uniform-capacities}
\end{figure}

\subsubsection{A 2-approximation algorithm for small demands}
The algorithm for small demands is exactly the same as the previous algorithm for large demands.

Let $t$ be the number of copies of the path $P$ required to assign all the requests in $\RR_1,\ldots,\RR_{n-1}$. Let $l_i$ be the load on edge $e_i$, which is the sum of all demands passing through $e_i$.
\begin{lemma}
\label{lem:sml}
When all the requests in $\RR_1,\ldots,\RR_{n-1}$ have been packed, there is an edge $e_i$ such that  in at least $t-1$ copies of $P$, $l_i > \frac{1}{2}c$.
\end{lemma}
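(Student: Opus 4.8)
The plan is to focus on the request that triggers the opening of the last copy and to exploit the left-to-right order in which the algorithm processes requests. Let $R$ be the request whose insertion forced the algorithm to open the $t$-th copy, and let $e_i$ be the leftmost edge of its span $I_R$ (the first edge of $R$). First I would argue that, because $R$ could not be placed in any of the copies $1,\ldots,t-1$, each such copy already contains, somewhere on $I_R$, a nearly saturated edge. Concretely, for each copy $j \le t-1$ there is an edge $e'_j \in I_R$ whose load in copy $j$ together with $d_R$ exceeds $c$; since $R$ is a small demand, $d_R \le \frac{1}{2}c$, and hence the load on $e'_j$ in copy $j$ strictly exceeds $c - d_R \ge \frac{1}{2}c$.

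The crux is to transfer this near-saturation from the possibly different edges $e'_j$ to the single fixed edge $e_i$. Here I would use the fact that the algorithm considers requests in nondecreasing order of left endpoint, so every request placed before $R$ has its left endpoint at or to the left of the left endpoint $s_R$ of $R$. Consequently, if such a request passes through some edge $e'_j \in I_R$, then, being an interval whose left endpoint lies weakly to the left of $s_R$ and which reaches at least as far right as $e'_j$, it must also contain the leftmost edge $e_i$ of $I_R$. Thus every demand contributing to the load of $e'_j$ in copy $j$ also contributes to the load of $e_i$ in copy $j$, so the load on $e_i$ in copy $j$ is at least the load on $e'_j$, which is $> \frac{1}{2}c$.

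Combining the two steps, at the moment $R$ is processed the edge $e_i$ carries load exceeding $\frac{1}{2}c$ in each of the copies $1,\ldots,t-1$. Finally I would observe that the algorithm only ever adds requests to a copy, so loads are monotonically nondecreasing; hence these inequalities persist once all requests in $\RR_1,\ldots,\RR_{n-1}$ have been packed. Taking $e_i$ as the desired edge yields the claim that in at least $t-1$ copies of $P$ the load on $e_i$ exceeds $\frac{1}{2}c$. The main obstacle is the middle step: one must carefully combine the interval structure with the left-to-right processing order to replace the various blocking edges $e'_j$ by the common first edge $e_i$; the rest is bookkeeping about monotonicity of loads.
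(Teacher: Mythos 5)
Your proposal is correct and follows essentially the same route as the paper's proof: it pinpoints the request that opened the $t$-th copy, deduces a blocking edge in each earlier copy with load exceeding $c - d_R \ge \frac{1}{2}c$, and uses the left-to-right processing order to transfer that load bound to the first edge $e_i$ of the request. Your write-up merely makes explicit two points the paper leaves implicit (why earlier requests through a blocking edge must also cross $e_i$, and why the bound persists by monotonicity of loads), which is a faithful elaboration rather than a different argument.
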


\begin{proof}
Consider the demand $d_i \in \RR_i$ (for some $i$) due to which the last color $t$ was opened. At the time $d_i$ was considered, all the requests started on or before $v_i$. Since $d_i$ could not be assigned any of the previous $t-1$ colors, there are $t-1$ edges, one for each color, such that the total load put by the existing small demands on each of these edges is strictly more than $c - d_i \ge \frac{1}{2}c$ since, $d_i \le \frac{1}{2}c$. Since the demands have been considered in a left to right manner, the load on the first edge $e_i$ of $d_i$ on each of these $t-1$ colors is at least as much. Hence, $e_i$ is the edge such that $l_i > \frac{1}{2}c$.
\end{proof}

It follows from \autoref{lem:sml} that the total load put by requests in $\RR_1,\ldots,\RR_{n-1}$ on $e_i$ is greater than $\frac{1}{2}c(t-1)$. Hence, the congestion on edge $e_i$ is more than $\frac{1}{2}(t-1)$, since the edge capacity is $c$. Thus, $r \ge r_{e_i} > \frac{1}{2}(t-1)$. Hence, $t < 2r + 1$, which implies that $t \le 2r$, since $t$ is an integer. Since, we can assign all the requests in $\RR_1,\ldots,\RR_{n-1}$ using $t \le 2r \le 2 \cdot \textsc{opt(S)}$ copies, this is a 2-approximation algorithm.

\subsubsection{A 3-approximation algorithm}
We solve the instance containing only large demands and the instance containing only small demands separately. We have, $\alg(L) = \opt(L)$ and $\alg(S) \le 2 \cdot \opt(S)$. Moreover, $\opt \ge \max\{\opt(L),\opt(S)\}$. Hence the total number of colors required by the algorithm is
\begin{align*}
\alg &= \alg(L) + \alg(S)\\
&\le \opt(L) + 2 \cdot \opt(S)\\
&\le 3 \cdot \opt.
\end{align*}

\subsubsection{Number of colors in terms of the congestion bound $r$}
For large demands, the total load on the edge $e$, $l_e > \frac{1}{2}c \cdot \chi$. On the other hand, $l_e \le rc$. Hence, $rc > \frac{1}{2}c \cdot \chi$ and so $\alg(L) = \chi < 2r$, which implies that $\chi \le 2r-1$, since $\chi$ is an integer. For small demands, $\alg(L) \le 2r$. Hence, $\alg \le 2r-1 + 2r = 4r-1$.

\subsubsection{Running time}
Since we are only sorting the demands based on their left endpoints and maintaining a set of copies, the running time of the algorithm is polynomial.

\subsection{A 24-approximation algorithm for arbitrary capacities and arbitrary demands}
\label{sec:round}
We consider an instance $\I$ of the {\ufpround} problem given by a path $G$ on $n$ points, 
and a set of requests $R_1, \ldots, R_m$. 
Let $\opt$ denote an optimal solution, and $\col(\opt)$ denote the
number of colors used by $\opt$. We begin with a few definitions.

\begin{definition}
The {\em bottleneck capacity} $b_i$ of a request $R_i$ is the smallest capacity of an edge in the interval between $s_i$ and $t_i$ -- such an edge is called the {\em bottleneck edge}
for request $R_i$. A demand $d_i$ is said to be {\em small} if $d_i \leq \frac{1}{4}b_i$, else it is a {\em large} demand. More generally, a demand $d_i$ is said to be {\em $\delta$-small} if $d_i \leq \delta b_i$. Otherwise, it is a {\em $\delta$-large} demand.
\end{definition}

Clearly, $\col(\opt) \ge r$. 
We give an algorithm $\cal A$ which
uses $O(r)$ colors. This will give a constant factor approximation algorithm for this problem. We first
consider the case of large demands. We will use the following result of Nomikos et al.~\cite{NomikosPZ01}.

\begin{lemma} \label{lem:nomikos}
Consider an instance of {\ufpround} where all capacities are integers and all demands $D_i$ have bandwidth requirement
$d_i = 1$. Then, one can color these demands with $r$ colors. 
\end{lemma}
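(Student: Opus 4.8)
The plan is to prove the statement by induction on the maximum congestion $r$, peeling off one feasible color class at a time so that removing it drops the congestion from $r$ to $r-1$. Since each request is an interval on the path and all demands equal $1$, a color class is feasible exactly when it contains at most $c_e$ intervals through each edge $e$. So the goal of one induction step is to find a set $F$ of intervals that is itself feasible (at most $c_e$ through each $e$) and, simultaneously, is large enough on every edge that the leftover instance has congestion at most $r-1$. Concretely, writing $l_e$ for the load (the number of intervals through $e$), I want $F$ to satisfy, for every edge $e$, the two-sided bound $\max(0,\, l_e - (r-1)c_e) \le |\{i \in F : e \in I_i\}| \le c_e$. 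The upper bound makes $F$ a feasible color; the lower bound guarantees that after deleting $F$ the residual load on each $e$ is at most $(r-1)c_e$, hence residual congestion at most $r-1$.

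The existence of such an $F$ is where the interval structure does the work. First I would check that the system is fractionally feasible: setting $x_i = 1/r$ for every interval gives edge-value $l_e/r$, which is at most $c_e$ (because $r \ge l_e/c_e$) and at least $l_e - (r-1)c_e$ (the same inequality rearranged), so this fractional point lies in the polytope. Second, the constraint matrix---rows indexed by edges, columns by intervals, with a $1$ where the interval contains the edge---has the consecutive-ones property in each column, since an interval occupies a contiguous block of edges; hence it is totally unimodular. Two-sided bounds $\max(0,\, l_e-(r-1)c_e) \le Ax \le c_e$ together with the box $0 \le x \le 1$ and integral right-hand sides then define an integral polytope; since it is nonempty, it contains an integer point, which is the characteristic vector of the desired $F$.

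With $F$ in hand the induction closes immediately: $F$ is one feasible color, the remaining intervals form an instance of congestion at most $r-1$, which by the inductive hypothesis can be colored with $r-1$ colors, for a total of $r$. The base case $r=0$ (no intervals) needs no colors, and $r=1$ (every $l_e \le c_e$) is handled by taking all intervals as a single class.

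I expect the main obstacle to be the integrality justification rather than the arithmetic. A naive left-to-right first-fit does not obviously work here: when capacities vary along the path, a color that still has room at the first edge of a new interval may already be saturated at a later, lower-capacity edge, so greedily committing to it can fail. This is exactly why I route the argument through total unimodularity (equivalently, integral flow), which handles the simultaneous per-edge lower bounds \emph{globally} rather than edge-by-edge; the remaining care is just to confirm that imposing both the lower and upper bounds along with the box constraints preserves the integrality of the polytope defined by the interval matrix.
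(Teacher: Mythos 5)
Your proposal is correct, but note that the thesis itself never proves this lemma: it is imported as a black box from Nomikos et al.~\cite{NomikosPZ01}, so there is no in-paper argument to match, and what you have written is a genuine self-contained proof rather than a reconstruction. Your argument is sound at every step: the peeling invariant $\max(0,\, l_e-(r-1)c_e) \le |\{i\in F: e\in I_i\}| \le c_e$ is exactly what is needed (the upper bound makes $F$ a feasible color, the lower bound drops every edge's residual load to at most $(r-1)c_e$); the point $x_i = 1/r$ satisfies both sides because each reduces to $l_e \le r c_e$, which holds since $r \ge \lceil l_e/c_e \rceil$; and the integrality claim is legitimate, since a consecutive-ones-in-columns (interval) matrix is totally unimodular, stacking $A$, $-A$ and the identity box constraints preserves total unimodularity, and all bounds are integral, so Hoffman--Kruskal yields an integer point in the nonempty polytope. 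Your side remark is also well taken: a naive left-to-right first-fit can indeed commit to a color that is saturated at a later low-capacity edge, which is precisely why the per-edge lower bounds must be enforced globally. Compared with invoking the citation, your route buys a transparent, inductive proof that is constructive in polynomial time (solve one LP over a TU system, or equivalently an integral flow, per peeled color), which is consistent with how the thesis uses the lemma inside its polynomial-time $8r$-coloring of large demands; the mild costs are that you peel $r$ color classes, so one should observe that $r$ is polynomially bounded here (it is, being at most the number of requests), and that the argument leans on polyhedral machinery where the cited original is combinatorial.
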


\begin{lemma}
\label{lem:large}
We can color all large demands with at most $8r$ colors.
\end{lemma}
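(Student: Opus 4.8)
The plan is to reduce the large demands to structured \emph{unit-demand} instances, to which \autoref{lem:nomikos} applies, and then to overlay the resulting colorings onto a single palette of $8r$ colors. First I would bucket the large requests by their bottleneck capacity into geometric classes $C_j = \{R_i : 2^j \le b_i < 2^{j+1}\}$. The point of this bucketing is twofold: every request in $C_j$ traverses only edges of capacity at least $2^j$ (since $b_i$ is the minimum capacity on $I_i$), and within $C_j$ every demand satisfies $2^{j-2} < \tfrac14 b_i < d_i \le b_i < 2^{j+1}$, so all demands in a class agree up to a constant factor with $2^j$. This is exactly the regime in which a large demand behaves like a unit demand: dividing the capacities of the relevant edges by (a power of two close to) $2^j$ turns $C_j$ into an integer-capacity, unit-demand instance.

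For a single class I would round demands to the common scale and round edge capacities down to the nearest multiple of that scale, so that (i) no single request overflows its bottleneck edge and (ii) feasibility of the rounded unit-demand instance transfers back to the original demands. Applying \autoref{lem:nomikos} then colors this unit instance with a number of colors equal to its congestion. The constant $8$ should emerge from multiplying the losses at the rounding steps: the factor $4$ built into the ``large'' threshold $d_i > \tfrac14 b_i$ together with the factor arising from the geometric width of a class (demands range over $(2^{j-2}, 2^{j+1})$, a factor-$8$ window) controls how much a rounded color can overload an original edge, and hence how the per-class congestion compares with $r$. The key quantitative input is a per-edge accounting: at any edge $e$ the requests of class $C_j$ through $e$ each have demand exceeding $2^{j-2}$, so their count times $2^{j-2}$ is at most the true load $l_e \le r c_e$; summing the resulting per-class congestions over all classes present at $e$ telescopes geometrically and stays bounded by $8r$. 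Crucially this bounds the \emph{sum over classes at a fixed edge}, not the maximum over edges of a single class.

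The hard part will be turning this per-edge bound into a single coloring that uses $8r$ colors. Coloring each class separately and giving the classes disjoint palettes is fatal, because the number of classes can be as large as $\log_2(c_{\max}/c_{\min})$ and each could cost up to $8r$ colors. Instead I must \emph{overlay} the per-class colorings so that the total number of colors is governed by the maximum over edges of the combined congestion (bounded by $8r$ above), rather than by the sum of the per-class color counts. I expect this overlaying to be the main obstacle, and I would handle it by exploiting the power-of-two, nested (``buddy'') structure produced by the rounding: because each demand equals the minimum capacity on its path and all quantities are powers of two, the capacity a color offers on each edge decomposes into aligned blocks that higher and lower classes use consistently, so a single pass of the Nomikos-type coloring packs all classes at once into one shared set of colors.

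Finally, since each color remains feasible for the original (unrounded) demands — rounding only increased demands and decreased capacities — and since $\col(\opt) \ge r$, this yields a coloring of all large demands with at most $8r$ colors, as claimed.
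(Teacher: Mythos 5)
Your proposal has a genuine gap, and it stems from never invoking the \nba. Under NBA every demand satisfies $d_i \le c_{\min}$, and a large demand additionally satisfies $d_i > \frac{1}{4} b_i \ge \frac{1}{4} c_{\min}$; consequently $b_i < 4 c_{\min}$ and, after scaling so that $c_{\min} = 1$, \emph{all} large demands lie in the single window $\left(\frac{1}{4}, 1\right]$. There is no cascade of geometric classes $C_j$ to manage: at most a constant number of your buckets are nonempty, and the paper's proof dispenses with them entirely. It simply rounds every capacity down to the nearest integer (a loss of at most $2$, since $c_e \ge 1$) and every large demand up to $1$ (a loss of less than $4$, since $d_i > \frac{1}{4}$), so the congestion grows by a factor of at most $8$; it then applies \autoref{lem:nomikos} once to the resulting unit-demand, integer-capacity instance, and feasibility transfers back because demands only increased and capacities only decreased.

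The step you yourself flag as the main obstacle --- overlaying the per-class colorings into one palette whose size is governed by the \emph{combined} congestion --- is exactly the unproved crux of your argument. Your per-edge accounting does show that the combined rounded load at any edge is at most $8 l_e \le 8 r c_e$, but \autoref{lem:nomikos} applies only to an instance in which every demand equals $1$; it says nothing about simultaneously packing demands at several power-of-two scales, and the ``buddy/aligned blocks'' claim that a single Nomikos-type pass handles all classes at once is asserted, not proved. Coloring demands at multiple scales against a congestion bound is essentially the general {\ufpround} problem, which the paper must attack separately (the small-demand case, Lemma~\ref{lem:small}, needs $16r$ colors and a genuinely different critical-edge argument), so your overlay claim, if it were true, would be a stronger theorem than the one being proved. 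A further warning sign that a proof must use NBA somewhere, whereas yours nowhere does: without NBA the paper exhibits instances where $\opt/r = \Omega(n)$, so no argument based solely on the congestion $r$ can succeed in that regime.
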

\begin{proof}
We first scale all capacities and demands such that the minimum capacity $c_{\min}$ becomes 1. Now, we round all
capacities down to the nearest integer, and we increase all the demands $d_i$ to 1. Note that 
this will affect the congestion of an edge $e$ by a factor of at most 8. Since $c_e \ge c_{\min} = 1$, rounding $c_e$ down to the nearest integer will reduce it by a factor of at most $2$ (which will happen for a real number less than but arbitrarily close to 2). Since all demands are of size at least 
$\frac{1}{4}$ (because they are large demands, so $d_i > \frac{1}{4}b_i \ge \frac{1}{4}c_{\min} = \frac{1}{4}$), we may increase the requirement of a demand by a factor of at most 4. 
Thus, the value of $r$ will increase by a factor of at most 8. Now, we invoke the result in Lemma~\ref{lem:nomikos}.
This proves the lemma. 
\end{proof}

We now consider the more non-trivial case of small demands. We divide the edges into classes based on their
capacities.
We say that an edge $e$ is of {\em class}
$l$ if $2^l \le c_e < 2^{l+1}$. We use $\cl(e)$ to denote the class of $e$.
 For a demand $D_j$, let $l_j$ be the smallest class such that the interval
$I_j$ contains an edge of class $l_j$. The {\em critical edge} of demand $D_j$ is defined as the first edge
(as we go from left to right from $s_j$ to $t_j$) in $I_j$ of class $l_j$. Note that the critical edge could be different
from the bottleneck edge, though both of them would be of class $l_j$.
\begin{lemma}
\label{lem:small}
The small demands can be colored with at most $16 r$ colors.
\end{lemma}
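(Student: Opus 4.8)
The plan is to reuse the idea behind \autoref{lem:large} (round/inflate so that \autoref{lem:nomikos} or a greedy first-fit can be applied) but to drive the first-fit by the \emph{critical edge} rather than the bottleneck edge. The point of the critical edge is that its capacity is tightly pinned: since the bottleneck edge and the critical edge both lie in class $l_j$, we have $2^{l_j}\le b_j\le c_{e^*}<2^{l_j+1}$, where $e^*$ denotes the critical edge of $D_j$. Hence on \emph{every} edge $e$ of the span $I_j$ we get the uniform smallness estimate $d_j\le \tfrac14 b_j\le \tfrac14 c_e$ (because $b_j=\min_{e'\in I_j}c_{e'}\le c_e$), while $c_e\ge 2^{l_j}>\tfrac12 c_{e^*}$. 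My plan is to split each small demand at its critical edge into a \emph{left part} and a \emph{right part}, color each family with at most $8r$ colors, and combine, giving $16r$.

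First I would treat the right parts. Process them by first-fit into copies of the path, in increasing order of the position of the critical edge (equivalently, of the left endpoint of the right part). Suppose right part $R_j$ opens a new copy $t$. For each of the earlier $t-1$ copies there is an edge $e\in R_j$, lying weakly to the right of $e^*$, whose current load plus $d_j$ exceeds $c_e$, so that copy carries load exceeding $c_e-d_j\ge c_e-\tfrac14 c_e=\tfrac34 c_e$ at $e$. The key structural observation is that every right part $R_i$ contributing to this load was processed earlier, so its critical edge lies weakly left of $e^*$; since $R_i$ also contains $e$ (weakly right of $e^*$) and is an interval, $R_i$ contains $e^*$. Therefore each of these $t-1$ copies carries load at least $\tfrac34 c_e>\tfrac34\cdot\tfrac12 c_{e^*}=\tfrac38 c_{e^*}$ at the \emph{single common edge} $e^*$. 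Summing, the total load at $e^*$ exceeds $\tfrac38 c_{e^*}(t-1)$, so $r\ge r_{e^*}>\tfrac38(t-1)$, giving $t-1<\tfrac83 r$ and hence at most $8r$ colors for the right parts.

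The left parts are handled by the mirror image of this argument: process them in decreasing order of the critical-edge position (right-to-left), and the same localization at $e^*$ yields at most $8r$ colors. Combining the two bounds is what produces the target $16r$. Concretely, I would run a \emph{single} first-fit on the whole (unsplit) small demands, ordered by critical-edge position, and classify each blocking copy according to whether its witnessing conflict edge lies weakly to the right of $e^*$ (a \emph{right-conflict}) or strictly to its left (a \emph{left-conflict}); right-conflicts are bounded by $8r$ exactly as above, and left-conflicts are bounded by $8r$ by the mirrored estimate, for $16r$ in total.

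The main obstacle is precisely this combination step. One cannot color the left and right parts independently and then merge the two palettes: taking colors to be pairs squares the count, and assigning disjoint palettes to the two sides can leave an edge loaded up to $2c_e$ within a color class, so neither directly yields a feasible coloring of the \emph{whole} demands (each demand must receive one color). Thus the left- and right-sided analyses must be forced to live inside one first-fit run, and the delicate part is the left-conflict case: a left-conflicting blocker need not reach $e^*$, so the clean "common critical edge'' localization that works on the right does not transfer verbatim, and one must instead localize such conflicts using the critical edges of the blockers together with the class structure (capacities comparable up to a factor $2$) to recover the $8r$ bound. Making this left-side charging rigorous, so that the two halves genuinely compose into a single feasible coloring without a multiplicative blow-up, is where I expect the real work to lie.
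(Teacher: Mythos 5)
Your right-side analysis is sound, but the proposal has a genuine gap exactly where you flag it, and it is not a routine technicality: the left-conflict case of the single first-fit pass cannot be handled by ``the mirrored estimate.'' In a single pass ordered by increasing critical-edge position, a copy that rejects $D_j$ because of a witness edge $e$ strictly left of its critical edge $e^*$ is blocked by demands that contain $e$ but need not contain $e^*$, nor any other common edge: different copies have different witness edges inside $I_j$, and the blockers of one copy may lie entirely to the left of another copy's witness. So the loads you accumulate live at \emph{distinct} edges, and per-edge congestion bounds at distinct edges do not add --- you get a load exceeding $\frac{3}{4}c_{e_k}$ at one edge $e_k$ per copy, which bounds nothing about $r$. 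The genuine mirror argument requires processing in \emph{decreasing} critical-edge order, incompatible with the increasing order your right-conflict bound needs; since each demand must receive a single color, the two passes cannot coexist, which is exactly the combination obstacle you yourself identified. Splitting demands at the critical edge fails for the same reason: {\ufpround} colors whole demands, so a feasible coloring of the split instance does not induce one of the original instance.

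The paper's proof avoids first-fit entirely, and this is the missing idea. It maintains $K=16r$ parallel solutions and assigns each demand (processed in order of \emph{left endpoints}, not critical-edge position) to any solution in which the current load on the demand's own critical edge $e_j$ is at most $c_{e_j}/16$; such a solution exists by averaging against the congestion bound $r$. Feasibility is then proved \emph{a posteriori} at an arbitrary edge $e$ of class $i$, by splitting the contributors in a fixed solution into two groups. Those whose critical edge lies weakly left of $e$ all pass through the critical edge $e'$ of the last such demand $D_u$ (here the left-endpoint ordering is used), so they contribute at most $c_{e'}/16 + d_u \le \frac{2^{i+1}}{16} + \frac{c_e}{4} = \frac{3c_e}{8}$. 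Those whose critical edge lies strictly right of $e$ have critical edge equal to $e^{(i')}$, the \emph{first} edge of class $i'<i$ to the right of $e$ --- at most one localization edge per class --- and the thresholds $c_{e^{(i')}}/16$ plus one small demand each sum geometrically to at most $\frac{5\cdot 2^{i+1}}{16} \le \frac{5c_e}{8}$. This geometric summation over capacity classes is precisely what tames the ``many distinct critical edges'' situation that defeats your left-conflict charging; your sketch gestures at the class structure but never produces a per-class localization with summable capacities. Note also that the load-threshold rule ($c_{e_j}/16$ headroom at the critical edge) is much stronger than first-fit feasibility, and the a posteriori argument depends on it: replacing it by ``the demand fits in this copy'' (load at most $c_e - d_j$ at every edge) leaves too little slack for the charging to close.
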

\begin{proof}
We maintain $16 r$ different solutions to the instance $\I$, where a solution  routes a subset of
the demands. We will be done if we can assign each demand to one of these solutions. Let us call these
solutions $\S_1, \ldots, \S_K$, where $K = 16 r$.
 We first describe
the routing algorithm and then show that it has the desired properties.

We arrange the demands
in order of their left end-points -- let this ordering be $D_1, \ldots, D_m$. Let $e_j$ be the critical edge
of $D_j$. When we consider $D_j$, we send it to a solution $\S_l$ for which the total requirements of
demands containing $e_j$ is at most $c_{e_j}/16$. At least one such solution must exist, otherwise $r_e >
\frac{16 r \cdot c_{e_j}/16 }{c_{e_j}} = r, $ a contradiction. This completes the description of
how we assign each demand to one of the solutions. We now prove that each of the solutions $\S_l$ is feasible.

Fix a solution $\S_l$ and an edge $e$.
Suppose $e$ is of class $i$.  Let $\D(\S_l)$ be the demands routed in $\S_l$ which contain the edge $e$.
Among such demands, let $D_u$ be the last demand for which the critical edge is to the left of $e$ (including
$e$) -- let
$e'$ be such an edge. Clearly, $\cl(e') \geq i$. For an integer $i'  \leq i$, let $e^{(i')}$ be the first
edge of class $i'$ to the right of $e$ (so, $e^{(i)}$ is same as $e$).

First consider the demands in $\D(\S_l)$ which are considered before (and including $D_u$). All of these demands
go through $e'$ (because all such demands begin before $D_u$ does and contain $e$). So, the total requirement
of such demands, excluding $D_u$, is at most $c_{e'}/16$ -- otherwise we would not have assigned $D_u$ to
this solution. Because $D_u$ is a small demand and $\cl(e') \geq i$, the total requirements of such demands (including $D_u$) is at most
$$\frac{2^{i+1}}{16} + \frac{c_e}{4} \le \frac{c_e}{8} + \frac{c_e}{4} = \frac{3c_e}{8}.$$

Now consider the demands in $\D(\S_l)$ whose critical edges are to the right of $e$ -- note that, such an
edge must be one of $e^{(i')}$ for some $i' <  i$. Similar to the argument above, 
the total requirements of
such demands is at most 
$$\sum_{i'=0}^{i-1} \left( \frac{2^{i'+1}}{16} + \frac{2^{i'+1}}{4} \right) = \frac{5}{16} \sum_{i'=0}^{i-1} 2^{i'+1} \leq \frac{5 \cdot 2^{i+1}}{16} = \frac{5 \cdot 2^i}{8} \leq \frac{5c_e}{8}.$$
Here, we have used the fact that $c_e \ge 2^i$. Thus, we see that the total requirements of demands in $\D(\S_l)$ is at most
$$\frac{5c_e}{8} + \frac{3c_e}{8} \leq c_e. $$
Hence the solution is feasible. This proves the lemma.
\end{proof}

Combining the above two lemmas, we get the following theorem.
\begin{theorem}
\label{thm:round}
Given an instance of {\ufpround}, there is an algorithm  for this problem which uses
at most $24 \cdot \col(\opt)$ colors, and hence it is a $24$-approximation algorithm. Further, if all
demands are small, then one can color the demands using at most $16 \cdot \col(\opt)$ colors.
\end{theorem}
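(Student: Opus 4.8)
The plan is to obtain the bound by a direct assembly of the two coloring routines already developed for the large and the small demands, and then to relate the resulting color count to $\col(\opt)$ through the elementary inequality $\col(\opt) \ge r$ noted above. First I would split the request set into its large part and its small part according to the bottleneck threshold $d_i \le \frac{1}{4} b_i$. These two families are processed independently and, crucially, are assigned palettes of colors that are kept \emph{disjoint}; this is what allows the two color counts to be added rather than combined in some more delicate way.

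Next I would invoke \autoref{lem:large} to color all large demands with at most $8r$ colors, and \autoref{lem:small} to color all small demands with at most $16r$ colors. Since the two palettes are disjoint and each of the resulting solution classes is already guaranteed feasible by its respective lemma (feasibility within one group is unaffected by demands of the other group, as they never share a color), the algorithm $\cal A$ on the full instance uses at most $8r + 16r = 24r$ colors.

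Finally I would translate this into an approximation guarantee. Because any single edge $e$ requires at least $r_e = \lceil l_e / c_e \rceil$ colors and $r = \max_{e} r_e$, we have $\col(\opt) \ge r$, so $24r \le 24 \cdot \col(\opt)$, which is exactly the claimed $24$-approximation. For the second assertion, when every demand is small the large-demand routine is vacuous, and \autoref{lem:small} alone colors the whole instance with at most $16r \le 16 \cdot \col(\opt)$ colors. There is no genuine obstacle here: the entire difficulty was already absorbed into the two lemmas, and the only point meriting a sentence of care is the disjointness of the palettes, which is what licenses summing the bounds $8r$ and $16r$.
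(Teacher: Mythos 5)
Your proposal is correct and follows exactly the paper's own argument: split the demands at the threshold $d_i \le \frac{1}{4}b_i$, apply \autoref{lem:large} ($8r$ colors) and \autoref{lem:small} ($16r$ colors) with disjoint palettes, and conclude via $\col(\opt) \ge r$. Your explicit remarks on palette disjointness and the vacuity of the large-demand routine in the all-small case are minor elaborations of steps the paper leaves implicit in its one-line proof.
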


\subsubsection{Running time}
For large demands, we are using the algorithm by Nomikos et al.~\cite{NomikosPZ01}, which is polynomial-time. Scaling the capacities and demands requires polynomial-time. For small demands, sorting the demands and maintaining several copies of the path can be done in polynomial-time. The critical edge of a demand can also be found in polynomial-time. Hence, the overall algorithm runs in polynomial-time.

\subsection{How bad can the congestion bound be?}
\subsubsection{With NBA}
We show an example, where even the optimal coloring requires $2r$ colors. Suppose there is a single edge of capacity 1. There are $2k$ copies of a (large) demand $\frac{1}{2} + \epsilon$, where $\epsilon \ll \frac{1}{k}$. Since, no two demands can be given the same color, the optimal coloring requires $2k$ colors, while the congestion bound $r$ is $k+1$. Hence, $\opt \approx 2r$, for large $k$.

\begin{figure}[ht]
\centering
\includegraphics[scale=0.7]{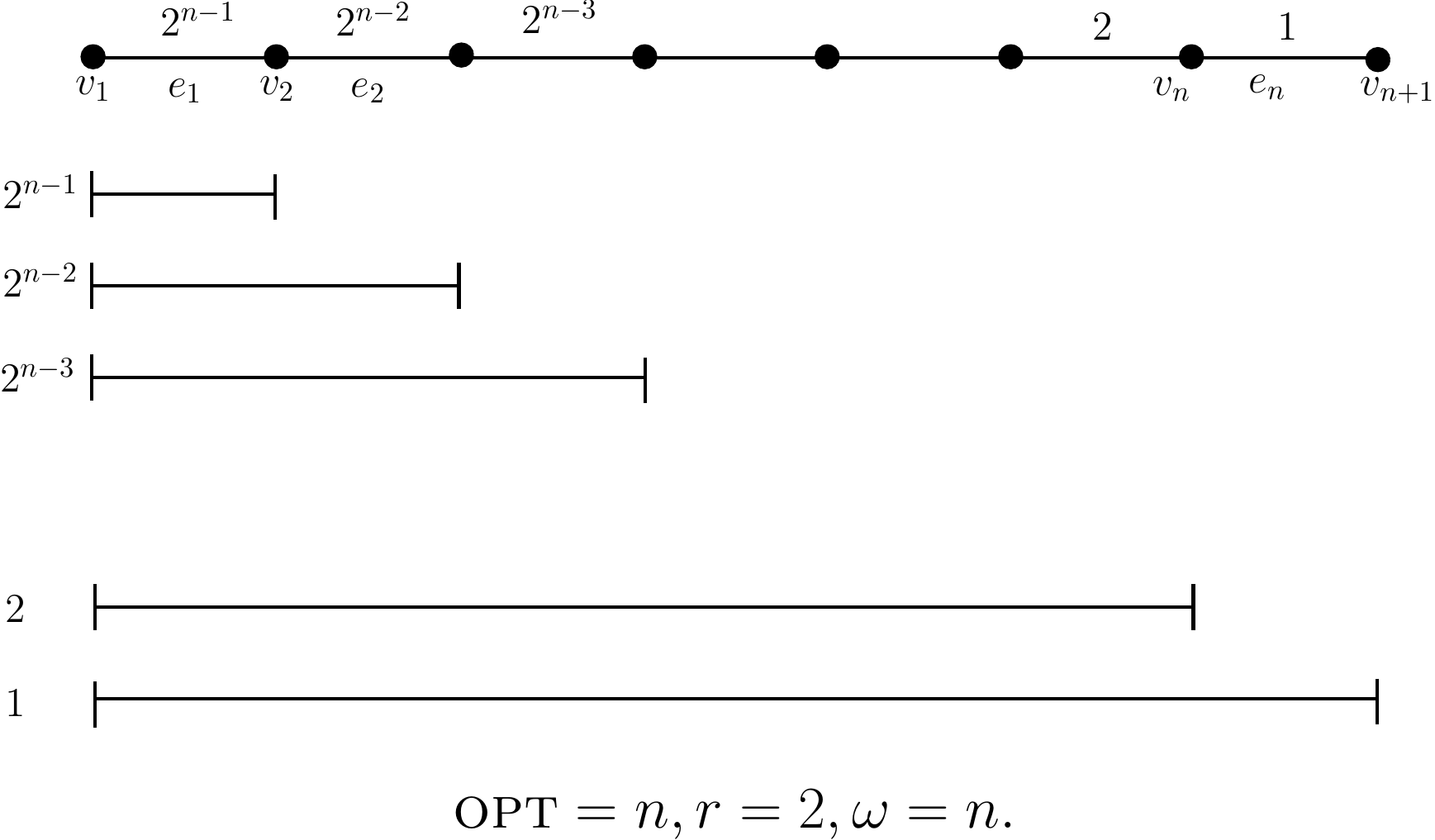}
\caption{An example where $\opt = n, r = 2, \omega = n$.}
\label{bad-congestion-bound}
\end{figure}

\subsubsection{Without NBA}
We show an example where even the optimal coloring requires $n$ colors, whereas the congestion bound is $2$. In \autoref{bad-congestion-bound}, the capacities are geometrically decreasing and $c(e_i) = 2^{n-i-1}$ for $1 \le i \le n-1$. The demands $D_j$ are between $v_1$ and $v_j$ for $2 \le j \le n$ and $d_j = 2^{n-j}$. Since, no two demands can be given the same color, the optimal coloring requires $n$ colors, while the congestion bound $r$ is $2$. So the ratio $\frac{\opt}{r}$ is $\Omega(n)$. Note that in this example, NBA is not satisfied. Here, $\omega$ is the maximum number of intervals that can't be assigned the same color, which is also the maximum clique size in the corresponding interval graph. Note that $\omega$ is a lower bound on $\opt$, so any solution requires at least $\omega$ colors.

\section{Approximation Algorithms for Round-UFP on Trees}
\label{sec:roundtree}
We now consider the {\ufpround} problem on trees. Consider an instance $\I$ of this problem as described
in Section~\ref{sec:pre}.
We consider the case of large and small demands separately. Let $\D^\lar$ be the set of large demands
and $D^\sm$ be the set of small demands.
\begin{lemma}
There is a $32$-approximation algorithm for the {\ufpround} problem on trees, when we only have demands in $\D^\lar$.
\end{lemma}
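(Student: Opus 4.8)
The plan is to mirror the path argument of Lemma~\ref{lem:large}, reducing the large-demand tree instance to a unit-demand, integer-capacity instance, and then supplying a tree analogue of the path result of Nomikos et al.\ (Lemma~\ref{lem:nomikos}). First I would scale all capacities and demands so that $c_{\min}=1$, round every capacity down to the nearest integer, and round every large demand up to $1$. Exactly as in the path case, rounding the capacities down costs a factor of at most $2$ in the congestion (since each $c_e\ge 1$), and rounding the demands up costs a factor of at most $4$ (since a large demand satisfies $d_i>\frac{1}{4} b_i\ge\frac{1}{4} c_{\min}=\frac{1}{4}$); hence the maximum congestion of the rounded instance is at most $8r$. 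It therefore suffices to color a unit-demand, integer-capacity instance on the tree using a number of colors that is at most $4$ times its congestion.

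Since Lemma~\ref{lem:nomikos} is available only for paths, the core of the argument is to obtain an analogous statement on trees. I would root the tree arbitrarily and split each request path at its least common ancestor into two \emph{vertical} subpaths, each monotone toward the root. Coloring the two resulting families with disjoint palettes costs a factor of $2$, and each family has congestion at most that of the rounded instance, namely $8r$. It then remains to color a single family of vertical paths, with unit demands and integer capacities, using at most twice its congestion. Combining the three losses gives $8\cdot 2\cdot 2 = 32$ times the original congestion, and since $\col(\opt)\ge r$ this yields the claimed bound of $32\,\col(\opt)$.

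To color one vertical family I would exploit the fact that any two vertical paths sharing an edge $e$ both run upward from $e$, so along the root-path above $e$ they are nested; restricted to any single root-to-leaf line, such a family behaves like a system of intervals. I would then adapt the Nomikos-style packing: process the vertical paths in an order consistent with this nesting (for instance by decreasing depth of the lower endpoint), maintaining a collection of color classes and assigning each path to a class in which at most $c_e - 1$ units already pass through its critical edge $e$, so that each class keeps at most $c_e$ unit demands through every edge. The counting argument showing that some admissible class always exists is the same as in the proof of Lemma~\ref{lem:small}, and I expect it to match the congestion up to the factor of $2$.

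The main obstacle is precisely this tree analogue of the Nomikos lemma. On a path the greedy packing never has to reconcile decisions made in different branches, whereas on a tree a color used inside one subtree interacts, through the shared ancestor edges, with its use in sibling subtrees; the capacitated (integer $c_e$) setting, rather than the plain unit-capacity path-coloring problem, is what makes this delicate. The technical work is to verify that a single global processing order keeps \emph{every} color class feasible at \emph{every} edge simultaneously, so that the $2\times$(congestion) budget per vertical family is never exceeded. If this cross-branch consistency can be proved, the remaining arithmetic is routine and the $32$-approximation follows.
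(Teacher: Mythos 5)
Your first step---scaling so that $c_{\min}=1$, rounding capacities down to integers and large demands up to $1$, at a congestion cost of $4\cdot 2=8$---is exactly the paper's reduction. But the core of your argument has a genuine gap. Splitting each request at its least common ancestor into two vertical subpaths and coloring ``the two resulting families with disjoint palettes'' is not a valid move in a coloring problem: each request must receive a \emph{single} color under which its \emph{entire} path is feasible, whereas in your scheme the $s_i$-side half and the $t_i$-side half of the same request land in different families and hence receive colors from different palettes. Disjoint-palette (additive) accounting is only legitimate when you partition the \emph{requests} into two subfamilies; here every non-vertical request contributes a piece to both families. The natural repair---assigning each request the \emph{pair} of colors of its two halves---is multiplicative, not additive: if each vertical family needs $O(r)$ colors, the product coloring needs $\Theta(r^2)$ colors, which destroys the $O(r)$ bound entirely. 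On top of this, the piece you defer (a Nomikos-style guarantee of $2\times$ congestion for a single vertical family, with cross-branch consistency through shared ancestor edges) is precisely the hard technical content, and it is left unproven; note that even on a path the result of Lemma~\ref{lem:nomikos} (coloring with exactly $r$ colors) is a nontrivial theorem, not a routine greedy count.

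The paper avoids all of this by using a different black box: Chekuri et al.~\cite{ChekuriMS07} give an algorithm that colors unit-demand, integer-capacity instances \emph{on trees} with at most $4r$ colors, handling whole requests through their LCAs directly (no splitting of a request into separately colored halves). Combined with the factor-$8$ rounding this yields $8\cdot 4=32$. If you want a self-contained tree argument in the spirit of your proposal, the right template is the paper's treatment of small demands on trees: process requests in bottom-up order of their LCAs and control \emph{both} critical edges of a request simultaneously within the same color class, rather than decoupling the two sides of the request.
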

\begin{proof}
Chekuri et. al. \cite{ChekuriMS07} gave a $4$-approximation algorithm for coloring a set of demands when
all demands have requirement 1, and the capacities are integers. In fact, their algorithm uses
at most $4 r$ colors. In our case, first observe that if $D_i \in \D^\lar$, then
$d_i$ lies between $\frac{1}{4}c_{\min}$ and $c_{\min}$. We create a new instance $\I'$, where we round-up the
requirement of each demand $D_i$ to $c_{\min}$. Further, we round-down the capacity of each edge to the
nearest multiple of $c_{\min}$. We claim that
our algorithm uses at most $32 \cdot \col(\opt)$ colors, where $\col(\opt)$ denotes the number of colors
used by the optimal solution for the large demands in $\I$. Indeed, by increasing the requirements of
the large demands, and decreasing the capacities of the edges, we affect the congestion of an edge by
at most $4 \cdot 2 = 8$. Now this is a uniform demands instance, which is the same as a unit demands instance by scaling the capacities and demands. We lose a further factor of $4$ by using the $4$-approximation algorithm of  Chekuri et. al. Hence, the result follows.
\end{proof}

\begin{lemma}
There is a $32$-approximation algorithm for the {\ufpround} problem on trees, when we only have demands in $\D^\sm$.
\end{lemma}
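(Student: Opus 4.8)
The plan is to lift the $16r$-colour algorithm for small demands on paths (whose analysis appears in the preceding lemma) to the tree by rooting it, which restores a substitute for the left/right dichotomy that drives the path argument. First I would root $T$ at an arbitrary vertex; this designates an upper and a lower endpoint for every edge and gives each demand path $I_j$ a unique topmost vertex $m_j=\mathrm{LCA}(s_j,t_j)$, splitting $I_j$ into two root-ward arms meeting at $m_j$. I keep the edge-class notion ($e$ is of class $i$ when $2^i\le c_e<2^{i+1}$) and the critical edge $e_j$ of a demand (the first minimum-class edge along $I_j$) exactly as in the path case. The algorithm maintains $K=32r$ candidate solutions $\S_1,\dots,\S_K$, processes the demands in an order compatible with the rooting (for instance by the DFS position of the topmost endpoint, generalising the left-endpoint order), and when $D_j$ is considered it is sent to any solution whose current load through $e_j$ is at most $c_{e_j}/32$. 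Such a solution always exists by averaging, since the congestion is at most $r$ and there are $32r$ solutions.

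For feasibility I would fix a solution $\S_l$ and an edge $e$ of class $i$, and bound the total demand through $e$ routed in $\S_l$. Removing $e$ splits $T$ into two components, and every demand through $e$ has exactly one endpoint in each, so its path crosses $e$ and runs monotonically (root-ward, then possibly leaf-ward past the LCA) in each component. I classify the demands of $\S_l$ through $e$ by the component in which their critical edge lies. On each side, the sub-path joining $e$ to the critical-edge region is a concatenation of at most two monotone root-ward pieces, which is precisely the structure on which the path proof operated: the ``last demand whose critical edge is on this side'' argument and the per-class grouping through the first class-$i'$ edge, together with the geometric sum $\sum_{i'<i}2^{i'+1}$, carry over side by side.

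The two sides reproduce the path estimate, but every term is now counted for both components created by the fold at the LCA, so the load through $e$ is at most twice the path bound. Because I used the tighter guarantee $c_{e_j}/32$ rather than $c_{e_j}/16$, this doubled bound is again at most $c_e$, so each $\S_l$ is feasible. Hence all small demands are coloured with $K=32r$ colours, and since $r\le\col(\opt)$ this is a $32$-approximation, as claimed.

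The \emph{main obstacle} is the ``common edge'' property that the path argument relies on: on a path all right-of-$e$ demands of a fixed class pass through a single edge $e^{(i')}$, and it is this shared edge that lets the low-load guarantee bound their aggregate demand. In the tree the root-ward paths branch, so demands of a given class may leave through different subtrees and need not share any edge beyond $e$. The real work is to show that this branching never increases the load through the fixed edge $e$ beyond the per-class bound: demands in distinct branches load $e$ only along their common prefix, so grouping each demand by the first class-$i'$ edge met while moving away from $e$ still isolates, per class and per side, a single edge carrying all the relevant demand. Making this localisation rigorous, and checking that the fold at the LCA contributes exactly the extra factor of two over the path case, is where the care is needed.
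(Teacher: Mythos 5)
Your algorithm is not the paper's, and the step you yourself flag as ``the real work'' is exactly where it fails --- and it fails irreparably, not up to a constant. The paper does not use a single critical edge on trees: it defines \emph{two} critical edges per demand, one on each arm of the LCA split (the $s_i$-critical edge on the $a_i$--$s_i$ path and the $t_i$-critical edge on the $a_i$--$t_i$ path), processes demands in bottom-up order of the LCAs $a_i$, and requires the chosen solution to be lightly loaded at \emph{both} critical edges. This per-arm device guarantees that every demand crossing a fixed edge $e$ carries a threshold-checked edge on the very arm that passes through $e$, whose above-$e$ portion lies on the unique root-ward path from $e$; that is what restores the path structure of Lemma~\ref{lem:small}. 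Your single critical edge (``the first minimum-class edge along $I_j$'') is not even well defined on a tree path (first with respect to which of the two arms?), and, more importantly, it leaves one entire side of each demand with no threshold check whatsoever. Note also that your processing order (DFS position of the LCA, i.e., ancestors first) is the reverse of the paper's bottom-up order.

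The fatal gap is the localisation claim that ``grouping each demand by the first class-$i'$ edge met while moving away from $e$ still isolates, per class and per side, a single edge carrying all the relevant demand.'' On the leaf-ward side of $e$ this is simply false: demands crossing $e$ descend into \emph{different} children of the lower endpoint of $e$, so their first class-$i'$ edges lie in pairwise distinct branches and share no edge at which any threshold was ever enforced (other than $e$ itself, where your rule checks nothing). Concretely, let $e=(u,v)$ have capacity $16$, let $v$ have children $b_1,\dots,b_N$ attached by capacity-$8$ edges, and take $N$ demands of size $2$ routed from $b_k$ through $e$ to pairwise distinct vertices above $u$. These demands are small ($2=\frac{1}{4}\cdot 8$), NBA holds, and the congestion is $r=\Theta(N)$ only at $e$ and above, so $32r$ colours are available. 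Yet each demand's unique minimum-class edge is its \emph{private} edge $(v,b_k)$, so under your assignment rule every demand sees load $0$ at its critical edge in $\S_1$ and all $N$ demands are placed in $\S_1$, overloading $e$ by a factor $\Theta(N)$. No constant-factor bookkeeping (your ``fold at the LCA contributes a factor of two'', compensated by tightening $c_{e_j}/16$ to $c_{e_j}/32$) can absorb this: the loss is unbounded, so the proposed algorithm itself is incorrect, not merely its analysis. Any repair has to reintroduce per-arm control as the paper does, so that the edge at which the threshold is enforced is tied to the side of $e$ on which the demand travels.
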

\begin{proof}
The proof is very similar to that of Lemma~\ref{lem:small}. We maintain $16 r$ solutions.
For a demand $D_i$, let $a_i$ denote the
least common ancestor of $s_i$ and $t_i$. We consider the demands in a bottom-up order of $a_i$.
For a demand $D_i$, we define two critical edges: the $s_i$-critical edge is the critical edge
on the $a_i-s_i$ path, and the $t_i$-critical edge is the critical edge on the $a_i-t_i$-path.
We send $D_i$ to the solution in which both these critical edges have been used till $\frac{1}{16}$ of their
total capacity only. Again it is easy to check that such a solution will exist. The rest of the argument
now follows as in the proof of Lemma~\ref{lem:small}.
\end{proof}

\begin{theorem}
\label{thm:tree}
There is a $64$-approximation algorithm for the {\ufpround} problem on trees.
\end{theorem}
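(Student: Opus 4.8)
The plan is to obtain the $64$-approximation by exactly the large/small decomposition used for paths in \autoref{thm:round}, now feeding it the two tree lemmas proved just above. First I would split the request set into the large demands $\D^\lar$ and the small demands $\D^\sm$ according to the bottleneck-capacity threshold $\frac{1}{4}b_i$. These two families are then colored completely independently, each drawing from its own palette, so that no color is shared between a large demand and a small demand. This disjointness is the crux of the combination: a color class of the merged solution is either entirely large or entirely small, hence feasible on every edge precisely because each palette is feasible in isolation, and the total number of colors is simply the sum of the two counts.

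Next I would invoke the large-demand tree lemma to color $\D^\lar$ with at most $32r$ colors, and the small-demand tree lemma to color $\D^\sm$ with at most $32r$ colors. Summing over the disjoint palettes, the merged coloring uses at most $32r + 32r = 64r$ colors. To turn this congestion bound into an approximation guarantee, I would use that congestion is a lower bound on the optimum on trees, by the same reasoning as on paths: for any edge $e$, a feasible coloring must distribute the demands through $e$ into at least $\lceil l_e/c_e\rceil = r_e$ color classes, so $\col(\opt) \ge \max_{e} r_e = r$. Combining, $\alg \le 64r \le 64 \cdot \col(\opt)$, which is the claimed bound.

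Since both lemmas carry all the real work, the theorem is essentially a bookkeeping step, and the only point that deserves care lies upstream in the small-demand lemma rather than here. On a tree each demand $D_i$ has \emph{two} critical edges — one on the $a_i\text{--}s_i$ path and one on the $a_i\text{--}t_i$ path, where $a_i$ is the least common ancestor of $s_i$ and $t_i$ — so the routing must simultaneously keep the load below $\frac{1}{16}$ of capacity on both. A union bound over the two edges shows that at most $16r$ solutions can be blocked by each edge, hence fewer than $32r$ in total, which is exactly why $32r$ solutions (rather than the $16r$ that suffice on a path, as in \autoref{lem:small}) are needed to guarantee an available solution for $D_i$. This doubling is what pushes the small-demand cost from $16r$ up to $32r$ and ultimately yields the constant $64$; I would therefore state the combination plainly and let the two lemmas supply the technical content.
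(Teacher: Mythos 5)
Your proposal is correct and coincides with the paper's proof, which is literally ``follows from the two previous lemmas'': color $\D^\lar$ and $\D^\sm$ with disjoint palettes using at most $32r$ colors each, and conclude via $\col(\opt) \ge r$ that the merged coloring uses at most $64 \cdot \col(\opt)$ colors. Your upstream remark is also well taken --- with two critical edges per demand the small-demand argument needs $32r$ solutions (each critical edge can block at most $16r$ of them), and the paper's statement that it maintains ``$16r$ solutions'' appears to be a slip, since only $32r$ is consistent with the lemma's claimed factor of $32$.
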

\begin{proof}
Follows from the two previous lemmas.
\end{proof}

\subsubsection{Running time}
For large demands, we are using the algorithm by Chekuri et. al. \cite{ChekuriMS07}, which runs in polynomial-time. Scaling the capacities and demands requires polynomial-time. For small demands, sorting the demands and maintaining several copies of the tree can be done in polynomial-time. The critical edge of a demand can also be found in polynomial-time. Hence, the overall algorithm runs in polynomial-time.

\chapter {The Max-UFP and the Bag-UFP Problems}
\label {chap3}

We define the {\ufpmax} and the {\ufpbag} problems and give constant factor approximation algorithms for both the problems under NBA. We are given a graph $G=(V,E)$, which is either a path or a tree, with edge capacities 
$c_e$ for all edges $e \in E$. We are also given a set of requests $R_1, \ldots, R_k$. Request $R_i$ has an associated source-sink pair $(s_i, t_i)$, a demand $d_i$, and a profit $w_i$. We shall use $I_i$ to denote the associated unique path between $s_i$ and $t_i$ in $G$. In order to route a request $R_i$, we send $d_i$ amount of flow from $s_i$ to $t_i$ along the (unique) path between them in $G$. A subset of demands will be called {\em feasible}, if they can be simultaneously routed without violating the edge capacities.

In the {\ufpmax} problem, we would like to find a feasible subset of demands of maximum total profit. In the {\ufpbag} problem, we are given sets, which we will call {\em bags}, $\D^1, \ldots, \D^p$, where each 
 set  $\D^j$ consists of a set of requests $R^{j}_1, \ldots, R^j_{n_j}$. As before, each request $R^j_i$ is specified by an interval $I^j_i$ and a bandwidth
requirement $d^j_i$. We are also given profits $p^j$ associated with each of the bags $\D^j$.
A feasible solution to such an instance picks at most one demand from each of the bags -- the selected demands should form a feasible set of routable demands. The profit of such a solution is the total profit of the bags from which we select a demand. The goal is to maximize the total profit.

We require our instances to satisfy NBA. We use the notion of congestion, bottleneck capacity, large demands and small demands, as defined in \autoref{chap2}. We will use ideas from {\ufpround} to give a constant factor approximation for {\ufpmax}, and then extend it to {\ufpbag}.

\section{Linear Programming formulation for Max-UFP}
A natural linear programming formulation for {\ufpmax} on a path is given below. Here $x_i$ denotes the fraction of the demand $i$ that is satisfied and $I_i$ is the unique path between $s_i$ and $t_i$.
\begin{align*}
\textbf{maximize} \quad \; \sum_{i=1}^k w_i x_i \! & \qquad \qquad \textbf{(UFP-LP)} \\
\textbf{such that} \quad \sum_{i:e \in I_i} d_i x_i &\le c_e \qquad \! \forall e \in E \\
0 \le x_i &\le 1 \qquad \; \forall i \in \{1,\ldots,k\}
\end{align*}
If we replace the constraints $x_i \in [0,1]$ by the constraints $x_i \in \{0,1\}$, we get an integer program, which precisely models {\ufpmax}.

\begin{definition}
The \emph{integrality gap} of an integer program is the worst-case ratio over all
instances of the problem of the value of an optimal solution to the integer programming formulation
to the value of an optimal solution to its linear programming relaxation.
\end{definition}

\subsection{Integrality gap of the UFP-LP without NBA}
Chakrabarti et al. \cite{ChakrabartiCGK07} showed that the integrality gap of the above LP is $\Theta\left(\log \frac{d_{\max}}{d_{\min}}\right)$. This can be as bad as $\Omega(n)$ without NBA, as the example in \autoref{bad-integrality-gap} shows. In this example, $c(e_i) = 2^i$ for $i = 1,\ldots,n$. There is a demand of $2^i$ between $v_i$ and $v_{n+1}$ for $i = 1,\ldots,n$. For all such demands the profit is 1. Note that the optimum integral solution can route at most one demand, to get a profit of $OPT = 1$, while the optimal fractional LP solution can route each demand to the extent of $\frac{1}{2}$ ($x_i = \frac{1}{2}$), to get a profit of $OPT_f = \frac{n}{2}$. Hence, $\frac{OPT_f}{OPT} = \frac{n}{2} = \Omega(n)$. Note that in this example, NBA is not satisfied. Further, $d_{\max} = 2^n$ and $d_{\min} = 2$, so the bound $\Theta\left(\log \frac{d_{\max}}{d_{\min}}\right)$ is asymptotically tight.

\begin{figure}[ht]
\centering
\includegraphics[scale=0.7]{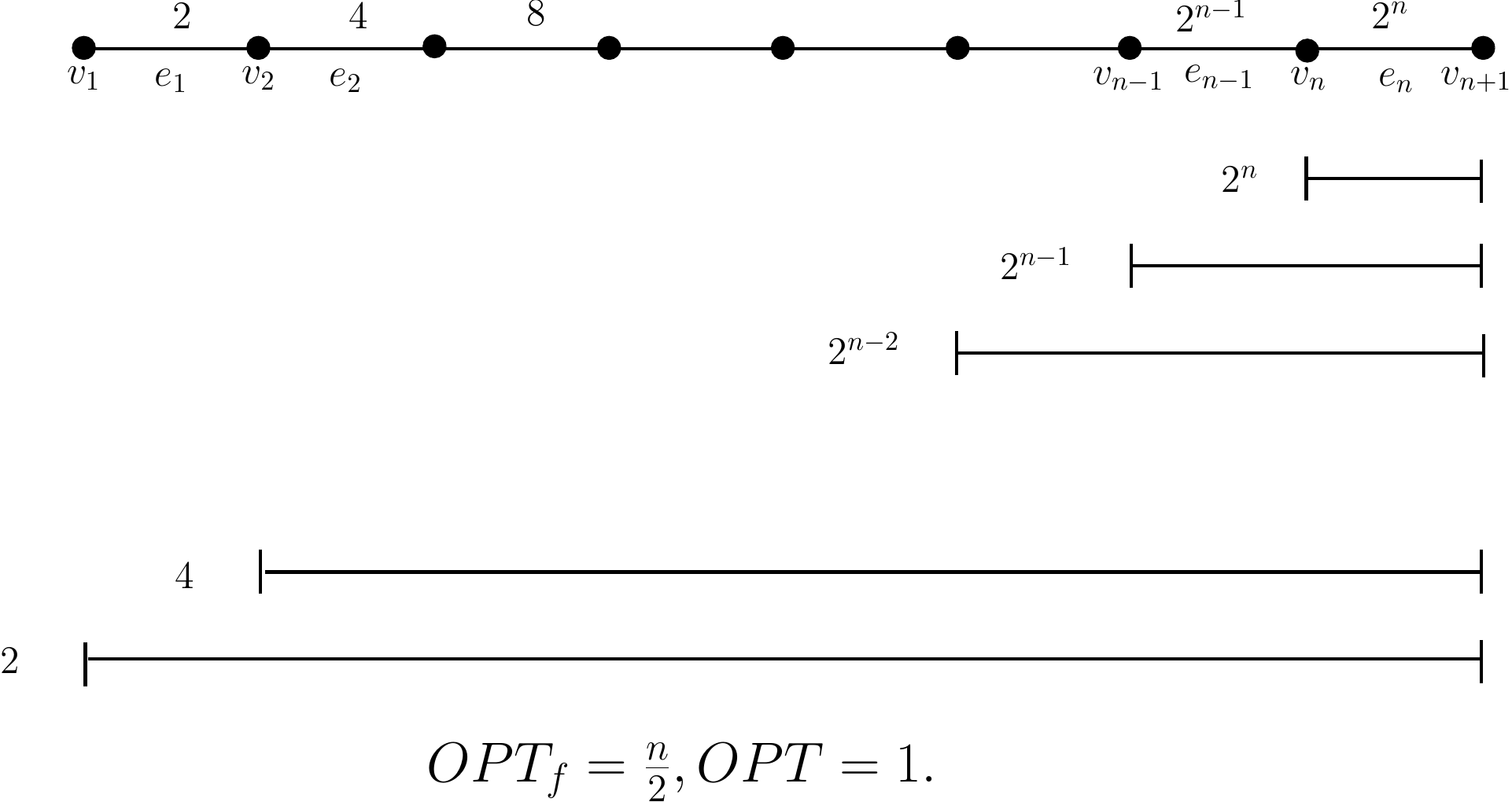}
\caption{An example where $OPT_f = \frac{n}{2}, OPT = 1.$}
\label{bad-integrality-gap}
\end{figure}

\subsection{Integrality gap of the UFP-LP with NBA}
In \autoref{bad-integrality-gap-nba}, the capacities and demands are as shown. All profits are 1. Here $d_1 = 2c, d_2 = d_3 = c+\epsilon$. The LP has a feasible solution given by $x_1 = \frac{1}{2}, x_2 = x_3 = \frac{c}{c+\epsilon}$. Hence, LP has a profit of $\frac{1}{2} + \frac{2c}{c+\epsilon} \approx 2.5$. Since routing any demand integrally will block the other demands, the IP can get a profit of at most 1. Hence, the integrality gap of the UFP-LP on this example is 2.5.

\begin{figure}[ht]
\centering
\includegraphics[scale=0.7]{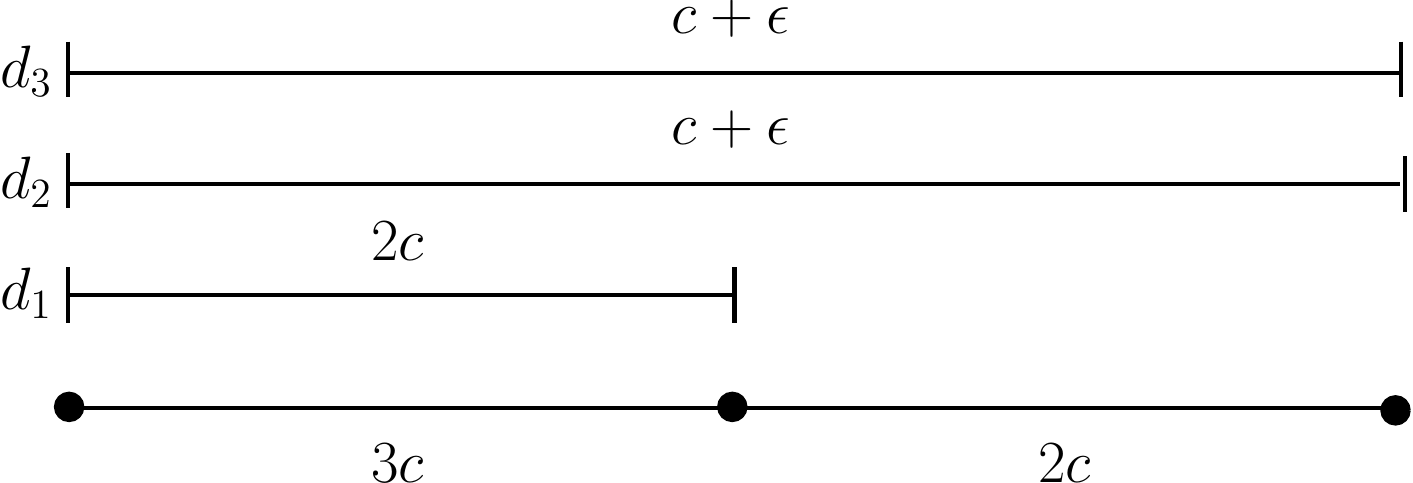}
\caption{Integrality gap of 2.5 for paths.}
\label{bad-integrality-gap-nba}
\end{figure}

\section{Approximation Algorithm for Max-UFP}
\label{sec:max}
In this section we show how ideas from {\ufpround} can be used to derive a constant factor approximation algorithm for {\ufpmax}.
Consider an instance $\I$ of {\ufpmax}. As before, we divide the demands into small and large demands.
For large demands, Chakrabarti et al.~\cite{ChakrabartiCGK07} 
showed that one can find the optimal solution by dynamic
programming. For completeness, we include the result below.

\begin{lemma}
\label{lem:maxl}
The number of $\delta$-large demands crossing any edge in a feasible solution is at most $\frac{2}{\delta}\left(\frac{1}{\delta} - 1\right)$. Hence, an optimum solution can be found in $n^{O(1/\delta^2)}$ time using dynamic programming.
\end{lemma}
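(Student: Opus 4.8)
The plan is to prove the two assertions in turn: first the combinatorial cut bound on $\delta$-large demands, and then use it to drive an interval dynamic program.

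For the cut bound, the first step is to record what NBA buys us for a $\delta$-large demand $D_i$. Since $d_i \le c_{\min}$ by NBA and $d_i > \delta b_i$, the bottleneck capacity must satisfy $c_{\min} \le b_i < c_{\min}/\delta$, and the demand itself satisfies $d_i > \delta b_i \ge \delta c_{\min}$. Thus every $\delta$-large demand has size strictly larger than $\delta c_{\min}$, while its bottleneck edge has capacity strictly below $c_{\min}/\delta$. This is precisely the property that rules out the pathological ``geometrically shrinking bottleneck'' families (with demands of size $2^{j-1}$ and bottlenecks $2^j$) that would otherwise let unboundedly many large demands cross a single edge; NBA is therefore used in an essential way.

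Next I would fix an edge $e$ and let $\mathcal D$ be the set of $\delta$-large demands crossing $e$ in the feasible solution, splitting $\mathcal D$ according to whether a demand's bottleneck edge lies weakly to the left of $e$ (set $\mathcal D_L$) or strictly to the right ($\mathcal D_R$). For $\mathcal D_L$, let $f^\ast$ be the \emph{rightmost} bottleneck edge occurring among these demands, and let $D^\ast$ be the corresponding demand. The key geometric observation is that every $D_i \in \mathcal D_L$ crosses $f^\ast$: the interval $I_i$ contains both its own bottleneck edge (at position at most that of $f^\ast$) and the edge $e$ (at position at least that of $f^\ast$), hence it contains $f^\ast$. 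Now apply the capacity constraint on $f^\ast$: since $c_{f^\ast}=b_{D^\ast}<c_{\min}/\delta$, while every demand in $\mathcal D_L$ contributes more than $\delta c_{\min}$ and $D^\ast$ contributes more than $\delta\, c_{f^\ast}$, feasibility on $f^\ast$ bounds $|\mathcal D_L|$ by $O(1/\delta^2)$; tracking the sharper contribution of $D^\ast$ sharpens the per-side count to $\tfrac1\delta(\tfrac1\delta-1)$ (modulo how one charges the boundary demand). The symmetric argument, using the leftmost bottleneck edge, bounds $\mathcal D_R$, and summing reproduces $\tfrac2\delta(\tfrac1\delta-1)$.

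For the algorithmic half, I would run a left-to-right interval DP over the vertices of the path, whose state at the cut corresponding to an edge $e$ is the set of selected $\delta$-large demands whose interval contains $e$. By the cut bound just proved, any state arising in a feasible solution has at most $K=\tfrac2\delta(\tfrac1\delta-1)=O(1/\delta^2)$ demands, so there are at most $\binom{n}{O(1/\delta^2)}=n^{O(1/\delta^2)}$ relevant states per cut. The transition from one cut to the next closes the demands ending at the current vertex, opens some admissible subset of the demands starting there, and checks that the total demand of the active set does not exceed $c_e$, crediting the profit of a demand when it is opened. Optimal substructure holds because the active set at a cut separates the instance into independent left and right parts, so the recurrence maximizes total profit correctly; the overall running time is $n^{O(1/\delta^2)}$. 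The main obstacle is the cut bound itself, specifically combining the geometric ``all left-bottleneck demands cross $f^\ast$'' observation with the NBA-based bound on $c_{f^\ast}$, and pinning the exact constant $\tfrac2\delta(\tfrac1\delta-1)$ rather than the cruder $2/\delta^2$ obtained by ignoring the deepest demand's larger contribution.
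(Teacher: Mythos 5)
Your proof is correct and takes essentially the same route as the paper's: both fix an edge $e$, split the $\delta$-large demands by which side of $e$ their bottleneck edge lies on, observe that all left-bottleneck demands cross the rightmost such bottleneck edge $f^\ast$, and bound their number via the capacity constraint there, where your accounting (each demand contributes more than $\delta c_{\min}$ while $c_{f^\ast} < c_{\min}/\delta$) is algebraically equivalent to the paper's (each other demand contributes more than $\delta^2 c_{b_i}$), both yielding $\frac{1}{\delta}\left(\frac{1}{\delta}-1\right)$ per side. The left-to-right interval DP you sketch, with state equal to the active set of selected large demands at a cut, is exactly the standard construction the paper asserts without detail, so nothing is missing.
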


\begin{figure}[ht]
\centering
\includegraphics[scale=0.7]{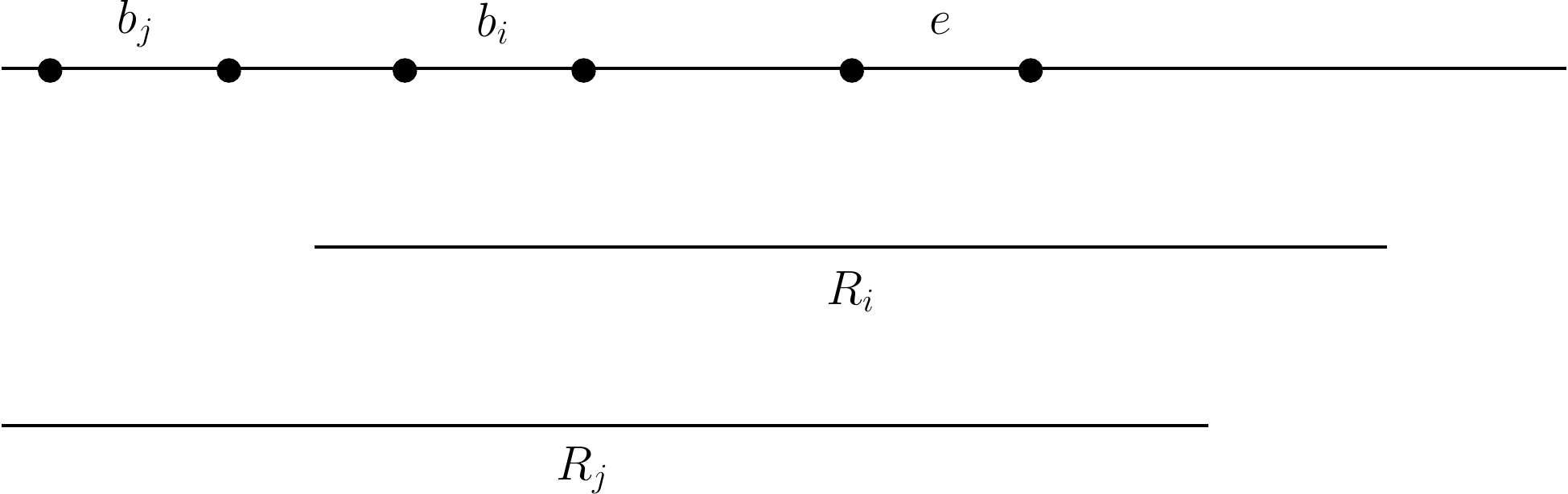}
\caption{Illustration for the analysis of large demands.}
\label{large-demands}
\end{figure}

\begin{proof}
Consider the set of requests $S_e$ passing through the edge $e$ in a feasible solution. The bottleneck edge of any such request will either be $e$ or on its left or on its right. Let $S_l$ be the set of requests whose bottleneck edge is $e$ or on its left. Similarly, let $S_r$ be the set of requests whose bottleneck edge is on the right of $e$. Among all requests in $S_l$, let $R_i$ be the one whose bottleneck edge $b_i$ is the rightmost. Then all requests in $S_l$ will pass through $b_i$, since they pass through $e$ and their bottleneck edge is on the left of $b_i$.

Consider any other demand $R_j \in S_l, j \ne i$. We know that $d_i > \delta c_{b_i}$, whereas $d_i \le c_{\min} \le c_{b_j}$, using NBA. Together this gives, $c_{b_j} > \delta c_{b_i}$. Hence, $d_j > \delta c_{b_j} > \delta^2 c_{b_i}$. The load put by $R_i$ on $b_i$ is $d_i > \delta c_{b_i}$. The load put by $R_j$ for $j \ne i$ on $b_i$ is $d_j > \delta^2 c_{b_i}$. The total number of such $R_j$ is strictly less than $\frac{(1 - \delta)c_{b_i}}{\delta^2 c_{b_i}}$. Together with the request $R_i$, the number of requests in $S_l$ is at most $\frac{1}{\delta}\left(\frac{1}{\delta} - 1\right)$. Similarly, $|S_r| \le \frac{1}{\delta}\left(\frac{1}{\delta} - 1\right)$. Hence, $|S_e| \le |S_l| + |S_r| \le \frac{2}{\delta}\left(\frac{1}{\delta} - 1\right)$.
\end{proof}

Note that, according to our definition, large demands are $\frac{1}{4}$-large. Now we consider the small demands. The following lemma gives an approximation algorithm for small demands.

\begin{lemma}
\label{lem:maxs}
If there are only small jobs, then there is a $16$-approximation algorithm for {\ufpmax}.
\end{lemma}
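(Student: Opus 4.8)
The plan is to combine the linear programming relaxation (UFP-LP) with the convex-decomposition principle described in the introduction. First I would solve (UFP-LP) to obtain an optimal fractional solution $x^\ast$ of value $\opt_f=\sum_i w_i x_i^\ast$; since the integer program models {\ufpmax} exactly, $\opt_f\ge\opt$. The goal then reduces to writing $x^\ast$ as an \emph{approximate} convex combination of feasible integral (that is, $0/1$) solutions: if I can exhibit feasible sets $C_1,\dots,C_K$ and weights $\lambda_1,\dots,\lambda_K\ge 0$ with $x^\ast=\sum_j \lambda_j\,\mathbf 1_{C_j}$ and $\sum_j\lambda_j\le 16$, then the best of these sets has profit at least $\frac{1}{\sum_j\lambda_j}\sum_j\lambda_j\sum_{i\in C_j}w_i=\frac1{16}\opt_f\ge\frac1{16}\opt$, which is exactly the claimed guarantee. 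Outputting that best set is the algorithm.

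The heart of the argument is to produce this decomposition using the small-demand routing of \autoref{lem:small}. I would think of the ``colours'' as a continuum of total measure $16$ and assign to each demand $i$ a region $U_i$ of measure exactly $x_i^\ast$ so that at every point $c$ of this colour space the set $\{i : c\in U_i\}$ is feasible; integrating the indicator functions of these feasible sets over the colour space then produces a convex decomposition whose $i$-th coordinate equals $|U_i|=x_i^\ast$ and whose total weight is $16$, and only finitely many distinct feasible sets occur. To place the regions I would run the greedy of \autoref{lem:small} in this fractional form: process the demands by increasing left endpoint, and for the current demand $i$ with critical edge $e$, pour its mass $x_i^\ast$ into the part of colour space where the load already placed on $e$ is at most $c_e/16$. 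Feasibility at every colour point then follows verbatim from the class-by-class estimate in \autoref{lem:small} (the $\tfrac{3c_e}{8}+\tfrac{5c_e}{8}\le c_e$ bound), which uses only that the demands are small.

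The step I expect to be the main obstacle is showing that a lightly loaded region of measure at least $x_i^\ast$ always exists, and that the total colour measure needed is $16$ rather than some larger constant. Here the role played by the congestion bound $l_e\le r\,c_e$ in \autoref{lem:small} is taken over by the LP capacity constraint $\sum_{i:e\in I_i} d_i x_i^\ast\le c_e$: integrating the load on a fixed edge $e$ over the whole colour space gives at most $c_e$, so by an averaging (Markov-type) argument the measure of colour points whose load on $e$ exceeds $c_e/16$ is small, leaving room to host the mass $x_i^\ast\le 1$. Making this counting tight enough to land on the factor $16$ is the delicate point: it mirrors the discrete counting of \autoref{lem:small}, where integrality guarantees a free colour on the nose. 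To be safe I would realise the masses $x_i^\ast$ by integer copies over a common denominator and invoke \autoref{lem:small} directly on the blown-up instance, taking care that the copies of any one demand are spread across distinct colour classes so that collapsing each class to a $0/1$ set preserves its profit. Verifying that this distinctness can be enforced without inflating the number of colours beyond $16$ times the scaling is exactly where the argument must be pinned down.
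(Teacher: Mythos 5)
Your proposal is essentially the paper's own proof: solve the LP relaxation, write each $x^\star_i=\alpha_i/K$ over a common denominator $K$, blow the small demands up into $\alpha_i$ copies to obtain a {\ufpround} instance with congestion at most $K$, color it with $16K$ colors via \autoref{lem:small}, and output the most profitable color class, which by averaging has profit at least $\frac{1}{16}\sum_i w_i x^\star_i$ --- exactly the approximate convex decomposition you describe (your continuum phrasing is a rewording that you yourself discard in favor of this discrete blow-up). The one subtlety you flag, that copies of the same demand must land in distinct color classes for the collapsed $0/1$ set to retain the counted profit, is genuine but is also left unaddressed in the paper's proof (the paper only handles the analogous constraint in \autoref{lem:bags}, where avoiding $K$ conflicting classes is what forces $17K$ colors there), so your account matches the paper's argument in both substance and level of rigor.
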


\begin{proof}
We write the following natural LP relaxation for this problem -- a variable $x_i$
for demand $D_i$ which is 1 if we include it in our solution, and 0 otherwise.
\begin{align}
\nonumber
\max & \sum_i w_i x_i \\
\label{eq:cap}
\sum_{i : e \in I_i} d_i x_i & \leq c_e \ \ \ \ \mbox{for all edges $e$} \\
\nonumber
0 & \leq x_i \leq 1 \ \ \ \ \mbox{for all demands $i$}
\end{align}

Let $x^\star$ be an optimal solution to the LP relaxation. Let $K$ be an integer such that all
 the variables
$x^\star_i$ can be written as $\frac{\alpha_i}{K}$ for some integer $\alpha_i$. Now we construct an
instance $\I'$ of {\ufpround} as follows. For each (small) demand $D_i$ in $\I$, we create $\alpha_i$
copies of it. Rest of the parameters are same as those in $\I$. First observe that inequality~(\ref{eq:cap})
implies that $\sum_{i : e \in I_i} d_i \alpha_i \leq K c_e, \forall e \in E$. Thus, the congestion of each edge in $\I'$ is at most $K$. Using Lemma \ref{lem:small} for small demands, we can color the demands with at most $16 K$ colors. It follows that the best solution
among these $16 K$ solutions will have profit at least $\frac{1}{16} \cdot \sum_i w_i x_i^\star$.
\end{proof}

Thus, we get the following theorem.

\begin{theorem}
\label{thm:max}
There is a $17$-approximation algorithm for the {\ufpmax} problem.
\end{theorem}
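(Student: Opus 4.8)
The plan is to combine the two preceding lemmas via the standard ``solve each demand class separately and keep the better solution'' paradigm, exactly as was done for \ufpround{} with uniform capacities. First I would partition the demands of the instance $\I$ into the set $L$ of large demands and the set $S$ of small demands, according to the $\frac14$-threshold. On $L$ I would run the dynamic program guaranteed by \autoref{lem:maxl} with $\delta=\frac14$: then at most $\frac{2}{\delta}\bigl(\frac1\delta-1\bigr)=24$ large demands cross any edge, so the DP runs in $n^{O(1)}$ time and returns an \emph{optimal} feasible solution for the large demands, of value $\alg(L)=\opt(L)$, where $\opt(L)$ is the best profit achievable using only large demands. On $S$ I would run the $16$-approximation of \autoref{lem:maxs}, obtaining a feasible solution of value $\alg(S)\ge\frac{1}{16}\opt(S)$. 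The algorithm then outputs whichever of these two solutions has the larger profit.

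For the analysis, let $\opt$ denote the optimal profit for the whole instance, and split an optimal solution into its large-demand and small-demand parts, with profits $\opt_L$ and $\opt_S$ respectively, so that $\opt=\opt_L+\opt_S$. Since the large demands chosen by the optimal solution form a feasible set consisting only of large demands, $\opt_L\le\opt(L)$; similarly $\opt_S\le\opt(S)$, and hence $\opt\le\opt(L)+\opt(S)$. It then remains to verify the elementary inequality
\[
\max\left(\opt(L),\ \tfrac{1}{16}\opt(S)\right)\ \ge\ \tfrac{1}{17}\bigl(\opt(L)+\opt(S)\bigr),
\]
which I would prove by splitting on whether $16\,\opt(L)\ge\opt(S)$: in the first case the maximum equals $\opt(L)$ and the target inequality is equivalent to $16\,\opt(L)\ge\opt(S)$, while in the second case the maximum equals $\frac{1}{16}\opt(S)$ and the target inequality is equivalent to $\opt(S)\ge 16\,\opt(L)$. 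Combining, the profit of the solution output is at least $\frac{1}{17}\opt$, giving the claimed factor.

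The genuine content of the theorem is carried entirely by \autoref{lem:maxl} and \autoref{lem:maxs}, which I would take as given. The only subtlety in the combination step is that one must output the \emph{better} of the two partial solutions rather than their union: taking the union is tempting but wrong, since a feasible packing of large demands together with a feasible packing of small demands can jointly violate edge capacities. The $\max$ operation is precisely what keeps the returned solution feasible, and the displayed inequality is exactly what balances the optimal factor $1$ for large demands against the factor $16$ for small demands so that the two constants combine to $17$. Finally I would observe that both subroutines run in polynomial time, so the overall algorithm does as well.
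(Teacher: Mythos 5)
Your proposal is correct and follows essentially the same route as the paper's proof: partition into $\frac14$-large and small demands, solve the large instance optimally via Lemma~\ref{lem:maxl}, apply the $16$-approximation of Lemma~\ref{lem:maxs} to the small instance, and return the better of the two solutions. The paper phrases the final balancing step as a case split on whether $\profit^l(\opt) \geq \frac{1}{17}\profit(\opt)$, which is equivalent to your elementary $\max$ inequality, and your remark about taking the better solution rather than the union is a valid (if implicit in the paper) feasibility observation.
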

\begin{proof}
Given an instance $\I$, we divide the demands into large and small demands. For large demands, we
compute the optimal solution using Lemma \ref{lem:maxl}, whereas for small demands we compute a solution with
approximation ratio 16 using Lemma \ref{lem:maxs}. Then we pick the better of the two solutions.

Consider an optimal solution $\opt$ with profit $\profit(\opt)$. Let $\profit^l(\opt)$ be the profit for large demands and
$\profit^s(\opt)$ be the profit for small demands. If $\profit^l(\opt) \geq \frac{1}{17} \cdot \profit(\opt)$,
then our solution for large demands will also be at least $\frac{1}{17} \cdot \profit(\opt)$.
Otherwise, $\profit^s(\opt) \geq \frac{16}{17} \cdot \profit(\opt)$. In this case, our solution for small demands will have value at least $\frac{1}{16} \cdot \frac{16}{17} \cdot
\profit(\opt) = \frac{1}{17} \cdot \profit(\opt)$.
\end{proof}

\subsection{Running time}
\label{sec:ufpmax-time}
We can find the optimal solution for the instance containing only large demands using dynamic programming in polynomial-time. Indeed, since the large demands are $\frac{1}{4}$-large, we can compute the optimal solution in $O(n^{16})$ time using \autoref{lem:maxl}. For small demands, we have to find an optimal solution to the linear programming relaxation for {\ufpmax}. This can be done in polynomial-time using the ellipsoid method. Solving the {\ufpround} instance $\I'$ using \autoref{lem:small} can also be done in polynomial-time. We can make $K$ polynomial in the input as follows. If the value of the variables $x^\star_i$ in the LP are less than $\inv{k}$ ($k$ is the number of demands), then we can ignore them. Otherwise, we can round them to the nearest multiple of $\inv{k}$. This will cause a small error of at most $\inv{k}$, which can be ignored. Since, $K$ can be taken as the least common multiple of the denominators of the variables $x^\star_i$, this will make $K$ polynomial in the input. Hence, the overall running time of the algorithm is polynomial.

\section{Approximation Algorithm for Bag-UFP}
\label{sec:bmax}
We now extend the above algorithm to the {\ufpbag} problem. Consider an instance $\I$ of this problem.
As before, we classify each of the requests $R^j_i$ as either large or small. For each bag, $\D^j$,
let $\D^{j,\lar}$ be the set of large demands in $\D^j$ and $\D^{j, \sm}$ be the set of small demands
in $\D^j$. Again, we have two different strategies for large and small demands.
\begin{lemma}
\label{lem:bagl}
If there are only large jobs, then there is a $48$-approximation algorithm for {\ufpbag}.
\end{lemma}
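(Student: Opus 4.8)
The plan is to mirror the proof of Theorem~\ref{thm:max}, but since the dynamic program of Lemma~\ref{lem:maxl} cannot track which bag each selected request comes from, I would replace it by the LP-plus-convex-decomposition machinery used for small demands in Lemma~\ref{lem:maxs}, now applied to the large requests and taught to respect the bag constraints. Concretely, restrict attention to the large requests and write the natural relaxation with a variable $x^j_i$ for each request $R^j_i$, the capacity constraints $\sum_{(j,i):e\in I^j_i} d^j_i x^j_i \le c_e$ for every edge $e$, and one bag constraint $\sum_i x^j_i \le 1$ for every bag $\D^j$. Solving it yields an optimal fractional $x^\star$ of value at least $\profit^{l}(\opt)$, which I would write as $x^{\star,j}_i = \alpha^j_i/K$ for a common integer denominator $K$.

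Next I would build a \ufpround{} instance $\I'$ by taking $\alpha^j_i$ copies of each large request. The two families of LP constraints translate into two clean combinatorial properties of $\I'$: every edge has congestion at most $K$, and every bag contributes at most $K$ copies in total (since $\sum_i \alpha^j_i \le K$). The goal is then to colour the copies with at most $48K$ colours so that each colour class is simultaneously a feasible routing and free of two copies of the same bag. Given such a colouring, each colour class is a legal \ufpbag{} solution, and because the copies of any fixed bag land in distinct classes, the profits of the classes sum to exactly $K$ times the LP value; hence the most profitable class has profit at least $\frac{1}{48K}\cdot K\cdot\sum_{j,i} w^j x^{\star,j}_i \ge \frac{1}{48}\profit^{l}(\opt)$, as desired.

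To produce the colouring I would start from the large-demand routine behind Lemma~\ref{lem:large}: scale so that $c_{\min}=1$, round every capacity down to an integer and every large demand up to $1$ (distorting congestion by at most a factor $8$), and invoke the unit-demand result of Lemma~\ref{lem:nomikos}. On top of this I would separate same-bag copies by maintaining $48K$ candidate solutions and, processing the copies in left-to-right order, sending each copy to a solution whose relevant critical/bottleneck edge is still lightly loaded \emph{and} which currently holds no copy of that copy's bag. A counting argument should show such a solution always exists: at most $K-1$ candidates are ruled out by the bag (a bag has at most $K$ copies, each in a distinct solution), while the number ruled out by the load is controlled by the congestion bound $K$ together with the fact, from Lemma~\ref{lem:maxl} with $\delta=\tfrac14$, that at most $24$ large requests cross any edge in any feasible routing.

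The hard part will be exactly this last step: reconciling feasibility of the routing with bag-disjointness inside a single $O(K)$-size palette. The two requirements pull against each other---Lemma~\ref{lem:large} controls feasibility but is oblivious to bags, whereas naively splitting into bag-disjoint groups first and routing each group afterwards would blow the colour count up to $\Theta(K^2)$. The delicate point is to interleave them, charging the bag conflicts ($\le K$ per insertion) and the load conflicts ($O(K)$ per insertion, via congestion and the $24$-per-edge crossing bound) against the same $48K$ colours, and then to verify that each maintained solution is genuinely feasible on \emph{every} edge rather than only on the critical edge guiding the assignment, adapting the class-based, geometric-sum estimate from the proof of Lemma~\ref{lem:small} to the large-demand regime.
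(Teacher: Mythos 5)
Your route is genuinely different from the paper's, but it has a gap exactly where you flag it, and that gap is not a deferred technicality --- it is the entire difficulty, and the step would fail as outlined. Your plan requires a \emph{greedy, insertion-order} colouring of large demands that uses $O(K)$ colours against congestion $K$, so that at each insertion you can additionally skip the at most $K$ solutions already holding a copy of the current bag (this is how \autoref{lem:bags} pays only $+K$ for bag-disjointness). But no such greedy exists in the paper for large demands: \autoref{lem:large} is not an assignment rule, it is a reduction (scale so $c_{\min}=1$, round capacities down, round large demands up to $1$) to the result of Nomikos et al.\ (\autoref{lem:nomikos}), which is an offline, structural colouring --- there is no notion of ``sending the next copy to a lightly loaded solution'' inside it, hence nothing to interleave the bag-avoidance with. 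The critical-edge greedy of \autoref{lem:small}, which is what makes the $+K$ trick work for small demands, fundamentally uses the slack $d_i \le \frac{1}{4}b_i$ (each solution absorbs load $c_e/16$ per capacity class, summed geometrically); with $d_i > \frac{1}{4}b_i$ that slack vanishes, and the charging breaks: when a copy $R$ is blocked in a solution, the blocking edge can have capacity far smaller than the capacity $c_{e_0}$ of $R$'s first edge, so a left-to-right first-fit can only bound the number of blocked solutions by roughly $r\cdot c_{e_0}/b_R$, which is unbounded even under NBA (already for unit demands on a path whose capacities alternate between $1$ and a huge value). Your fallback --- per-insertion counting via the $24$-per-edge bound of \autoref{lem:maxl} --- does not repair this, because that bound constrains each already-feasible colour class, not the number of classes a new copy conflicts with; and as you yourself note, doing bag-splitting and feasibility in two separate phases costs $\Theta(K^2)$ colours.

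The paper avoids the LP entirely for large demands, by a much shorter combinatorial reduction. By \autoref{lem:maxl} with $\delta = \frac{1}{4}$, at most $2\cdot 4\cdot 3 = 24$ large demands cross any edge in a feasible solution; since the selected intervals form an interval graph with clique size at most $24$, the optimal large-demand solution partitions into at most $24$ families of pairwise disjoint intervals (each still containing at most one request per bag), so restricting attention to pairwise-disjoint selections loses only a factor $24$. The residual problem --- maximum-profit pairwise disjoint intervals with at most one request per bag --- is exactly the job interval selection / throughput-maximization problem, for which the $2$-approximation algorithms of Berman et al.\ and Bar-Noy et al.\ apply as a black box, giving $24\times 2 = 48$. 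Note what each approach buys: the paper's argument is short precisely because for large demands ``feasible'' is within a constant factor of ``disjoint,'' so the bag constraint can be delegated to known machinery; your convex-decomposition route would be more uniform with \autoref{lem:bags}, but to complete it you would first need an analogue of \autoref{lem:nomikos} that tolerates forbidding $K$ colours per demand --- and that is exactly the missing ingredient, not a routine counting argument.
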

\begin{proof}
Suppose, we have the further restriction that the selected intervals need to be disjoint. From
\autoref{lem:maxl}, we know that the number of $\frac{1}{4}$-large demands crossing any edge in a feasible solution is at most $2 \cdot 4 \cdot (4 - 1) = 24$. Hence, if the demands are disjoint, the value of the objective function will reduce by a factor of at most 24. However, for
the latter  problem, we can use the 2-approximation algorithm of Berman et al.~\cite{BermanD00} 
and Bar-Noy et al.~\cite{Bar-NoyGNS99}. This gives a $48$-approximation algorithm.
\end{proof}

\begin{lemma}
\label{lem:bags}
If there are only small jobs, then there is a $17$-approximation algorithm for {\ufpbag}.
\end{lemma}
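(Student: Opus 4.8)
The plan is to mimic the convex-decomposition argument of \autoref{lem:maxs}, but to force each color class produced by the \ufpround\ router to respect the bag constraint, paying for this with one extra block of colors; this is exactly what turns the factor $16$ into $17$.

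First I would write the natural LP relaxation: a variable $x^j_i \in [0,1]$ for every request $R^j_i$, capacity constraints $\sum_{(j,i):\, e \in I^j_i} d^j_i x^j_i \le c_e$ for every edge $e$, and bag constraints $\sum_i x^j_i \le 1$ for every bag $\D^j$, maximizing $\sum_j p^j \sum_i x^j_i$. Let $x^\star$ be an optimal fractional solution and let $K$ be an integer with $x^{\star j}_i = \alpha^j_i/K$ for integers $\alpha^j_i$. As in \autoref{lem:maxs} I would build a \ufpround\ instance $\I'$ containing $\alpha^j_i$ copies of each (small) request. The capacity constraints give congestion at most $K$ in $\I'$, while the bag constraints give $\sum_i \alpha^j_i \le K$; that is, each bag contributes at most $K$ copies.

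Next I would run the router of \autoref{lem:small}, processing copies in order of their left endpoints, but now with $K' = 17K$ solutions $\S_1,\dots,\S_{K'}$ and one added rule: a copy of a request from bag $\D^j$ is never sent to a solution that already contains a copy from $\D^j$. The crux is to show that a legal target always exists. Exactly as in \autoref{lem:small}, when a copy with critical edge $e$ is processed, the number of solutions whose load through $e$ already exceeds $c_e/16$ is at most $16K-1$ (otherwise the total demand through $e$ would exceed $K c_e$, contradicting that the congestion of $\I'$ is at most $K$). The new rule forbids at most $K-1$ further solutions, since at most $K-1$ earlier copies of the same bag have been placed and, inductively, each sits in a distinct solution. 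Hence at most $17K-2$ solutions are blocked and at least one of the $17K$ remains available, so every copy is placed. Because the load threshold is still $c_e/16$, the feasibility computation of \autoref{lem:small} (the bound $\tfrac{5c_e}{8}+\tfrac{3c_e}{8}\le c_e$) is unchanged, so every $\S_l$ is a feasible packing; and by the new rule every $\S_l$ holds at most one copy per bag, hence is a valid \ufpbag\ solution once we keep one demand from each represented bag.

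Finally I would account for the profit. Since each solution carries at most one copy of each bag, the profit of $\S_l$ equals $\sum_{j:\, \D^j \text{ has a copy in } \S_l} p^j$; summing over all $17K$ solutions counts each placed copy exactly once, giving total profit $\sum_j p^j \sum_i \alpha^j_i = K \sum_j p^j \sum_i x^{\star j}_i = K\cdot \mathrm{LP}$, where $\mathrm{LP}$ is the LP optimum. The best of the $17K$ solutions therefore has profit at least $K\cdot \mathrm{LP}/(17K) = \mathrm{LP}/17 \ge \opt/17$, since $\mathrm{LP} \ge \opt$. The main obstacle is precisely this existence argument for a legal target: one must check that insisting on one-copy-per-bag blocks only an additional $K-1$ solutions, so that $17K$ (rather than the $16K$ of \autoref{lem:maxs}) colors suffice, which is what yields the stated factor of $17$.
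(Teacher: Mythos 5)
Your proposal is correct and follows essentially the same route as the paper's proof: write the bagged LP, scale by $K$ to build a \ufpround{} instance with $\beta^j_i$ (your $\alpha^j_i$) copies, route as in Lemma~\ref{lem:small} into $17K$ solutions while forbidding two copies of one bag in the same solution, and observe that the bag constraint blocks at most $K$ solutions while the load threshold $c_e/16$ blocks fewer than $16K$, leaving a legal target and the factor $17$. Your only deviations are cosmetic and, if anything, improvements: you eliminate the auxiliary variables $y^j$ by maximizing $\sum_j p^j \sum_i x^j_i$ directly (sidestepping the inequality direction in the paper's constraint~(\ref{eq:bag})), and you make explicit the profit-counting step that the paper leaves implicit.
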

\begin{proof}
As in the case of {\ufpmax} problem, we first write an LP relaxation, and then
use an algorithm similar to the
one used for the {\ufpround} problem. We have a variable $x_i^j$
for demand $D_i^j$, which is 1 if we include it in our solution and 0 otherwise, and a variable $y^j$ which is 1 if we choose a demand from the bag $\D^j$ and 0 otherwise. The LP relaxation is as follows.
\begin{align}
\nonumber
\max & \sum_j p^j y^j \\
\label{eq:bagcap}
\sum_{i : e \in I_i^j} d^j_i x^j_i & \leq c_e \ \ \ \ \mbox{for all edges $e$} \\
\label{eq:bag}
\sum_{i} x^j_i & \le y^j \ \ \ \ \mbox{for all bags $\D^j$} \\
\nonumber
0 & \leq x^j_i \leq 1 \ \ \ \ \mbox{for all demands $i$} \\
\nonumber
0 & \leq y^j \leq 1 \ \ \ \ \mbox{for all bags $\D^j$}
\end{align}

Let $x,y$ be an optimal solution to the LP above. Again, let $K$ be a large enough
integer such that $y^j = \frac{\alpha^j}{K}, x^j_i = \frac{\beta^j_i}{K}$, where $\alpha_j$ and
$\beta^j_i$ are integers for all $j$ and $i$. Now we consider an instance of {\ufpround} where
we have $\beta^j_i$ copies of the demand $D^j_i$. The only further restriction is that no two demands
from the same bag can get the same color. Inequality~(\ref{eq:bagcap}) implies that $\sum_{i : e \in I_i^j} d^j_i \beta^j_i \leq K c_e, \forall e \in E$. So the congestion bound
is $K$. We proceed as in the proof of Lemma~\ref{lem:small}, except that now we have $17K$ different solutions. When we consider the demand $D^j_i$, we ignore the solutions which contain a demand from the bag $\D^j$. Inequality~(\ref{eq:bag}) implies that $\sum_{i} \beta^j_i \le \alpha^j \le K, \forall j$. Hence, there will be at most $K$ such solutions.
For the remaining $16K$ solutions, we argue as in the proof of Lemma~\ref{lem:small}.
\end{proof}

\begin{theorem}
\label{thm:bag}
There is a $65$-approximation algorithm for the {\ufpbag} problem.
\end{theorem}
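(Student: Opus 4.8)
The plan is to mirror the proof of \autoref{thm:max}: split an optimal solution by demand size, apply the two preceding lemmas to the large-only and small-only sub-instances separately, and return whichever of the two computed solutions has larger profit. Since \autoref{lem:bagl} gives a $48$-approximation when all demands are large and \autoref{lem:bags} gives a $17$-approximation when all demands are small, and since $48+17=65$, this should yield the claimed factor.

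First I would fix an optimal solution $\opt$ of the given {\ufpbag} instance $\I$. Every bag from which $\opt$ selects a demand contributes exactly one demand, which is either large or small; this partitions the contributing bags into two groups and writes $\profit(\opt)=\profit^l(\opt)+\profit^s(\opt)$, where $\profit^l(\opt)$ is the profit from bags whose selected demand is large and $\profit^s(\opt)$ that from bags whose selected demand is small. Next I would form two sub-instances: $\I^\lar$, consisting of all bags with only their large demands retained, and $\I^\sm$, with only their small demands. The key observation is that the portion of $\opt$ that picks large demands is a feasible solution of $\I^\lar$ (it routes a feasible set and still picks at most one demand per bag), so the optimum of $\I^\lar$ is at least $\profit^l(\opt)$; symmetrically the optimum of $\I^\sm$ is at least $\profit^s(\opt)$.

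Then I would run the algorithm of \autoref{lem:bagl} on $\I^\lar$, obtaining a solution of profit at least $\frac{1}{48}\profit^l(\opt)$, and the algorithm of \autoref{lem:bags} on $\I^\sm$, obtaining a solution of profit at least $\frac{1}{17}\profit^s(\opt)$, and output the better of the two. To conclude I would use the elementary fact that $\max\left\{\frac{A}{\alpha},\frac{B}{\beta}\right\}\ge\frac{A+B}{\alpha+\beta}$, since the right-hand side is a weighted average of the two quantities on the left and hence at most their maximum; with $\alpha=48$, $\beta=17$, $A=\profit^l(\opt)$, $B=\profit^s(\opt)$ this gives a solution of profit at least $\frac{1}{65}\profit(\opt)$. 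Equivalently one can argue by cases on whether $\profit^l(\opt)\ge\frac{48}{65}\profit(\opt)$.

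I do not expect a serious obstacle here; the one point requiring care is checking that restricting $\opt$ to its large- (resp. small-) choosing bags really is feasible for $\I^\lar$ (resp. $\I^\sm$), i.e. that discarding a bag's demand never violates either the capacity constraints or the at-most-one-per-bag constraint. Both hold immediately, because removing selected demands only relaxes the capacity inequalities and leaves at most one chosen demand per bag. The genuine content of the theorem lies entirely in the two preceding lemmas; the present statement is merely their combination.
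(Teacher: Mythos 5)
Your proposal is correct and matches the paper's own argument, which simply invokes \autoref{lem:bagl} and \autoref{lem:bags} and then repeats the case analysis from \autoref{thm:max} (your inequality $\max\{A/\alpha,\,B/\beta\}\ge (A+B)/(\alpha+\beta)$ with $\alpha=48$, $\beta=17$ is exactly that analysis in closed form). Your extra check that restricting $\opt$ to its large-choosing (resp.\ small-choosing) bags stays feasible for the corresponding sub-instance is a point the paper leaves implicit, and it holds for the reason you give.
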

\begin{proof}
This follows from the two previous lemmas. We argue as in the proof of Theorem~\ref{thm:max}.
\end{proof}

\subsection{Running time}
For the instance containing only large demands, we are using the 2-approximation algorithms of Berman et al.~\cite{BermanD00} or Bar-Noy et al.~\cite{Bar-NoyGNS99}, both of which runs in polynomial-time. Hence, the instance containing only large demands can be solved in polynomial-time. For small demands, we have to find an optimal solution to the linear programming relaxation for {\ufpbag}. This can be done in polynomial-time using the ellipsoid method. Solving the constructed {\ufpround} instance using \autoref{lem:small} can also be done in polynomial-time. We can make $K$ polynomial in the input using the technique in \autoref{sec:ufpmax-time}. Hence, the overall running time of the algorithm is polynomial.

\section{Approximation Algorithm for Max-UFP on Trees}
Consider an instance $\I$ of {\ufpmax} on trees. We will show how the approximation algorithm for the {\ufpround} problem can be used to obtain a constant factor approximation algorithm for the {\ufpmax} problem.

\begin{theorem}
\label{thm:maxtree}
There is a $64$-approximation algorithm for the {\ufpmax} problem on trees.
\end{theorem}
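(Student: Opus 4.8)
The plan is to mirror the proof of Theorem~\ref{thm:max}: split the demands into the large set $\D^\lar$ and the small set $\D^\sm$, design a constant-factor algorithm for each class, and return the more profitable of the two solutions. As in Theorem~\ref{thm:max}, if one class yields a $\rho_L$-approximation and the other a $\rho_S$-approximation for the profit they respectively carry in an optimal solution $\opt$, then the better-of-two rule is a $(\rho_L+\rho_S)$-approximation: writing $\profit(\opt)=\profit^l(\opt)+\profit^s(\opt)$, whichever term dominates forces the corresponding algorithm's output to be at least $\profit(\opt)/(\rho_L+\rho_S)$. I will aim for $\rho_L=\rho_S=32$, giving the claimed factor of $64$. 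Throughout I use the fact that on a tree there is a unique path $I_i$ between $s_i$ and $t_i$, so the LP (UFP-LP) and the notions of congestion, bottleneck capacity, and small/large demands carry over verbatim from the path case.

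For the small demands I will reuse the convex-decomposition idea of Lemma~\ref{lem:maxs}, driven by the tree version of the small-demand {\ufpround} algorithm. First solve the LP relaxation of {\ufpmax} on the tree, obtaining an optimal fractional $x^\star$, and pick an integer $K$ with $x^\star_i=\alpha_i/K$ for integers $\alpha_i$. Build a {\ufpround} instance $\I'$ on the same tree containing $\alpha_i$ copies of each small demand $D_i$; the capacity constraint of the LP gives $\sum_{i:e\in I_i} d_i\alpha_i\le K c_e$, so every edge of $\I'$ has congestion at most $K$. Coloring $\I'$ with the tree small-demand routine (the $32$-approximation whose proof maintains a linear-in-$K$ number of solutions, indexed by the $s_i$- and $t_i$-critical edges) uses at most $32K$ colors, and each color class is feasible in the original instance. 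Since the colors partition all the copies, the total profit over the classes is $\sum_i w_i\alpha_i = K\sum_i w_i x^\star_i$, so the best class has profit at least $\frac{1}{32K}\cdot K\cdot\sum_i w_i x^\star_i\ge \profit^s(\opt)/32$, giving $\rho_S=32$.

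For the large demands the path dynamic program of Lemma~\ref{lem:maxl} has no immediate tree analogue, and this is where I expect the real work to lie. Every large demand satisfies $\frac14 c_{\min}<d_i\le c_{\min}$ by NBA, so I will round each $d_i$ up to $c_{\min}$ and round each capacity $c_e$ down to the nearest multiple of $c_{\min}$, turning the instance (after scaling by $c_{\min}$) into a unit-demand {\ufpmax} instance on the tree, for which Chekuri et al.~\cite{ChekuriMS07} give a $4$-approximation. A solution feasible after rounding is feasible before, since demands only grew and capacities only shrank, so it suffices to show the rounded optimum retains a constant fraction of $\profit^l(\opt)$. The key obstacle is that this is a genuinely multi-edge packing statement: increasing demands by a factor at most $4$ and decreasing capacities by a factor at most $2$ distorts the per-edge budget by a factor at most $8$, and I must convert this local distortion into a global guarantee that a $\tfrac18$-fraction of the profit survives feasibly. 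I plan to obtain this from the per-edge bound of Lemma~\ref{lem:maxl} on the number of large demands crossing any edge, combined with an averaging argument over the $8$ copies, yielding $\rho_L=8\cdot 4=32$. Combining $\rho_L=\rho_S=32$ through the better-of-two rule gives the $64$-approximation. As a cross-check, if one instead bounds the small-demand coloring by only $16K$ colors (the count appearing in the body of the tree small-demand proof), giving $\rho_S=16$, then pairing it with the direct $48$-approximation of \cite{ChekuriMS07} for arbitrary demands applied to $\D^\lar$ recovers the same final constant $64$.
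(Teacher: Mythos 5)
Your top-level plan (large/small split, better-of-two with $\rho_L=\rho_S=32$) is coherent, and your small-demand paragraph is sound --- it is exactly the paper's convex-decomposition argument restricted to small demands, with the correct count of $32K$ colors. The genuine gap is on the large-demand side. Your claim $\rho_L=32$ needs the following statement: after rounding demands up to $c_{\min}$ and capacities down to multiples of $c_{\min}$, some subset of $\opt$'s large demands carrying at least a $\tfrac{1}{8}$-fraction of $\profit^l(\opt)$ remains feasible. Your ``averaging argument over the $8$ copies'' implicitly asks that the rounded set, which has congestion at most $8$, decompose into $8$ feasible sets. On a path this is exactly \autoref{lem:nomikos} (congestion $r$ gives $r$ colors), but on a tree the only available guarantee --- the unit-demand coloring of Chekuri et al.\ used in Chapter 2 --- needs $4r$ colors, so congestion $8$ only yields $32$ feasible sets, a surviving fraction of $\tfrac{1}{32}$, and hence $\rho_L = 32 \cdot 4 = 128$, not $32$. \autoref{lem:maxl} does not repair this: it bounds the \emph{number} of large demands crossing a single edge, and gives no mechanism for feasibly repartitioning profit across a tree. (A correct route to $\rho_L=32$ does exist: scaling the indicator vector of $\opt$'s large demands by $\tfrac{1}{8}$ gives a feasible \emph{fractional} solution of the rounded LP of value $\profit^l(\opt)/8$, and the $4$-approximation of \cite{ChekuriMS07} is LP-relative --- but you never state or use that property.) Your closing cross-check is also broken: on trees each small demand has two critical edges, so the small-demand coloring requires $32r$ colors, consistent with the lemma's stated factor of $32$ (the ``$16r$'' in the body of the paper's tree lemma is an internal slip); thus $\rho_S=16$ is unavailable on trees, and pairing $\rho_S=32$ with the $48$-approximation gives $80$, not $64$.

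For comparison, the paper's proof of \autoref{thm:maxtree} avoids the large/small split at the {\ufpmax} level entirely: it solves the LP over \emph{all} demands, writes $x^\star_i = \alpha_i/K$, creates $\alpha_i$ copies of every demand to obtain a {\ufpround} instance of congestion at most $K$, colors it with at most $64K$ colors by \autoref{thm:tree} (which internally handles large and small demands), and outputs the most profitable color class, of profit at least $\tfrac{1}{64}\sum_i w_i x^\star_i \ge \tfrac{1}{64}\cdot\opt$. The simplest repair of your argument is to do the same on the large side: apply your own convex-decomposition step to the large demands using the paper's $32r$-coloring lemma for large demands on trees, which yields $\rho_L=32$ cleanly --- at which point the better-of-two detour is unnecessary, since the single combined decomposition already delivers the factor $64$.
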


\begin{proof}
We write the following natural LP relaxation for this problem -- a variable $x_i$
for demand $D_i$ which is 1 if we include it in our solution, and 0 otherwise.
\begin{align}
\nonumber
\max & \sum_i w_i x_i \\
\label{eq:captree}
\sum_{i : e \in I_i} d_i x_i & \leq c_e \ \ \ \ \mbox{for all edges $e$} \\
\nonumber
0 & \leq x_i \leq 1 \ \ \ \ \mbox{for all demands $i$}
\end{align}

Let $x^\star$ be an optimal solution to the LP relaxation. Let $K$ be an integer such that all
 the variables
$x^\star_i$ can be written as $\frac{\alpha_i}{K}$ for some integer $\alpha_i$. Now we construct an
instance $\I'$ of {\ufpround} as follows. For each demand $D_i$ in $\I$, we create $\alpha_i$
copies of it. Rest of the parameters are same as those in $\I$. First observe that inequality~(\ref{eq:captree})
implies that $\sum_{i : e \in I_i} d_i \alpha_i \leq K c_e, \forall e \in E$. Thus, the congestion of each edge in $\I'$ is at most $K$. Using \autoref{thm:tree}, we can color the demands with at most $64K$ colors. It follows that the best solution
among these $64K$ solutions will have profit at least $\frac{1}{64} \cdot \sum_i w_i x_i^\star$.
\end{proof}

Although this is worse than the 48-approximation algorithm of Chekuri et al. \cite{ChekuriMS07}, this illustrates the power of our approach. We can handle all these problems in a unified framework.

\subsection{Running time}
We can find an optimal solution to the linear programming relaxation for {\ufpmax} on trees in polynomial-time using the ellipsoid method. Constructing the {\ufpround} instance $\I'$ can also be done in polynomial-time. The {\ufpround} instance $\I'$ can be solved in polynomial-time as shown in \autoref{sec:roundtree}. We can make $K$ polynomial in the input using the technique in \autoref{sec:ufpmax-time}. Hence, the overall running time of the algorithm is polynomial.

\chapter {Online Algorithms for the Interval Coloring Problem}
\label {chap4}

In this chapter, we consider the {\ufpround} problem in an on-line setting. As before, we are given a path $G = (V,E)$ with edge capacities $c_e$ on edge
$e$. Requests arrive in an on-line manner. A request $R_i$ is specified by a triplet $(s_i, t_i, d_i)$, where $s_i$ is the starting vertex, $t_i$ is
the destination vertex and $d_i$ is the actual bandwidth requirement. We shall also use $I_i$ to denote the interval $[s_i, t_i]. $
The on-line algorithm needs to color the demand on its arrival, such that
the set of demands with the same color can be routed feasibly in the path $G$. The goal is to minimize the number of colors. Again, we shall assume that
the requests satisfy the no-bottleneck assumption (NBA). Indeed, without this assumption,  it is known that any deterministic on-line algorithm will have  competitive ratio of $\Omega\left(\max \left\{\log \log n, \log \log \log \left(\frac{c_{\max}}{c_{\min}}\right)\right\}\right)$, where
$c_{\max}$ and $c_{\min}$ are the maximum and minimum edge capacities of the path respectively~\cite{EpsteinEL09}.
\newcommand{\hatc}{{\hat c}}
\section{Preliminaries}
We fix a time $n$, and consider the requests which have arrived till time $n$, i.e., $R_1, \ldots, R_n$. Recall that for an edge $e$, the load $l_e$
on $e$ is the total demand of requests which contain $e$, i.e., $\sum_{i : e \in I_i} d_i$. Also, the congestion on $e$, $r_e =\left\lceil \frac{l_e}{c_e}\right\rceil$. Let $r = \max_e r_e$ be the maximum congestion on any edge.
 Clearly, $r$ is a lower bound on the minimum number of colors required to color the requests.
 For a set of requests $S$, let $l_e(S) = \sum_{i:e \in R_i, R_i \in S} d_i$ be the load put by the requests in $S$ on edge $e$.

We can assume without loss of generality that $c_{\min} = 1$. Since NBA is satisfied, this implies $d_{\max} \leq 1$. We now round down the edge capacities $c_e$ to the nearest power of 2. Let $\hatc_e$ denote these rounded capacities. Note that $\hatc_{\min}$ remains 1.

The \emph{bottleneck edge} $b_i$ of a request $R_i$ is an edge of minimum capacity (with respect to $\hatc$) in $I_i$, i.e., $b_i = \argmin_{e \in I_i} \hatc_e$. The capacity of the bottleneck edge, i.e.,
 $\hatc(b_i)$ is called the \emph{bottleneck capacity} of the request $R_i$. The \emph{class} of a request $R_i$ is defined as $\ell_i = \log_2 \hatc(b_i)$. Note that the class of a request can be between $0$ and $\log_2 \hatc_{\max}$.

For a request $R_i$ in class $j \ge 1$, we shall call it a {\em small demand} if $d_i \le \min(1, 2^{j-3})$. Since, $\hatc(b_i) = 2^j$, $d_i \le \frac{\hatc(b_i)}{8}$. For a demand $d_i$ in class $0$, we call it a small demand if $d_i \le \frac{1}{4}$. Since, $\hatc(b_i) = 1$, $d_i \le \frac{\hatc(b_i)}{4}$. Otherwise, we shall call the request a {\em large demand}. Note that large demands can exist only in classes 0, 1 and 2 (see the table below).

\begin{table}[ht]
\begin{center}
\begin{tabular}{|c|c|c|c|}
\hline
\textbf{Class} & \textbf{Small demands} & \textbf{Large demands} & \textbf{Bottleneck capacity} \\
\hline
0 & $\left(0, \frac{1}{4}\right]$ & $\left(\frac{1}{4}, 1\right]$ & 1 \\
1 & $\left(0, \frac{1}{4}\right]$ & $\left(\frac{1}{4}, 1\right]$ & 2 \\
2 & $\left(0, \frac{1}{2}\right]$ & $\left(\frac{1}{2}, 1\right]$ & 4 \\
3 & $(0, 1]$ & \textsc{none} & 8 \\
\vdots & \vdots & \vdots & \vdots \\
$j$ & $(0, 1]$ & \textsc{none} & $2^j$ \\
\hline
\end{tabular}
\caption{Schematic representation of classes and capacities of demands}
\end{center}
\end{table}

\section{Our algorithm}
In this section, we give a $58$-competitive algorithm for the online interval coloring problem. When a request comes, we determine whether it is small or large. We handle small demands and large demands separately. For small demands, we give a $32$-competitive algorithm. For large demands, we give a $26$-competitive algorithm. The details of the algorithms are given in the following sections.

\subsection{Small demands}
In this section, we give a $32$-competitive algorithm for small demands. For this, we first consider a special case when all edges have the same capacity. We shall then show that for non-uniform capacities, we
can derive several instances of uniform capacity instances. We can then apply our algorithm for uniform capacities to each of these instances.

\subsubsection{Uniform Capacities}
\label{sec:uniform}
We consider the online {\ufpround} problem for the special case when all edges have capacity $1$ and each demand $d_i$ is at most $1/4$. This is without any loss of generality, since we can always scale the demands with the common capacity $c$ to make the capacity of each edge to be $1$. We call it the {\ufprounduniform} problem.

 We shall assign each
arriving request a {\em level}. Let $S_l$ be the set of requests which have been assigned level $l$. We shall show that the set of requests in each level can be colored with one color. Suppose we have already processed
requests $R_1, \ldots, R_{i-1}$. Suppose these requests have been partitioned into levels $S_1, \ldots, S_{k_i}$. When the request $R_i$ arrives, we find the smallest index $k$ such that for every edge $e \in I_i$,
the total load of the requests in $\cup_{k'=1}^k S_{k'}$ (including $R_i$), i.e., $l_e(\cup_{k'=1}^k S_{k'} \cup \{R_i\})$, is at most  $\frac{k}{4}$. If no such index is found,  start a new level $k_i+1$, and assign $R_i$
to $S_{k_i+1}$. For an edge $e$ and level $k$, we say that $e$ is {\em critical} for $R_i$ on level $k$, if $e \in I_i$ and $l_e(\cup_{k'=1}^k S_{k'} \cup \{R_i\}) > \frac{k}{4}$.
Note that $e$ is an edge which prevented $R_i$ to be put on level $k$. The complete algorithm is given as \autoref{algo-small}.

\begin{algorithm}
\textbf{Algorithm} Color\;
\textbf{Input}: Demands $R_i = (s_i,t_i,d_i)$ coming online\;
\textbf{Output}: A feasible coloring of demands\;
\tcp{$k_i$ is the number of levels used.}
$k_i \leftarrow 1$\;
\While{there are still requests in the input}
{
	let $R_i = (s_i,t_i,d_i)$ be the next request\;
	find the smallest level $k \in \{1,\ldots,k_i\}$ such that for every edge $e \in I_i$,
the total load of the requests in $\cup_{k'=1}^k S_{k'}$ (including $R_i$), i.e., $l_e(\cup_{k'=1}^k S_{k'} \cup \{R_i\})$, is at most  $\frac{k}{4}$\;
	\If{no such level is found}
	{
		$k_i = k_i+1$\;
		go to line 7.
	}
	\tcp{assign $R_i$ to level $k$.}
	$S_k \leftarrow S_k \cup \{R_i\}$\;
}
\caption{An online algorithm for $\frac{1}{4}$-small demands and unit edge capacity}
\label{algo-small}
\end{algorithm}

We now analyze this algorithm. Let $r$ denote the maximum congestion of an edge if we consider the requests $R_1, \ldots, R_n$. As argued earlier, $r$ is a lower bound on the minimum number of colors needed to color
these demands.
\begin{lemma}
\label{small-uniform1}
The number of levels $k_n$ is at most $4r$.
\end{lemma}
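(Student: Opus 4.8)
The plan is to charge the opening of the very last level against the congestion of a single \emph{critical} edge. First I would record two preliminary facts. Since $r_e = \lceil l_e/c_e \rceil$ and every capacity is $c_e = 1$, the maximum congestion $r$ is an integer, and moreover $l_e = \lceil l_e \rceil \cdot \tfrac{1}{1} \le r_e \le r$ for every edge; more directly, $l_e \le r_e \, c_e \le r$. So no edge carries total load exceeding $r$.

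Next I would focus on the request $R_i$ whose arrival caused the final level $k_n$ to be opened. At the moment $R_i$ is processed, exactly the levels $1, \ldots, k_n - 1$ exist, and by the design of the algorithm $R_i$ could not be placed on any smaller index; in particular the placement test failed for $k = k_n - 1$. Hence there is an edge $e \in I_i$ that is critical for $R_i$ on level $k_n - 1$, meaning $l_e\!\left(\cup_{k'=1}^{k_n - 1} S_{k'} \cup \{R_i\}\right) > \frac{k_n - 1}{4}$. The key observation is that the left-hand side only counts demands of requests that contain $e$ and were seen no later than $R_i$; these requests are a subset of $\{R_1, \ldots, R_n\}$, all containing $e$, and $R_i$ itself contains $e$, so the quantity is at most the final total load $l_e$ on $e$. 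Combining with $l_e \le r$ yields $\frac{k_n - 1}{4} < r$, i.e. $k_n - 1 < 4r$.

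Finally, since $k_n - 1$ and $4r$ are both integers, the strict inequality forces $k_n - 1 \le 4r - 1$, hence $k_n \le 4r$, as claimed. The argument is short; the only point that genuinely needs care is the step bounding the critical-edge load by $l_e$ — one must verify that the requests entering $l_e\!\left(\cup_{k'=1}^{k_n-1} S_{k'} \cup \{R_i\}\right)$ are indeed all among $R_1,\ldots,R_n$ and all contain $e$, so that their summed demand cannot exceed the global load on $e$. The concluding integrality step (to get $4r$ rather than $4r+1$) is where the ceiling in the definition of $r$ is quietly used.
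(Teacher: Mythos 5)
Your proof is correct and takes essentially the same route as the paper's: both isolate the request that forced level $k_n$ to open, extract a critical edge $e$ on level $k_n-1$ with $l_e\left(\bigcup_{k'=1}^{k_n-1} S_{k'} \cup \{R\}\right) > \frac{k_n-1}{4}$, bound that quantity by $r$ using the congestion definition, and conclude $k_n \le 4r$ by integrality. Your explicit justification that $r$ is an integer (via the ceiling in its definition) and that the critical-edge load is dominated by the final load $l_e$ merely spells out steps the paper leaves implicit.
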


\begin{proof}
Consider the first request $R$ which gets assigned to level $S_{k_n}$. It must be the case there is a critical edge $e$ for $R$ on level $k_n-1$. So,
$l_e\left(\left(\bigcup_{k'=1}^{k_n-1} S_{k'}\right) \cup \{R\}\right) > \frac{1}{4}(k_n-1)$. On the other hand, since the maximum congestion on any edge is $r$, $l_e\left(\left(\bigcup_{k'=1}^{k_n-1} S_{k'}\right) \cup \{R\}\right) \le r$. Together, this implies that $\frac{1}{4}(k_n-1) < r$. Hence, $k_n < 4r + 1$, which implies that $k_n \le 4r$, since $k_n$ is an integer. This proves the desired result.
\end{proof}

\begin{lemma}
\label{small-uniform2}
For a level $k$ and edge $e$, the load on $e$ by demands in $S_k$ is at most $1$.
\end{lemma}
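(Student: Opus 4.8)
The plan is to argue by contradiction, showing that if some level $k$ and edge $e$ had $l_e(S_k) > 1$, then the placement rule would force the cumulative load through level $k$ to exceed $k/4$ on a suitable congested edge, which cannot happen. Write $T_k = \bigcup_{k'=1}^k S_{k'}$ for the set of requests placed on levels at most $k$. The case $k=1$ is immediate: a request is admitted to level $1$ only if $l_e(T_1 \cup \{R\}) \le \frac14$ on every edge $e \in I_R$, so taking the last request on level $1$ through a fixed edge gives $l_e(S_1) = l_e(T_1) \le \frac14 \le 1$. So assume $k \ge 2$. The key local fact I would extract from the algorithm is that every request $R$ placed on level $k$ failed level $k-1$, so it has a \emph{critical edge} $c(R) \in I_R$ with $l_{c(R)}(T_{k-1} \cup \{R\}) > \frac{k-1}{4}$ at the moment $R$ was processed; since demands are at most $\frac14$ (small demands with unit capacity), this yields $l_{c(R)}(T_{k-1}) > \frac{k-1}{4} - \frac14 = \frac{k-2}{4}$ in the final configuration as well, because loads only grow over time.

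The heart of the argument is a ``transfer'' of load onto a single edge. Fix $e$ and let $A$ be the set of requests in $S_k$ containing $e$; suppose for contradiction that $\sum_{R \in A} d_R = l_e(S_k) > 1$. Each $R \in A$ contains $e$ and its critical edge $c(R)$, which lies in $I_R$ either weakly to the left of $e$ or strictly to its right; split $A = \mathcal{L} \sqcup \mathcal{R}$ accordingly. Since the total load exceeds $1$, one side, say $\mathcal{R}$, carries load strictly more than $\frac12$. Let $R_0 \in \mathcal{R}$ be a request whose critical edge $e_0 := c(R_0)$ is leftmost (closest to $e$) among all of $\mathcal{R}$. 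Because each $R \in \mathcal{R}$ is an interval containing $e$ and reaching rightward at least as far as $e_0$, it must contain $e_0$; hence every request of $\mathcal{R}$ crosses $e_0$ and $l_{e_0}(S_k) \ge \sum_{R \in \mathcal{R}} d_R > \frac12$. On the other hand $e_0$ is the critical edge of $R_0$, so $l_{e_0}(T_{k-1}) > \frac{k-2}{4}$. Adding the two contributions gives $l_{e_0}(T_k) = l_{e_0}(T_{k-1}) + l_{e_0}(S_k) > \frac{k-2}{4} + \frac12 = \frac{k}{4}$, and the symmetric statement holds if instead $\mathcal{L}$ carried load more than $\frac12$.

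This contradicts the invariant that I claim the placement rule maintains, namely $l_f(T_k) \le \frac{k}{4}$ for every edge $f$ and level $k$: whenever a request is admitted to its smallest feasible level, the cumulative load through that level stays within the $\frac{k}{4}$ budget on every edge it uses. I expect the main obstacle to be precisely the rigorous proof of this cumulative bound, together with the bookkeeping between the load at a request's processing time and the load in the final configuration (the admission guarantee $l_e(T_k \cup \{R\}) \le k/4$ is only asserted at processing time, whereas the transfer above needs it for the final loads). Establishing $l_f(T_k) \le \frac{k}{4}$ seems to require an induction exploiting the fact that an edge can become heavily loaded on a single level only once its lower levels have been correspondingly filled --- equivalently, that a request over $f$ can never enter a level $j$ unless $l_f(T_j)$ stays at most $j/4$ --- and formalizing this self-correcting behaviour across arbitrarily interleaved online arrivals is the delicate step; the left/right critical-edge transfer above is then the clean combinatorial core that converts a hypothetical single-level overflow into a violation of that budget.
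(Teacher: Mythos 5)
Your reduction to the critical edge $e_0$ (split the level-$k$ requests through $e$ by the side of their critical edge, pass to the extreme one) is sound and is in substance the same step the paper takes, but the final step rests on an invariant that is actually \emph{false}: in the final configuration one can have $l_f(T_k) > \frac{k}{4}$. The admission-time guarantee does not persist, because requests arriving later can be admitted to \emph{lower} levels over an edge after level $k$ has already been loaded there. Concrete counterexample on a path with three edges $g_1, f, g_2$ (left to right), unit capacities, all demands $\frac14$: requests arrive in the order $A_1$ spanning $\{g_1\}$, $A_2$ spanning $\{g_1,f\}$, $A_3$ spanning $\{g_2\}$, $A_4$ spanning $\{f,g_2\}$, $A_5$ spanning $\{f\}$. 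The algorithm places $A_1$ at level $1$; $A_2$ fails level $1$ on $g_1$ and goes to level $2$; $A_3$ goes to level $1$; $A_4$ fails level $1$ on $g_2$ and goes to level $2$ (on $f$ and $g_2$ the cumulative load is exactly $\frac12 \le \frac24$); finally $A_5$ is admitted to level $1$ since $f$ carries no level-$1$ load. In the end $l_f(S_1) = \frac14$ and $l_f(S_2) = \frac12$, so $l_f(T_2) = \frac34 > \frac24$. So the inequality $l_{e_0}(T_k) \le \frac{k}{4}$ you need for the contradiction simply does not hold for final loads, and no induction will establish it; the proof as written collapses at exactly the step you flagged as delicate.

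The repair --- and this is what the paper's proof does --- is to bound $l_{e_0}(S_k)$ directly by evaluating both quantities at a single \emph{admission time}, rather than appealing to any final-configuration invariant. Let $R^*$ be the last request of $S_k$ containing $e_0$ to arrive. At the moment $R^*$ is admitted, the placement rule guarantees $l_{e_0}\bigl(T_k \cup \{R^*\}\bigr) \le \frac{k}{4}$. On the other hand, $e_0$ is critical on level $k-1$ for some $R_0 \in S_k$ containing $e_0$, and $R_0$ arrived no later than $R^*$; at $R_0$'s arrival $l_{e_0}(T_{k-1}) > \frac{k-1}{4} - d_{R_0} \ge \frac{k-2}{4}$, and this \emph{lower} bound persists to $R^*$'s arrival because loads only grow (monotonicity is safe for lower bounds, which is where your ``loads only grow'' remark is legitimately used). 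Subtracting, at $R^*$'s admission $l_{e_0}(S_k) \le \frac{k}{4} - \frac{k-2}{4} = \frac12$, and this value is final since no later member of $S_k$ contains $e_0$. Plugging this $\frac12$ bound on edges that are critical for some member of $S_k$ into your left/right transfer gives load at most $\frac12 + \frac12 = 1$ on an arbitrary edge, which is the paper's argument verbatim: the upper bound is invoked only where the algorithm guarantees it (an admission instant), and everything after that instant can only change quantities in the harmless direction.
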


\begin{proof}
First consider the case $k=1$. For an edge $e$, let $R_j$ be the last demand  containing $e$ which was added to $S_1$. Then, by the algorithm, total load on $e$ (including $R_j$) is at most $1/4$.

Now, consider $k > 1$. We call an edge $e$ critical if it is critical for some demand in $S_k$ on level $k-1$. Note that
each demand in $S_k$ must contain at least one critical edge (otherwise it should have been added to level $k-1$ or earlier). Fix a critical edge $e$. Let the demands containing $e$ which get added to $S_k$ (in the order of arrival) be $R_{k_1}, \ldots, R_{k_j}$. Note that $e$ must be critical for $R_{k_j}$ on level $k-1$. Hence, $l_e \left( \cup_{k'=1}^{k-1} S_{k'} \right)  \geq \frac{k-1}{4}-d_j$, where $d_j$ is the demand of request $R_{k_j}$. Since $R_{k_j}$ is added
to $S_k$, it must be the case that $l_e \left( \cup_{k'=1}^{k} S_{k'} \right) \leq \frac{k}{4}$. Subtracting the second inequality from the first we get, $l_e(S_k) \leq \frac{1}{4} + d_j \leq \frac{1}{2}$, since $d_j \leq \frac{1}{4}$.

Now consider an edge $e$ which is not critical. Let $e_L$ and $e_R$ be the nearest critical edges on its left and right respectively. Clearly, any request in $S_k$ containing $e$ must contain either $e_L$ or $e_R$. But 
the total load on the latter edges is at most $1/2$. Hence, the load on $e$ is at most $1$. This proves the lemma. 
\end{proof}

\noindent
We now conclude with the  main result of this section. 
\begin{lemma}
\label{small-uniform4}
The number of colors required by our algorithm is at most $4r$. Hence, it is a $4$-competitive algorithm for the {\ufprounduniform} problem.
\end{lemma}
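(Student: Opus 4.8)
The plan is to simply combine the two preceding lemmas. First I would observe that \autoref{small-uniform2} already does the essential work: for every level $k$ and every edge $e$, the load placed on $e$ by the demands in $S_k$ is at most $1$, which is exactly the (uniform) capacity of every edge in the \ufprounduniform{} instance. Hence the requests assigned to a common level $S_k$ can be routed simultaneously without violating any edge capacity, i.e., each level is itself a feasible packing. Assigning a distinct color to each level therefore produces a feasible coloring, and the number of colors it uses is precisely the number of levels $k_n$.

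Next I would bound the number of levels. By \autoref{small-uniform1}, we have $k_n \le 4r$, so the coloring just described satisfies $\alg \le 4r$. To turn this into a competitive ratio, I would recall (as argued earlier in the section) that the maximum congestion $r$ is a lower bound on the number of colors required by \emph{any} feasible coloring, including the optimal offline one: on an edge $e$ the total demand exceeds $c_e$ by a factor of up to $r_e$, and since each color can carry load at most $c_e$ across $e$, at least $r_e$ colors are forced there, whence $\opt \ge r$. Combining $\alg \le 4r$ with $\opt \ge r$ gives $\alg \le 4 \cdot \opt$, which is exactly $4$-competitiveness.

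There is no genuine obstacle left at this stage: all the delicate reasoning—the case analysis over critical and non-critical edges that keeps each level's load bounded by $1$, and the congestion argument that caps the number of levels—has already been carried out in \autoref{small-uniform1} and \autoref{small-uniform2}. The present statement is a direct corollary of those two lemmas together with the observation that $r$ lower-bounds the optimum, so the proof should be only a few lines of assembly.
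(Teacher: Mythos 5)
Your proposal is correct and matches the paper's proof exactly: both combine \autoref{small-uniform1} (at most $4r$ levels) with \autoref{small-uniform2} (each level carries load at most $1$ on every edge, hence is feasible with one color), and then invoke the fact, stated earlier in the section, that $r$ lower-bounds the optimal number of colors to conclude $4$-competitiveness. Your write-up is merely a bit more explicit about the $\opt \ge r$ step, which the paper leaves as a pointer to the preceding discussion.
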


\begin{proof}
From \autoref{small-uniform1}, we know that the number of levels is at most $4r$.  \autoref{small-uniform2} shows that we can  color  the requests in each level using one color. Hence, the number of colors required  is at most $4r$.
\end{proof}

\subsection{Algorithm for Small Demands}
We now describe our algorithm for small demands, where we will use the $4$-competitive algorithm for the {\ufprounduniform} problem. For each class $l$, we create a new instance of the {\ufprounduniform} problem $\I_l$, where all requests are of class $l$. We shall use the algorithm of the previous section to color the demands in $\I_l$. If $l=0$, the path in $\I_0$ is the same as the path $G$, but we set all edge capacities to 1.
Now consider the case $l \ge 1$.
We first contract all edges  $e$ in $G$ for which $\hatc_e < 2^l$. For the remaining edges, we
set their capacity to $2^{l-1}$. This gives the path in instance $\I_l$. Observe that in $\I_l$, the demands $d_i$ are at most $2^{l-3}$, and hence at most $1/4$ times the capacity of the edges (by definition of small demands), and so $\I_l$ is indeed an instance of the {\ufprounduniform} problem. Note that if a demand of class $l$ contains an edge $e$, then $\hatc_e \geq 2^l$, and
so this edge will not get contracted in the path in $\I_l$. 

We can now describe the algorithm for coloring small demands. When a demand of class $l$ arrives, we color it using \autoref{algo-small} on $\I_l$. Hence, the number of colors used by our algorithm is the maximum over
all values of $l$ of the number of colors needed for coloring $\I_l$.  

Since a particular color may be present in several of the colorings for the instance $\I_l$, we need to show that we can indeed put together the requests which have been colored with this color in different instances $\I_l$.
\begin{lemma}
For a color $c$, let $S_l^c$ be the requests of class $l$ which get colored with $c$ (in $\I_l$). Then $\cup_l S_l^c$ form a feasible set of requests in $G$ with edge capacities $c_e$. 
\end{lemma}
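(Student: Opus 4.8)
The plan is to fix an edge $e$ and a color $c$, and to bound the total load $\sum_l l_e(S_l^c)$ that the color-$c$ requests of all classes place on $e$, showing it never exceeds $c_e$. The key structural fact is that a demand of a given class can only occupy edges of large enough capacity, so on any fixed edge only finitely many classes contribute, and the per-class capacity budgets form a geometric series that sums to exactly $\hatc_e$.

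First I would recall that a request of class $l$ has bottleneck capacity $\hatc(b_i) = 2^l$, so every edge in its interval $I_i$ satisfies $\hatc_e \ge 2^l$. Writing $\hatc_e = 2^m$ (so that $2^m \le c_e < 2^{m+1}$), this shows only requests of classes $l \le m$ can contain $e$; requests of class $l > m$ contribute nothing to $l_e(S_l^c)$ — consistently, the edge $e$ is contracted away in the corresponding instance $\I_l$, since $\hatc_e = 2^m < 2^l$. For each $l \le m$, on the other hand, the edge $e$ survives in $\I_l$ (because $\hatc_e = 2^m \ge 2^l$ means the contraction condition $\hatc_e < 2^l$ fails), carrying capacity $1$ when $l = 0$ and $2^{l-1}$ when $l \ge 1$.

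Next I would invoke feasibility within a single instance. Since $S_l^c$ is precisely one color class of the coloring that \autoref{algo-small} produces on $\I_l$, the (scaled) guarantee of \autoref{small-uniform2} and \autoref{small-uniform4} bounds the load of $S_l^c$ on $e$ by the capacity of $e$ in $\I_l$; that is, $l_e(S_0^c) \le 1$ and $l_e(S_l^c) \le 2^{l-1}$ for $1 \le l \le m$. Summing over the contributing classes then gives
\[
\sum_{l=0}^{m} l_e(S_l^c) \;\le\; 1 + \sum_{l=1}^{m} 2^{l-1} \;=\; 1 + (2^m - 1) \;=\; 2^m \;=\; \hatc_e \;\le\; c_e .
\]
As $e$ and $c$ were arbitrary, $\cup_l S_l^c$ respects every edge capacity $c_e$, so it is a feasible set of requests in $G$.

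The main point to get right is the bookkeeping among the three per-class quantities: the capacity $2^{l-1}$ assigned to surviving edges in $\I_l$, the feasibility bound inherited from the uniform-capacity analysis, and the original rounded capacity $\hatc_e = 2^m$. The geometric sum is what makes this tight — the budgets $1, 1, 2, 4, \ldots, 2^{m-1}$ telescope to exactly $2^m$, leaving no slack — so the only real subtleties are treating class $0$ separately (capacity $1$ rather than $2^{-1}$) and confirming that an edge of capacity below $2^l$ is contracted in $\I_l$ precisely in the cases where no class-$l$ demand could have used it anyway.
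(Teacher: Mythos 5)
Your proof is correct and follows essentially the same route as the paper's: fix an edge $e$ with $\hatc_e = 2^m$, note that no demand of class $l > m$ can contain $e$, bound the load of each surviving class by the capacity of $e$ in $\I_l$, and sum the geometric series of budgets to $2^m = \hatc_e \le c_e$. If anything, your bookkeeping is slightly more careful than the paper's, which writes the total as $1 + 2 + \cdots + 2^{k-1} = 2^k$ and glosses over the fact that $\I_0$ and $\I_1$ each contribute budget $1$ --- a point you handle explicitly.
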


\begin{proof}
Fix an edge $e$ with $\hatc_e = 2^k$. Then this edge appears in $\I_0, \ldots, \I_k$ with edge capacities $1, 2, \ldots, 2^{k-1}$. Further, no demand in $\cup_{l \geq k+1} S_l^c$ contains $e$. 
Hence, the total load on $e$ due to the demands in $\cup_l S_l^c$ is at most $1 + 2 + \cdots + 2^{k-1} = 2^k \leq c_e$. 
\end{proof}

Let $r$ be the maximum congestion of any edge on the instance containing only the small demands. 
\begin{lemma}
\label{small1}
The number of colors used by our algorithm is at most $32r$. 
\end{lemma}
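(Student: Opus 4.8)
The plan is to reduce the claim to a bound on the congestion of each uniform-capacity instance $\I_l$ and then control that congestion by the global congestion $r$. Since a fixed color slot is shared across all the instances $\I_l$, the total number of colors our algorithm uses is exactly $\max_l(\text{number of colors needed for }\I_l)$. By \autoref{small-uniform4}, the algorithm of the previous subsection colors $\I_l$ with at most $4r_l$ colors, where $r_l$ is the maximum congestion of an edge in $\I_l$. Hence it suffices to show that $r_l \le 8r$ for every class $l$, which immediately gives the bound $32r$.

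To this end I would fix a class $l \ge 1$ and an edge $e$ of $\I_l$ attaining the maximum congestion $r_l$. Recall that in $\I_l$ every surviving edge has capacity $2^{l-1}$. Writing $l_e^{(l)}$ for the total demand of class-$l$ requests passing through $e$, we have $r_l = \lceil l_e^{(l)}/2^{l-1}\rceil$, so the task becomes to bound $l_e^{(l)}$. The key structural observation is that each class-$l$ demand $D_i$ through $e$ has its bottleneck edge $b_i$ of class exactly $l$, so $\hatc(b_i)=2^l$ and therefore $c_{b_i} < 2^{l+1}$, and this bottleneck edge lies inside $I_i$ together with $e$.

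The main step is a left/right decomposition of the class-$l$ demands through $e$ into the set $A$ of those whose bottleneck edge lies at or to the left of $e$, and the set $B$ of those whose bottleneck edge lies strictly to the right of $e$. Let $f$ be the rightmost bottleneck edge occurring among the demands of $A$. For any $D_i \in A$ we have $b_i \le f \le e$, and since $I_i$ is a contiguous interval containing both $b_i$ and $e$, it must also contain $f$; thus every demand of $A$ passes through $f$. Consequently $\sum_{D_i \in A} d_i \le l_f \le r_f\,c_f < r\,2^{l+1}$, using $r_f \le r$ and $c_f < 2^{l+1}$. Symmetrically, letting $g$ be the leftmost bottleneck edge among the demands of $B$, every demand of $B$ passes through $g$ and $\sum_{D_i \in B} d_i < r\,2^{l+1}$. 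Adding the two bounds gives $l_e^{(l)} < r\,2^{l+2}$, whence $r_l = \lceil l_e^{(l)}/2^{l-1}\rceil \le 8r$. The case $l=0$ is identical except that $\I_0$ has unit capacities and the bottleneck edges have capacity below $2$, which only yields the stronger bound $r_0 \le 4r$.

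The step I expect to be the crux is the decomposition argument: once the class-$l$ demands through $e$ are split according to the side on which their bottleneck edge sits, I must verify that each side genuinely shares a single class-$l$ edge ($f$ or $g$) through which its entire load passes. This is precisely what allows me to charge all of $l_e^{(l)}$ to the congestion at two edges of capacity below $2^{l+1}$, rather than to the capacity $c_e$ of $e$ itself, which may be far larger than $2^l$ and would give a useless bound. Everything else is routine bookkeeping with the ceiling function and the definition of small demands.
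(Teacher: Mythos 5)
Your proof is correct, and it reaches the key inequality $r_l \le 8r$ by a genuinely different route than the thesis does. The paper's proof introduces an auxiliary instance $\hat{\I}_l$ (same as $\I_l$ but with capacities $\hatc_e$) and argues in two stages: first $r_l \le 2\hat{r}_l$, via a case split on whether $\hatc_e = 2^l$ or $\hatc_e > 2^l$, charging demands through a higher-class edge to the \emph{nearest} edges $e_L, e_R$ of rounded capacity $2^l$ on either side; then $\hat{r}_l \le 4r$, accounting for the factor $2$ lost in rounding capacities down to powers of $2$ and the factor $2$ lost in halving them to $2^{l-1}$. You instead bound the class-$l$ load $l_e^{(l)}$ on an arbitrary surviving edge in a single step, by splitting the demands according to the side on which their \emph{bottleneck} edge lies and charging each side to the extremal bottleneck edge ($f$ rightmost on the left, $g$ leftmost on the right), using $c_f, c_g < 2^{l+1}$ and $l_f \le r_f c_f \le r c_f$ directly in the original instance. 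Both arguments rest on the same structural fact --- every class-$l$ demand through $e$ must cross an edge of rounded capacity exactly $2^l$ on one side of $e$, and its interval, being contiguous, must then contain any intermediate such edge --- but your packaging buys a cleaner proof: no auxiliary instance, no case split on $\hatc_e$, a uniform treatment of the edge attaining $r_l$ (even when $e$ is itself a bottleneck edge, in which case $f=e$), and as a bonus the sharper bound $r_0 \le 4r$ for class $0$, which the paper's factor-of-$8$ accounting does not expose. The only point to be careful about, which you do handle, is that the ceiling in $r_l = \lceil l_e^{(l)}/2^{l-1}\rceil$ combines with the \emph{strict} inequality $l_e^{(l)} < 2^{l+2} r$ to give $r_l \le 8r$ rather than $8r+1$.
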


\begin{proof}
Fix a class $l$. Let $r_l$ be the maximum congestion of an edge in $\I_l$. Let $r$ denote the maximum congestion of any edge in the original instance $\I$. We first argue that 
$r_l \leq 8 r$. We define another instance $\hat{\I}_l$, which is the same as $\I_l$ except that the edge $e$ has capacity $\hatc_e$. Let $\hat{r}_l$ be the maximum congestion of any edge in $\hat{\I}_l$. 
We first argue that $r_l \leq 2 \hat{r}_l$. Indeed, if $e$ is an edge in $\I_l$, then we know that $\hatc_e \geq 2^l$. We also know that any demand of class $l$ must contain 
at least one edge $e$ with $\hatc_e = 2^l$. So, for an edge $e \in \I_l$, either (i) $\hatc_e = 2^l$, in which case congestion on $e$ is at most $\hat{r}_l$, or (ii) $\hatc_e > 2^l$. 
In the latter case, let $e_L$ and $e_R$ be the nearest edges on the left and the right of $e$ with rounded capacities $2^l$. Now, any demand in $\I_l$ which passes through
$e$ must contain either $e_L$ or $e_R$. Hence, congestion on $e$ is at most $2 \hat{r}_l$. 

Now, we argue that $\hat{r}_l \leq 4 r$. Consider an edge $e$ with $\hatc_e = 2^l$. In $\I_l$, we set the capacity of this edge to $2^{l-1}$. Hence, the capacity of this edge in $\I_l$ is at least 
$c_e/4$. So, congestion of $e$ in $\I_l$ is at least $r_e/4$. So, we get that $r \ge \frac{\hat{r}_l}{4}$. Equivalently, $\hat{r}_l \leq 4 r$. Thus, we get $r_l \le 2 \hat{r}_l \leq 8r$. Using Lemma~\ref{small-uniform4}, our algorithm colors the demands in $\I_l$ using at most 
$4 r_l \leq 32r$ colors. This proves the desired result. 
\end{proof}

\section{Large demands}
We now describe our algorithm for coloring large demands. Recall that large demands exist only in classes 0, 1 and 2. 
We will color them using the  algorithm given in~\cite{Narayanaswamy04}. The colorings for class 0 and class 2 will share colors, but these will be disjoint from the coloring for class 1. For convenience, we state the result below. 

\begin{theorem}
\label{Narayanaswamy}
\cite{Narayanaswamy04}
Suppose all edges in the path have capacity 1 and let $r$ be the maximum congestion of an edge. 
The number of colors required for coloring requests with demands in  $\left(0, \frac{1}{4}\right]$, $\left(\frac{1}{4},\frac{1}{2}\right]$ and $\left(\frac{1}{2},1\right]$ are at most $4r$, $3 r$ and $3r$ respectively. Hence, coloring all requests requires at most $10r$ colors.
\end{theorem}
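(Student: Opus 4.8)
The plan is to prove the three bounds separately, giving each of the three demand ranges its own disjoint palette of colors, so that the total count is the sum $4r + 3r + 3r = 10r$. Since deleting demands only decreases the load on every edge, the maximum congestion of each sub-instance (restricted to a single range) is at most the congestion $r$ of the whole instance; it therefore suffices to establish each of the three per-range bounds against $r$, and the final ``hence'' follows by additivity of the palettes.

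For the range $\left(0,\frac{1}{4}\right]$ there is nothing new to do: these are exactly the $\frac14$-small demands on a unit-capacity path, so \autoref{algo-small} together with \autoref{small-uniform4} already colors them online with at most $4r$ colors, and I would simply invoke that result. For the two heavier ranges I would reduce each to an online interval-coloring problem. Rounding every demand in $\left(\frac14,\frac12\right]$ up to $\frac12$ and every demand in $\left(\frac12,1\right]$ up to $1$, a color class becomes feasible exactly when at most $2$ (respectively $1$) rounded demand crosses any single edge. The range $\left(\frac12,1\right]$ is then ordinary online interval coloring (overlapping intervals need distinct colors), and $\left(\frac14,\frac12\right]$ is its $2$-bounded analogue. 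The plan is to run a Kierstead--Trotter-style leveling procedure: on arrival, place each interval at the lowest level whose current load through the interval's edges still leaves room, and argue by a critical-edge argument, exactly as in the proof of \autoref{small-uniform2}, that every level stays feasible while the number of levels remains proportional to the congestion.

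The delicate point, and the step I expect to be the main obstacle, is pinning the constant down to exactly $3$. Relating clique size to congestion gives only $\omega \le 2r-1$ for $\left(\frac12,1\right]$ and $\omega \le 4r-1$ for $\left(\frac14,\frac12\right]$, because each demand exceeds one half (respectively one quarter) of an edge's capacity; feeding such an $\omega$ into the generic online interval-coloring bound $3\omega-2$ yields roughly $6r$, which is too weak. To reach $3r$ one must analyze the leveling algorithm \emph{directly} against the load lower bound $r$, carrying the bandwidth information through the level-counting argument rather than passing through the combinatorial clique number. This quantitative tightening is precisely the content of Narayanaswamy's algorithm, and I would treat its verification as the crux of the proof; the reduction and additivity steps above are then routine.
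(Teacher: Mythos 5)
Your decomposition into three disjoint palettes and the additivity step are sound, and the $\left(0,\frac{1}{4}\right]$ case is legitimately discharged by \autoref{small-uniform4}. But note first that the thesis itself contains no proof of this theorem: it is imported wholesale from \cite{Narayanaswamy04} (``For convenience, we state the result below''), so the two $3r$ bounds are exactly what a self-contained proof would have to supply --- and your proposal explicitly defers them, calling the tightening to $3$ ``precisely the content of Narayanaswamy's algorithm'' whose verification you leave open. That is a genuine gap, not a routine one: the route you sketch provably cannot reach $3r$. Rounding every demand in $\left(\frac{1}{4},\frac{1}{2}\right]$ up to $\frac{1}{2}$ can double the load on an edge, so even a perfectly tight analysis of the rounded instance against its own congestion $r'$ yields only $3r' \le 6r$; and passing through the clique number gives, as you yourself compute, $3\omega - 2 \le 6r - 5$. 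Both detours lose exactly the factor of $2$ that the stated bounds refuse to lose, so the proof must be run against the actual bandwidths, which your sketch never does.

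The fallback ``argue by a critical-edge argument, exactly as in the proof of \autoref{small-uniform2}'' also fails for the two heavy ranges, and it is worth seeing why concretely. That argument bounds the level-$k$ load on a critical edge by $\frac{1}{4} + d_j \le \frac{1}{2}$, crucially using $d_j \le \frac{1}{4}$, and then charges any non-critical edge to its two nearest critical edges, needing the two shares to sum to at most the capacity $1$. If one reruns \autoref{algo-small} with per-level threshold $\beta k$ on demands of size up to $d_{\max}$, the critical-edge share becomes $\beta + d_{\max}$, and feasibility of non-critical edges forces $2(\beta + d_{\max}) \le 1$, i.e.\ $\beta \le \frac{1}{2} - d_{\max}$; for $d_{\max} = \frac{1}{2}$ this gives $\beta \le 0$ and the level count blows up, and for $d_{\max} = 1$ it is vacuous. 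So the leveling analysis you propose to reuse structurally breaks precisely where the theorem's content begins, and what remains established by your proposal is roughly $4r + 6r + 6r$, not $4r + 3r + 3r$. A correct argument needs a different algorithm and accounting for the large classes (in \cite{Narayanaswamy04} the load witness is threaded directly through a Kierstead--Trotter-type level construction), which neither your sketch nor anything elsewhere in the thesis provides.
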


\noindent 
We define three instances $\I_0,\I_1$ and $\I_2$. For $l \in \{0,1\}$, we construct the instance $\I_l$ by contracting all edges $e$ for which $\hatc_e \neq 2^l$. For $\I_2$, we contract all edges $e$ for which $\hatc_e \neq 2^l$ and then reduce the capacity of edges by half. We first show that it is sufficient to color large demands of class $l$ using $\I_l$ only. 

\begin{lemma}
\label{large1}
Fix a class $l \in \{0,1,2\}$. 
Consider a coloring of demands of class $l$ restricted to the instance $\I_l$. Then, coloring of class 1 is feasible. Further, for any color $c$, the set of demands in $\I_0 \cup \I_2$ which are colored with $c$ is feasible.
\end{lemma}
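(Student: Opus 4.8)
The plan is to prove feasibility directly in the original path $G$, verifying edge by edge that no color class overloads any edge $e$, i.e. that the total demand is at most $c_e$. The structural facts I would lean on are: $\hatc_e \le c_e < 2\hatc_e$; every class-$l$ demand only traverses edges with $\hatc_e \ge 2^l$ and must contain at least one edge of rounded capacity \emph{exactly} $2^l$ (its bottleneck); large class-$0$ and class-$1$ demands lie in $(\tfrac14,1]$ while large class-$2$ demands lie in $(\tfrac12,1]$, all bounded by $1$ via NBA. Since class $1$ uses a palette disjoint from the shared class-$0$/class-$2$ palette, I can treat the two claims almost independently once a single load-counting tool is established.

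The key technical step is a left-right decomposition. Fix a color $c$, a class $l\in\{0,1,2\}$, and an edge $e$ of $G$ with $\hatc_e>2^l$, so $e$ is contracted in $\I_l$. Each class-$l$, color-$c$ demand through $e$ has its bottleneck edge, of rounded capacity exactly $2^l$, strictly to the left or strictly to the right of $e$. For the left group, let $e_L$ be the rightmost $2^l$-edge lying to the left of $e$; every such demand, being an interval containing both a $2^l$-edge (at or left of $e_L$) and $e$, must contain $e_L$, and $e_L$ survives the contraction in $\I_l$. Hence all these demands pass through $e_L$ in $\I_l$, so by feasibility of the $\I_l$-coloring their total load is at most the $\I_l$-capacity of $e_L$, which is $1$ in $\I_0$ and $2$ in $\I_1,\I_2$. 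A symmetric bound holds for the right group via the nearest $2^l$-edge $e_R$ to the right. The hard part will be exactly this step: confirming that the witnessing edge $e_L$ (resp. $e_R$) is common to the whole group and is not destroyed by contraction.

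With this tool I would tabulate, per class, the maximum load a single color places on $e$ as a function of $\hatc_e$. Class $0$ contributes at most $1$ when $\hatc_e=1$ (the edge lies in $\I_0$) and at most $1+1=2$ when $\hatc_e\ge2$ (decomposition around the nearest $1$-edges). Class $1$ contributes nothing when $\hatc_e=1$, at most $2$ when $\hatc_e=2$, and at most $2+2=4$ when $\hatc_e\ge4$. Class $2$ contributes nothing when $\hatc_e<4$, at most $2$ when $\hatc_e=4$ (the edge is present in $\I_2$ with reduced capacity $2$), and at most $2+2=4$ when $\hatc_e\ge8$ (decomposition around the nearest $4$-edges, each of $\I_2$-capacity $2$).

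Finally I would assemble the two assertions. For the class-$1$ coloring, disjointness of its colors means each such color carries only class-$1$ load, which is at most $2\le c_e$ when $\hatc_e=2$ and at most $4\le\hatc_e\le c_e$ when $\hatc_e\ge4$, so it routes feasibly in $G$. For a shared color $c$, I bound the combined class-$0$ and class-$2$ load on $e$ by cases on $\hatc_e$: at most $1$ when $\hatc_e=1$; at most $2$ when $\hatc_e=2$; at most $2+2=4=\hatc_e$ when $\hatc_e=4$; and at most $2+4=6\le\hatc_e$ when $\hatc_e=2^k$ with $k\ge3$. In every case the load is at most $\hatc_e\le c_e$, so each shared color class is feasible in $G$, completing the proof.
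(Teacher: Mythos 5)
Your proposal is correct and takes essentially the same route as the paper's proof: the key step in both is that on a contracted edge $e$ with $\hatc_e > 2^l$, every class-$l$ demand through $e$ must contain the nearest surviving $2^l$-edge $e_L$ or $e_R$, whose per-color load is bounded by its capacity in $\I_l$. If anything, your version is more complete: the paper only records the per-class bound $\hatc_{e_L}+\hatc_{e_R}\le 2^{l+1}$ and leaves the combined class-$0$/class-$2$ accounting implicit, whereas your tabulation correctly uses the halved $\I_2$ capacity (so class $2$ contributes at most $4$ per color, not $8$) and then explicitly checks, case by case on $\hatc_e$, that a shared color carries total load at most $\hatc_e\le c_e$ --- which is exactly the calculation needed to justify the ``further'' clause of the lemma.
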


\noindent
{\bf Note:} It is possible that for a demand $R_i$ of class $l$, we contracted some of the edges in $I_i$ while constructing the path in $\I_l$. Hence, while considering
the coloring in $\I_l$, we will consider only those edges of $I_i$ which do not get contracted. 
\begin{proof}
We would have contracted two types of edges in $\I_l$~: 
\begin{itemize}
\item \textbf{Edges $e$  with  $ \hatc_e < 2^l$}: Since no class $l$ demand passes through them, contracting these edges does not matter. 
\item \textbf{Edges $e$ with  $\hatc_e > 2^i$}: Consider such an edge $e$. So, $\hatc_e \geq 2^{l+1}$. Consider any coloring of class $l$ requests in $\I_l$. Let $e_L$ and $e_R$ be the 
nearest edges with $\hatc$ values $2^l$ to the left and the right of $e$ respectively (in the original graph). Then, any request of class $l$ through $e$ must contain either $e_L$ or
$e_R$. Hence, the total load on $e$ due to such demands (of this color) is at most $\hatc_{e_L} + \hatc_{e_R} \leq 2^{l+1}$. Hence, demands of this color do not violate the edge capacity of
$e$.\qedhere
\end{itemize}
\end{proof}

We now show how to color class $l$ demands in the instance $\I_l$. Let $r^{(l)}$ denote the maximum congestion of an edge in $\I_l$ (where the requests are all the class $l$ demands). 

\begin{lemma}
\label{lem:classes}
We can color the demands of class $l$, for $l \in \{0,1,2\}$ using the  following number of colors: 
\begin{itemize}
\item {\bf Class 0 demands :} These can be colored with at most $6 r^{(0)}$ colors. 
\item {\bf Class 1 demands :} These can be colored with at most $7 r^{(1)}$ colors. 
\item {\bf Class 2 demands :} These can be colored with at most $3 r^{(2)}$ colors. 
\end{itemize}
\end{lemma}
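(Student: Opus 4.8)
The plan is to reduce each class to the unit-capacity setting of \autoref{Narayanaswamy} by scaling, and then read off the color count from the demand range that the large demands occupy after scaling. The key observation throughout is that congestion is scale-invariant: since $r_e = \lceil l_e / c_e \rceil$, multiplying every demand and every capacity by a common factor leaves every $r_e$, and hence $r^{(l)}$, unchanged. So after constructing $\I_l$ I may freely rescale so that its (uniform) edge capacity becomes $1$ without altering $r^{(l)}$.

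First I would handle class $0$. By construction $\I_0$ retains only edges with $\hatc_e = 1$, so its capacity is already $1$ and no scaling is needed. A large class-$0$ demand lies in $\left(\frac{1}{4},1\right]$, which I split at $\frac{1}{2}$ into $\left(\frac{1}{4},\frac{1}{2}\right]$ and $\left(\frac{1}{2},1\right]$. Applying the second and third bounds of \autoref{Narayanaswamy} (each subrange has congestion at most $r^{(0)}$) and giving the two subranges disjoint colors yields at most $3r^{(0)} + 3r^{(0)} = 6r^{(0)}$ colors.

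Next I would treat class $1$: here $\I_1$ has uniform capacity $2$, and a large class-$1$ demand lies in $\left(\frac{1}{4},1\right]$. Scaling by $\frac{1}{2}$ makes the capacity $1$ and sends the demands into $\left(\frac{1}{8},\frac{1}{2}\right]$. I would split this at $\frac{1}{4}$: the part in $\left(\frac{1}{8},\frac{1}{4}\right] \subseteq \left(0,\frac{1}{4}\right]$ costs at most $4r^{(1)}$ colors by the first bound of \autoref{Narayanaswamy}, and the part in $\left(\frac{1}{4},\frac{1}{2}\right]$ costs at most $3r^{(1)}$ by the second bound, for a total of $7r^{(1)}$. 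Finally, for class $2$ the instance $\I_2$ has capacity $4$ reduced by half, i.e.\ $2$, and a large class-$2$ demand lies in $\left(\frac{1}{2},1\right]$; scaling by $\frac{1}{2}$ gives capacity $1$ and demands in $\left(\frac{1}{4},\frac{1}{2}\right]$, so the second bound of \autoref{Narayanaswamy} colors them with at most $3r^{(2)}$ colors.

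The only mildly delicate point I anticipate is the class-$1$ case, where after scaling the large demands straddle the threshold $\frac{1}{4}$ of \autoref{Narayanaswamy}, forcing me to invoke both the $\left(0,\frac{1}{4}\right]$ and the $\left(\frac{1}{4},\frac{1}{2}\right]$ bounds on separate color palettes; this is exactly where the looser factor $7$ (rather than $6$ or $3$) originates. I would also keep in mind that the color \emph{count} for each class follows solely from the scaling argument above, while the feasibility of pooling colors across the instances (class $0$ with class $2$, kept disjoint from class $1$) is what \autoref{large1} has already secured.
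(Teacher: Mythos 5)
Your proposal is correct and follows essentially the same route as the paper's proof: construct the scaled unit-capacity instances, split class $0$ at $\frac{1}{2}$ and class $1$ (after halving) at $\frac{1}{4}$, and read off the bounds $3r+3r$, $4r+3r$, and $3r$ from \autoref{Narayanaswamy}. Your explicit remarks that congestion is invariant under common scaling and that each demand subrange inherits congestion at most $r^{(l)}$ merely make precise what the paper leaves implicit.
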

\begin{proof}
First consider class 0 demands. So the demands lie in the range $\left( \frac{1}{4}, 1 \right]$, and the capacity of each edge in $\I_0$ is 1. We now partition the demands into two parts: $\left(\frac{1}{4},\frac{1}{2}\right]$ and $\left(\frac{1}{2},1\right]$. The claim now follows from \autoref{Narayanaswamy}. Now consider class 1 demands. They have demands in 
$\left(\frac{1}{4},1\right]$ and all edges have capacity 2. We scale down edge capacities and demands by a factor of 2. Now demands lie in the range $\left(\frac{1}{8},\frac{1}{2}\right]$. We partition these
into two parts based on their demands : $\left(\frac{1}{8},\frac{1}{4}\right]$ and $\left(\frac{1}{4},\frac{1}{2}\right]$. The result again follows from \autoref{Narayanaswamy}. Finally, we consider
class 2 demands. These have demands in the range $ \left(\frac{1}{2},1\right]$ and all edges have capacity 2. We scale down these values by a factor of 2. So now the demands lie in the range 
$[\left(\frac{1}{4},\frac{1}{2}\right]$, and the result again follows from \autoref{Narayanaswamy}.
\end{proof}

\begin{lemma}
\label{large2}
We can color all the large demands in the original instance using at most $26r$ colors.
\end{lemma}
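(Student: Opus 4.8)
The plan is to combine the per-class color bounds of \autoref{lem:classes} with the feasibility guarantees of \autoref{large1}, after first controlling each class-wise congestion $r^{(l)}$ in terms of the overall congestion $r$ of the large-demand instance (where $r = \max_e \lceil l_e/c_e\rceil$ taken over edges $e$ with $l_e$ the load of large demands through $e$, and $r$ lower-bounds $\col(\opt)$). Recall that large demands occur only in classes $0,1,2$, that \autoref{large1} lets the colorings of classes $0$ and $2$ share one palette while keeping class $1$ on a disjoint palette, and that \autoref{lem:classes} colors class $0$ with at most $6r^{(0)}$ colors, class $1$ with at most $7r^{(1)}$ colors, and class $2$ with at most $3r^{(2)}$ colors.

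First I would bound each $r^{(l)}$ by a constant times $r$. Fix an edge $e$ that survives in $\I_l$, and let $l_e^{(l)}$ be the load of the class-$l$ demands through $e$; clearly $l_e^{(l)} \le l_e \le r\,c_e$. For $l\in\{0,1\}$ the edge has $\hatc_e = 2^l$, so $c_e < 2^{l+1}$, and its capacity in $\I_l$ is $2^l$; for $l=2$ we have $\hatc_e = 4$, so $c_e < 8$, while its capacity in $\I_2$ is $2$ after the halving. Substituting $l_e \le r\,c_e$ and using integrality of the resulting bounds gives, edge by edge, $r^{(0)} \le \lceil l_e^{(0)}\rceil \le 2r$, $\ r^{(1)} \le \lceil l_e^{(1)}/2\rceil \le 2r$, and $\ r^{(2)} \le \lceil l_e^{(2)}/2\rceil \le 4r$; taking the maximum over $e$ yields $r^{(0)}\le 2r$, $r^{(1)}\le 2r$, and $r^{(2)}\le 4r$.

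Finally I would assemble the palettes. By \autoref{large1}, for every color $c$ the class-$0$ and class-$2$ demands receiving color $c$ in $\I_0$ and $\I_2$ form a feasible set in $G$, so classes $0$ and $2$ can be overlaid on a single palette of size $\max(6r^{(0)},3r^{(2)})$, while class $1$ is feasible on its own disjoint palette of size $7r^{(1)}$. Substituting the congestion bounds gives
\[
\max\bigl(6r^{(0)},\,3r^{(2)}\bigr) + 7r^{(1)} \;\le\; \max\bigl(12r,\,12r\bigr) + 14r \;=\; 12r + 14r \;=\; 26r,
\]
which is the claimed bound.

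The step I expect to require the most care is this congestion comparison together with the palette bookkeeping. The reason the shared classes $0$ and $2$ do not inflate the count is a precise cancellation: halving the class-$2$ capacity pushes $r^{(2)}$ up to $4r$ rather than $2r$, but because classes $0$ and $2$ share a palette only the maximum of $6r^{(0)}$ and $3r^{(2)}$ is charged, and these two quantities happen to meet the same value $12r$, so the halving penalty is exactly absorbed by class $2$'s smaller color budget. Verifying the capacity inequalities $c_e < 2^{l+1}$ and the integer-rounding in each of the three cases, and confirming via \autoref{large1} that overlaying shared colors never violates an edge capacity in $G$, is the routine-but-delicate part of the argument.
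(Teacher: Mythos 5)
Your proof is correct and follows essentially the same route as the paper's: bound the class-wise congestions by $r^{(0)},r^{(1)}\le 2r$ and $r^{(2)}\le 4r$, then charge class $1$ a disjoint palette of $7r^{(1)}\le 14r$ colors and let classes $0$ and $2$ share a palette of $\max\bigl(6r^{(0)},3r^{(2)}\bigr)\le 12r$ colors via Lemma~\ref{large1}, giving $26r$. The only difference is cosmetic: you derive the congestion comparisons edge by edge from $2^l=\hatc_e\le c_e<2^{l+1}$, whereas the paper simply asserts them from the rounding and halving of capacities.
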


\begin{proof}
Note that since the rounded edge capacities $\hatc_e \leq 2 c_e$, $r^{(l)} \leq 2r$ for $l \in \{0,1\}$. Since we halve the edge capacities, $r^{(2)} \le 4r$. We will use Lemma~\ref{lem:classes}. For class $\I_1$ at most $7 r^{(1)} \le 14r$ colors are required. Colors for $\I_0$ and $\I_2$ can be shared. These two classes can be colored using $\max \left(6 r^{(0)}, 3 r^{(2)} \right) \le \max \left(6 \cdot 2r, 3 \cdot 4r \right) \le 12r$ colors. Thus, the total number of colors needed for coloring large demands is at most $14r+12r=26r$.
\end{proof}

Our final algorithm now colors the small and the large demands separately. Combining Lemma~\ref{small1} and Lemma~\ref{large2}, we get 

\begin{theorem}
\label{main}
Our algorithm for online {\ufpround} is 58-competitive. 
\end{theorem}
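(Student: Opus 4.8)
The plan is to obtain Theorem~\ref{main} as an immediate additive combination of the two component bounds, using a single quantity — the maximum congestion $r$ of the whole instance — as the common lower bound against which both contributions are measured. Recall from the preliminaries that $r = \max_e r_e$, taken over all requests $R_1,\ldots,R_n$, is a lower bound on $\col(\opt)$, since each edge $e$ forces at least $r_e$ distinct colors in any feasible coloring. So it suffices to show that the algorithm uses at most $58r$ colors in total.

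First I would note that the congestion of the sub-instance consisting of only the small demands, and the congestion of the sub-instance consisting of only the large demands, are each at most $r$: deleting requests can only decrease the load $l_e$ on any edge, and hence its congestion $r_e = \lceil l_e / c_e \rceil$. Consequently Lemma~\ref{small1} bounds the number of colors spent on small demands by $32r$, and Lemma~\ref{large2} bounds the number spent on large demands by $26r$. Since the algorithm processes small and large demands independently and assigns them \emph{disjoint} sets of colors, the two counts add rather than combine by a maximum, giving at most $32r + 26r = 58r \le 58 \cdot \col(\opt)$ colors, which is exactly the claimed $58$-competitiveness.

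The only points needing a brief check are that the two palettes are genuinely disjoint (immediate from the algorithm's design, which reserves separate color blocks) and that feasibility of each color class is preserved in the merged output. The latter is automatic: within a fixed color, all requests are either small or large but never both, so each color class is a feasibility-verified set inherited directly from the small-demand analysis (Lemmas~\ref{small-uniform2} and the class-merging lemma) or the large-demand analysis (Lemma~\ref{large1}). I do not expect any genuine obstacle here — the substantive work lies entirely in the two component lemmas, and this final step is arithmetic together with the monotonicity of congestion under deletion of requests.
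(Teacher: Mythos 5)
Your proposal is correct and matches the paper's proof of Theorem~\ref{main} essentially verbatim: the paper likewise colors small and large demands with disjoint palettes, invokes Lemma~\ref{small1} and Lemma~\ref{large2} for the $32r$ and $26r$ bounds, and concludes $32r+26r = 58r \le 58\cdot\opt$ since $r \le \opt$. Your added remarks on congestion monotonicity under deletion of requests and on per-color feasibility are sound but only make explicit what the paper leaves implicit.
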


\begin{proof}
We can color all the small demands using at most $32r$ colors and all the large demands using at most $26r$ colors. We know that $\opt \ge r$. So, the total number of colors required to color any instance is at most $32r+26r=58r \le 58 \cdot \opt$. Hence, the result follows.
\end{proof}

\subsection{Running time}
Requests can be grouped based on their classes in polynomial-time. Finding the appropriate level of a request in Algorithm \ref{algo-small} (for uniform capacities) can be done in polynomial-time. Since, for small demands, we are using Algorithm \ref{algo-small} at most a polynomial number of times, the resulting algorithm runs in polynomial-time. For large demands, there are only three classes -- 0, 1 and 2. For each class, we are using the algorithm in \cite{Narayanaswamy04}, which runs in polynomial-time. Hence, the algorithm for large demands also runs in polynomial-time. Hence, the overall algorithm runs in polynomial-time.

\chapter {Scheduling Resources for a Partial Set of Jobs}
\label {chap5}

\section{Introduction}
We consider the problem of allocating resources to schedule jobs.
We are given a path $G$, and a set of jobs.
Each job $j$ is specified by a triplet $(s_j, t_j, d_j)$, where
$[s_j, t_j]$ denotes the interval corresponding to the job (also denoted by $I_j$),
and $d_j$ is its demand requirement. We shall assume that $d_j$ values are 1.
Further, we are also given a set of resources.
Each resource is specified by its starting and ending vertex, and  the capacity it offers and its associated cost.
A feasible solution is a set of resources satisfying the constraint that for any edge,
the sum of the capacities offered by the resources containing this edge is at least the demand required by
the jobs containing that edge, i.e., the selected resources must cover the jobs.
We call this the Resource Allocation problem ({\ResAll}).

We study two variants of the problem. The first variant is the partial covering version.
The second variant is the prize collecting version. We study these variants for the case where the  solution is allowed to pick multiple copies of a resource by paying proportional cost.

\section{Problem Definition}
We consider the graph $G=(V,E)$ which is a path with vertices numbered $1, 2, \ldots, |V|$ from left to right.
An input instance  consists of a set of  {\em jobs} ${\cal J}$, and a set of {\em resources} $\calI$.

Each job $j \in {\cal J}$ is specified by an interval $I_j = [s_j,t_j]$ in the path. Recall that each job has
demand requirement of 1.
Each resource $i \in \calI$ is specified
by an interval $I_i=[s(i),e(i)]$ in the path, capacity $w_i$ and cost $c_i$. We shall assume that the capacities $w_i$ are
integers.
We interchangeably refer to the resources as {\em resource intervals}. We shall also refer to the interval $I_j$ (or $I_i$) as
the {\em span} of the job $j$ (or resource $i$).
A typical scenario of such a collection of jobs and resources is shown in \autoref{fig:cc1}.

\begin{figure*}[t]
\begin{center}
\fbox{
\includegraphics[scale=0.5]{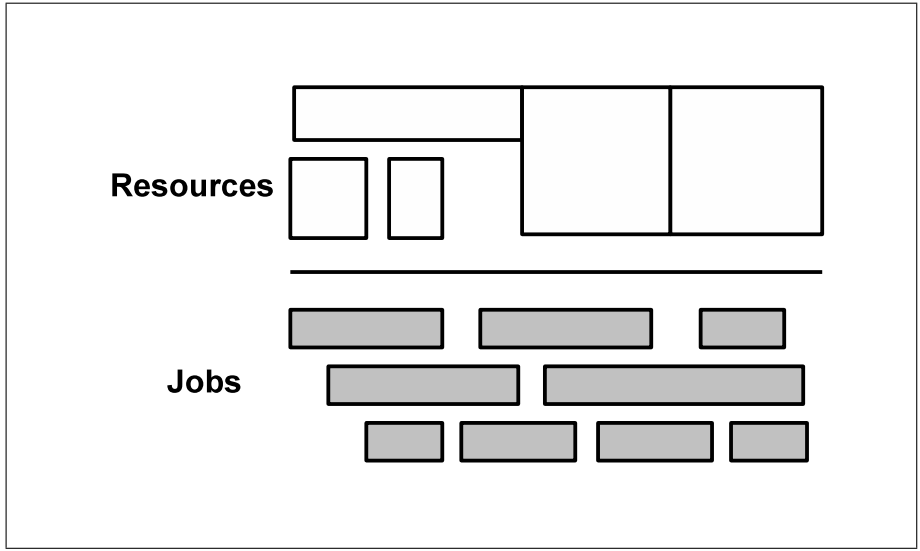}
}
\end{center}
\caption{Illustration of the input}
\label{fig:cc1}
\end{figure*}

We say that a job $j$ (or resource $i$) {\em contains} an edge $e$ if the associated interval $I_j$ (or $I_i$) contains $e$; we  denote this as $j \sim e$ ($i \sim e$).
We define a {\em profile} $P: E \rightarrow \mathbb{N}$ to be a mapping that assigns an integer value
to every edge of the path. For two profiles, $P_1$ and $P_2$, $P_1$ is said to {\em cover} $P_2$,
if $P_1(e) \geq P_2(e)$ for all $e \in E$.
Given a set $J$ of jobs, the profile $P_J(\cdot)$ of $J$ is defined to be the mapping determined by
the cumulative demand of the jobs in $J$, i.e. $P_J(e) = |\{ j \in J:j \sim e \}|$.
Similarly, given a multiset $R$ of resources, its profile is: $P_R(e) = \sum_{i \in R:i \sim e} w_i$
(taking copies of a resource into account).
We say that $R$ {\em covers} $J$ if $P_R$ covers $P_J$.
The cost of a multiset of resources $R$ is defined to be the sum of the costs of all the resources
(taking copies into account).

We now describe the two versions of the problem.
\begin{itemize}
\item  {\PResAll}: In this problem, the input also specifies a number $k$ (called the {\em partiality parameter})
       that indicates the
	number of jobs to be covered. A feasible solution is a pair $(R,J)$ where $R$ is a multiset of resources
	and $J$ is a set of jobs such that $R$ covers $J$ and $|J| \ge k$. The cost of the solution is the sum of the costs of the resources in $R$ (taking copies into account).
	The problem is to find a feasible solution of minimum cost.
\item  {\PCResAll}: In this problem, every job $j$ also has a penalty $p_j$ associated with it.
	A feasible solution is a pair $(R,J)$ where $R$ is a multiset of resources
	and $J$ is a set of jobs such that $R$ covers $J$.
	The cost of the solution is the sum of the
	costs of the resources in $R$ (taking copies into account) and the penalties of the jobs not in $J$.
	The problem is to find a feasible solution of minimum cost.
\end{itemize}

\section{Outline of the Main Algorithm}
\label{sec:overview}
In this section, we outline the proof of our main result:

\begin{theorem}
\label{thm:xAAA}
There exists an $O(\log (n+m))$-approximation algorithm for the {\PResAll} problem,
where $n$ is the number of jobs and $m$ is the number of resources.
\end{theorem}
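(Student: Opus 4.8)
The plan is to reduce \PResAll to two structural ingredients and then stitch them together with a knapsack-style dynamic program, paying an extra logarithmic factor only in the stitching step. First I would establish the two facts advertised earlier: (i) the job set $\cJ$ can be partitioned into $L = O(\log(n+m))$ \emph{mountain ranges} $\calM^1,\ldots,\calM^L$, and (ii) for a single mountain range there is an absolute constant $c$ and an algorithm that, for any target $t$, returns a multiset of resources of cost $A_\ell(t)$ covering at least $t$ jobs of that range, with $A_\ell(t) \le c\cdot\optval_\ell(t)$, where $\optval_\ell(t)$ denotes the minimum cost of covering $t$ jobs of $\calM^\ell$.

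For (i) I would use a balanced binary (segment-tree) decomposition over the $O(n+m)$ distinct endpoints of jobs and resources. Build a balanced tree whose leaves are the elementary edges; each internal node $v$ owns a contiguous range and a midpoint edge $e_v$. Assign each job $j$ to the unique highest node $v$ whose midpoint edge $e_v$ is contained in $I_j$. All jobs assigned to $v$ contain $e_v$, so they form a mountain; moreover, since such a job does not contain the parent's midpoint, $I_j$ lies entirely inside $v$'s range, so the span of the mountain at $v$ is contained in that range. Nodes at a common depth own disjoint ranges, so the mountains attached to them have non-overlapping spans and together form a single mountain range. As the tree has depth $O(\log(n+m))$, taking one mountain range per level yields the required $L$ ranges.

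Given (i) and (ii), I would combine them as follows. For each range $\ell$ and each $t \in \{0,1,\ldots,|\calM^\ell|\}$ precompute $A_\ell(t)$ by calling the single-range algorithm. Let $D(\ell,s)$ be the minimum of $\sum_{\ell'\le\ell} A_{\ell'}(t_{\ell'})$ over choices with $\sum_{\ell'\le\ell} t_{\ell'} \ge s$, computed by $D(\ell,s) = \min_{0\le t\le |\calM^\ell|}\{A_\ell(t) + D(\ell-1,\max(s-t,0))\}$ with $D(0,0)=0$ and $D(0,s)=\infty$ for $s>0$; the answer realizes $D(L,k)$. The table has size $O(L\cdot n)$ and is filled in polynomial time. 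The output is feasible because covering is additive and per-edge: if $R_\ell$ covers the chosen jobs $J_\ell$ of range $\ell$ on every edge, then for the multiset-union $R=\biguplus_\ell R_\ell$ and $J=\bigcup_\ell J_\ell$ (disjoint across ranges) we have $P_R(e)=\sum_\ell P_{R_\ell}(e)\ge\sum_\ell P_{J_\ell}(e)=P_J(e)$, and $|J|=\sum_\ell t_\ell \ge k$.

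For the guarantee, let an optimal solution of cost $\opt$ cover a job set $J^\star$ with $|J^\star|\ge k$ using resources $R^\star$. Set $t^\star_\ell=|J^\star\cap\calM^\ell|$, so $\sum_\ell t^\star_\ell\ge k$. Since $R^\star$ covers all of $J^\star$ it covers $J^\star\cap\calM^\ell$, giving $\optval_\ell(t^\star_\ell)\le\opt$ and hence $A_\ell(t^\star_\ell)\le c\cdot\opt$. As $(t^\star_1,\ldots,t^\star_L)$ is feasible for the dynamic program, $D(L,k)\le\sum_\ell A_\ell(t^\star_\ell)\le cL\cdot\opt = O(\log(n+m))\cdot\opt$, which proves the theorem. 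The hard part will be ingredient (ii): the constant-factor approximation for a single mountain range, where one must exploit the unimodal profile of each mountain while coping with the fact that a resource interval may straddle several mountains and a single job may need several resource intervals to be covered. I would establish (ii) separately by an LP-rounding or primal--dual argument tailored to the unimodal structure, and prove it before invoking it in the reduction above.
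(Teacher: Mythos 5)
Your proposal is correct, and its overall architecture is the same as the paper's: partition the jobs into $L = O(\log(n+m))$ mountain ranges, invoke a constant-factor algorithm for a single mountain range as a black box, and stitch the ranges together with a knapsack-style dynamic program. Your feasibility argument (profiles add under multiset union) and your charging argument — $\optval_\ell(t^\star_\ell) \le \opt$ for \emph{each} range separately, since the optimum's resource multiset covers each restricted job set, whence $D(L,k) \le cL\cdot\opt$ — coincide with the paper's, including the crucial point that one can only charge each range against the \emph{whole} optimum because resources may be shared across ranges. Where you genuinely diverge is the decomposition lemma. The paper (Lemma~\ref{lem:XXX}, inspired by Bansal et al.) buckets jobs into dyadic length classes $[2^{i-1}\ell_{\min}, 2^i\ell_{\min})$, snaps each job in a class to a grid vertex $q\alpha$ it contains, and splits the resulting mountains by the residue of $q$ modulo $4$, yielding $4\lceil\log(\ell_{\max}/\ell_{\min})\rceil$ ranges; the $O(\log(n+m))$ bound then requires the separate coordinate-compression observation that only the $O(n+m)$ endpoints of jobs and resources matter. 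Your segment-tree decomposition — assign each job to the highest node whose midpoint edge it contains, take one mountain range per level — is cleaner: uniqueness of the assignment, the mountain property at each node, containment of each span in its node's range, and edge-disjointness across a level are all immediate, and the bound falls directly out of the tree depth with compression built in. The paper's version buys an instance-sensitive bound $O(\log(\ell_{\max}/\ell_{\min}))$, which can be smaller than $\log(n+m)$; yours buys simplicity and dispenses with the mod-$4$ bookkeeping. One caveat: your ingredient (ii), the constant-factor single-mountain-range algorithm, is where nearly all of the paper's work lives, and the paper's route is not LP rounding or primal--dual as you suggest, but combinatorial: a structural lemma showing a near-optimal solution for a single mountain discards only left- or right-extremal jobs (reducing, at a factor-$2$ loss, to full-cover instances solved by the $4$-approximation of Bar-Noy et al.), followed by a cost-$24$ reduction of a mountain range to the {\lspc} problem and an exact polynomial-time dynamic program for optimal SLRA solutions at a further factor $16$. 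Since the paper's own proof of this theorem likewise invokes that single-range result (Theorem~\ref{thm:xCCC}) as a separately proved black box, your deferral is structurally faithful; just be aware that the LP-based proof of (ii) you promise would be new work rather than a reconstruction of the paper's argument.
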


The proof of the above theorem goes via the claim that the input set of jobs can be
partitioned into a logarithmic number of {\em mountain ranges}.
A collection of jobs $M$ is called a {\em mountain} if there exists an edge $e$, such that
all the jobs in this collection contain $e$; the specified edge where the jobs
intersect will be called the {\em peak} edge of the mountain (see Figure~\ref{fig:aa};
jobs are shown on the top and the profile is shown below).
The justification for this linguistic convention is that if we look at the profile of such a
collection of jobs, the profile forms a bimodal sequence, increasing in height until the peak, and
then decreasing.
The {\em span} of a mountain $M$ is the set of edges which are contained in any of the jobs in the mountain,
i.e., $\cup_{j \in M} I_j$.
A collection of jobs $\calM$ is called a {\em mountain range}, if the jobs can be partitioned into
a sequence $M_1, M_2, \ldots, M_r$ such that each $M_i$ is a mountain and the spans of any two mountains
are non-overlapping (see Figure \ref{fig:bb}).

\begin{figure*}[t]
\begin{center}
\fbox{
\includegraphics[scale=0.5]{figa}
}
\end{center}
\caption{A Mountain $M$}
\label{fig:aa}
\end{figure*}

\begin{figure*}[t]
\begin{center}
\fbox{
\includegraphics[scale=0.5]{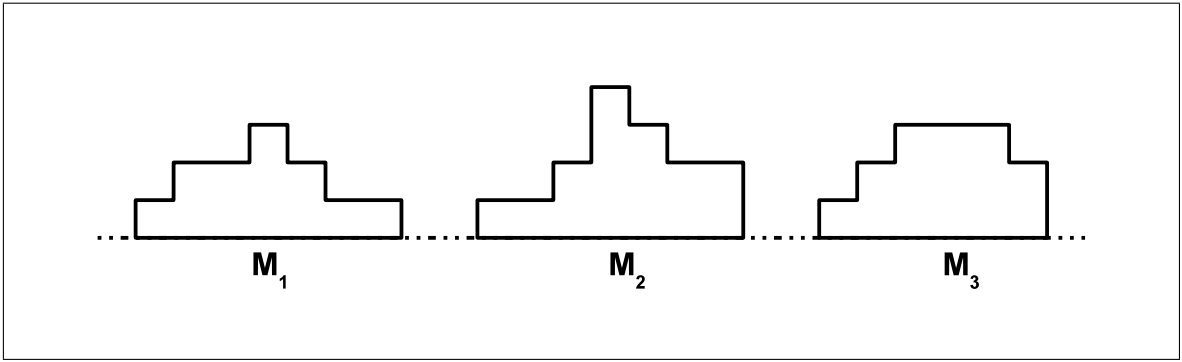}
}
\end{center}
\caption{A Mountain Range ${\cal M}=\{M_1, M_2, M_3\}$}
\label{fig:bb}
\end{figure*}

We prove a decomposition lemma which shows that the input set of jobs can be partitioned into a logarithmic number of mountain ranges. Hence, our decomposition lemma implies that it is sufficient to get a good approximation for the case of a mountain range. It is not difficult to argue that one can extend this result to several mountain ranges by employing dynamic programming. We only need to know how many jobs to satisfy in each mountain range. For a single mountain range, we will prove the following result.

\begin{theorem}
\label{thm:xCCC}
There exists a constant factor approximation algorithm for the special case of the {\PResAll} problem,
wherein the input set of jobs form a single mountain range $\calM$.
\end{theorem}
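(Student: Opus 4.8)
The plan is to reduce partial covering on a single mountain range $\calM = \{M_1,\dots,M_r\}$ to the ordinary full-cover {\ResAll} problem, losing only a constant factor, and then to invoke the $4$-approximation of Bar-Noy et al.~\cite{Bar-Noy}. The starting point is the observation that, since the spans of $M_1,\dots,M_r$ are pairwise non-overlapping, every edge $e$ lies in the span of at most one mountain; consequently, for any set $J$ of selected jobs the profile $P_J(e)$ on an edge of $M_i$'s span is determined solely by the jobs of $J$ drawn from $M_i$. Thus the demand profile decomposes across the mountains: choosing $J$ amounts to choosing, for each $i$, how many jobs $k_i$ to take from $M_i$ (with $\sum_i k_i \ge k$) together with \emph{which} $k_i$ jobs, even though a single resource interval may still cross several mountain spans.

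Next I would fix a mountain $M_i$ with peak edge $e_i$ and analyze, for a target count $k_i$, the cheapest-to-cover choice of $k_i$ jobs. Because every job of $M_i$ contains $e_i$, the profile to the left of $e_i$ is the step function $x \mapsto |\{\, j : s_j \le x \,\}|$ determined only by the left endpoints of the chosen jobs, and symmetrically on the right. A pointwise-domination (exchange) argument then shows that among all size-$k_i$ subsets the covering demand is minimized by keeping the jobs whose endpoints are innermost (closest to $e_i$), which lets me replace the combinatorial choice of jobs by a single \emph{canonical} bimodal target profile $P_i^{(k_i)}$ of height $k_i$. The one subtlety here is that a job couples its left endpoint to its right endpoint, so the left-optimal and right-optimal choices need not coincide; I would absorb this by treating the two halves of the peak separately and paying a constant factor, or by a direct exchange argument on the coupled profile.

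With canonical profiles in hand, the remaining task is to select heights $(k_1,\dots,k_r)$ with $\sum_i k_i \ge k$ so as to minimize the cost of covering the combined profile $P = \sum_i P_i^{(k_i)}$. Once the $k_i$ are fixed, $P$ is a fixed integer demand profile on the path and covering it by a multiset of resources is exactly a {\ResAll} instance, so the $4$-approximation of \cite{Bar-Noy} applies as a black box and, in particular, handles resources that straddle several mountains at no extra conceptual cost. To choose the heights I would write the natural covering LP with job-selection variables $x_j$, resource-multiplicity variables $y_i \ge 0$, the per-edge covering constraints $\sum_{i \sim e} w_i y_i \ge \sum_{j \sim e} x_j$, and the cardinality constraint $\sum_j x_j \ge k$, solve it, and round the fractional selection mountain-by-mountain, setting $k_i$ to (a constant times) the fractional mass of $M_i$ and taking the innermost $k_i$ jobs there.

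The step I expect to be the main obstacle is this rounding: I must simultaneously honor the global cardinality requirement $|J| \ge k$ and keep the covering cost within a constant factor of the LP optimum, and these two goals are in tension because the partial-selection decision couples all $r$ mountains through the shared (possibly cross-mountain) resources. Moreover, as noted earlier, our prize-collecting routine lacks the Lagrangian-Multiplier-Preserving property, so the Jain--Vazirani framework \cite{JV01} cannot be used to transfer a prize-collecting guarantee to the partial version. I therefore expect the crux of the argument to be a careful threshold-rounding of the $x_j$ that loses only a constant factor in both the number of covered jobs and the fractional covering cost, after which the reduction to {\ResAll} and the bound of \cite{Bar-Noy} finish the proof.
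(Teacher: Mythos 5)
Your plan breaks down in two places, one repairable and one fatal. The repairable one is the ``canonical profile'' claim: because each job couples its left endpoint to its right endpoint, the size-$k_i$ selections within a mountain are \emph{not} totally ordered under pointwise domination, so there is no single height-$k_i$ bimodal profile $P_i^{(k_i)}$ that is cheapest to cover independently of the resource set (removing two leftmost-starting jobs versus two rightmost-ending jobs yields incomparable profiles, and which is better depends on resource costs). The paper handles this not with a canonical profile but with a structure theorem (Lemma~\ref{BBB}): any feasible solution can be converted, at a factor-$2$ cost increase obtained by doubling the resources, into one whose discarded jobs form a prefix of the left-endpoint order plus a prefix of the right-endpoint order; the algorithm then \emph{enumerates} all pairs $(q_1,q_2)$ and runs the full-cover $4$-approximation of Bar-Noy et al.\ on each, giving the $8$-approximation of Theorem~\ref{thm:xDDD} for a single mountain.

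The fatal gap is exactly the step you flag as the crux and leave unresolved: choosing the per-mountain counts. The natural LP you write down has unbounded integrality gap, so no threshold rounding of it can work. Concretely, take a single mountain of $n$ identical jobs, one resource of capacity $n$ and cost $1$ spanning them, and $k=1$: the LP sets $x_j = 1/n$ for every job and $y = 1/n$ for the resource, achieving value $1/n$, while any integral solution covering one job costs $1$ --- a gap of $\Omega(n)$ (this is the standard partial-cover gap, and is precisely why the paper does not round an LP; note also its remark that the Jain--Vazirani route is blocked by the missing LMP property). The paper instead resolves the cross-mountain coupling combinatorially: it splits every resource into at most three pieces so each is narrow or wide (factor $3$), proves Lemma~\ref{lem:LLL} to decouple narrow from (doubled) wide coverage, collapses each mountain to its peak edge, and introduces for each mountain and each count $\kappa$ a \emph{short resource} whose cost is the output of the single-mountain $8$-approximation with partiality $\kappa$ --- thereby folding the ``how many jobs from this mountain'' decision into the instance itself. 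The resulting {\lspc} instance is then solved via a $16$-factor structural lemma for SLRA solutions together with an \emph{exact} dynamic program (Theorems~\ref{thm:xEEE} and~\ref{thm:xGGG}) that distributes the partiality parameter $q$ exactly across vertex-cuts and interval-cuts, yielding the $3 \times 8 \times 16 = 384$ factor. Without this DP (or some substitute such as a knapsack-cover strengthening or a Lagrangian argument, neither of which you supply), your proposal is a plan whose central step fails as stated rather than a proof.
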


\noindent To prove \autoref{thm:xCCC}, we need the following results.

\begin{enumerate}
	\item A constant factor approximation for the case of a mountain.
	\item Extending this result to a mountain range.
\end{enumerate}

The first part is accomplished by the following theorem. The proof is given in
Section \ref{sec:mountain}.

\begin{theorem}
\label{thm:xDDD}
There exists a $8$-approximation algorithm for the special case of the {\PResAll} problem
wherein the input set of jobs form a single mountain $M$.
\end{theorem}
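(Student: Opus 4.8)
The plan is to reduce the job-selection aspect to a \emph{full cover} problem and then invoke the known $4$-approximation for \ResAll. Observe first that since every job of the mountain $M$ contains the peak edge, the demand on the peak edge equals $k$ for \emph{any} choice of $k$ jobs; what varies between choices is only the shape of the profile on the two sides of the peak. If the set $J$ of $k$ jobs to be covered were fixed in advance, the remaining task --- covering the fixed bimodal profile $P_J$ by a minimum-cost multiset of resource intervals --- is exactly an instance of \ResAll, for which Bar-Noy et al.\ give a $4$-approximation. Thus the whole difficulty is concentrated in \emph{which} $k$ jobs to commit to, and the factor $8 = 2\cdot 4$ I am aiming for should arise as a factor $2$ lost in selecting the jobs times the factor $4$ of the covering subroutine.

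First I would set up the natural LP relaxation, with a variable $x_j\in[0,1]$ for each job (is it covered?) and $y_i\ge 0$ for the number of copies of resource $i$, minimizing $\sum_i c_i y_i$ subject to $\sum_j x_j\ge k$ and the coupling constraints $\sum_{j\sim e} x_j \le \sum_{i\sim e} w_i\,y_i$ on every edge $e$. Writing $P_x(e)=\sum_{j\sim e}x_j$ for the fractional demand profile, the LP cost is a lower bound on the optimum and the profile $P_x$ is covered by $y$ at this cost. I would then round the fractional selection $x$ to an integral set $J$ of at least $k$ jobs whose profile satisfies $P_J\le 2P_x$ pointwise. Granting this, the doubled resource vector $2y$ covers $2P_x\ge P_J$, so the optimal full cover of $P_J$ costs at most $2\cdot\mathrm{LP}\le 2\,\opt$; feeding $P_J$ to the $4$-approximation of Bar-Noy et al.\ then produces resources of cost at most $8\,\opt$ that cover the $k$ jobs of $J$, as required.

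The main obstacle is exactly this rounding step, because the coupling across the peak cannot be broken naively. It is tempting to treat the left half-mountain and the right half-mountain separately; each half is easy in isolation, since for a nested family the pointwise-minimal profile over all size-$k$ subsets is obtained simply by taking the $k$ \emph{shortest} jobs. However, covering the left profile of one size-$k$ set and the right profile of a \emph{different} size-$k$ set need not cover any single job at all --- a job that is short on the left may be long on the right --- so feasibility would fail. The rounding must therefore keep jobs \emph{whole}. I would exploit that the edge-constraints of a mountain have the consecutive-ones (indeed nested) structure around the peak, so that the job-selection polytope is well behaved, and round the $x_j$ by a single threshold together with a charging argument that preserves the count $|J|\ge k$ while inflating each side of the profile by at most a factor $2$. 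Verifying that such a threshold rounding simultaneously keeps $|J|\ge k$ and bounds $P_J\le 2P_x$, rather than trading one against the other, is the delicate point I expect to require the most care.
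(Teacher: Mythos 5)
There is a genuine gap --- in fact two. First, the entire content of the theorem is concentrated in your rounding step (an integral $J$ with $|J|\ge k$ and $P_J \le 2P_x$ pointwise), and you leave it unproven. The ``single threshold'' you suggest does not deliver it: thresholding at $x_j\ge \tfrac12$ gives $P_J(e) \le 2P_x(e)$ automatically but gives no control on $|J|$ (all $x_j$ may be below $\tfrac12$, e.g.\ $x_j = k/n$ for $n>2k$ jobs), while lowering the threshold to $\theta$ so as to capture $k$ jobs only yields $P_J \le \tfrac{1}{\theta}P_x$, not $2P_x$. So the threshold would have to be chosen by a genuinely new charging argument exploiting the two nested chains (left-extents and right-extents) of the mountain, and you have only asserted that such an argument exists. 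The paper proves the analogous factor-$2$ statement purely combinatorially (its Lemma \ref{BBB}): starting from \emph{any} integral solution $(R_S,J_S)$ covering $k$ jobs, it repeatedly discards the leftmost-extending and rightmost-extending jobs in pairs, observes that every surviving job lies within the span of each discarded pair, and concludes that \emph{doubling} the integral multiset $R_S$ covers the exchanged jobs; hence a near-optimal solution exists whose uncovered set is a prefix of the left-endpoint order plus a prefix of the right-endpoint order. The algorithm then simply enumerates the polynomially many pairs $(q_1,q_2)$ and runs the $4$-approximation of Bar-Noy et al.\ on each resulting full-cover instance, giving $2\times 4 = 8$ with no LP at all.

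Second, even granting your rounding lemma, your cost chain does not close. You argue that since $2y$ covers $2P_x \ge P_J$, ``the optimal full cover of $P_J$ costs at most $2\cdot\mathrm{LP}$.'' But $y$ is \emph{fractional}, so this bounds only the fractional covering optimum of $P_J$; the covering LP for {\ResAll} has an unbounded integrality gap (a single edge of demand $1$ and one resource of capacity $W$ and cost $1$ admits $y = 1/W$ fractionally, versus integral cost $1$), and the $4$-approximation of Bar-Noy et al.\ is invoked in the paper against the \emph{integral} optimum, not the LP. So $4\cdot(\text{integral cover of }P_J)$ is not bounded by $8\cdot\mathrm{LP}$ as you claim. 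The paper avoids this entirely by staying integral throughout: Lemma \ref{BBB} doubles the integral resource multiset of the (unknown) optimum, and the extremality of the discarded jobs makes the unknown optimum unnecessary to the algorithm because the candidate job sets can be enumerated. To salvage your route you would need either an LP-relative guarantee for the covering subroutine or a rounding of $y$ as well, neither of which you supply; with the rounding of $x$ coordinated instead against the integral optimum's job set (which the algorithm cannot see), you would be re-deriving exactly the paper's extremality lemma.
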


For the second part, we will collapse each mountain into a single edge. This can be done if resources are wide, i.e., they span the mountains which they intersect. But this may not always be the case. We need to solve a related problem.

{\it  Problem Definition (\lspc):}
We are given a demand profile over the set of edges $E$,
which specifies an integral demand $d_e$ for every edge $e$.
The input resources are of two types, {\em short} and {\em long}.
A short resource spans only one edge, whereas a long resource can span one or more edges.
Each resource $i$ has a cost $c_i$ and a capacity $w_i$.
The input also specifies a {\em partiality parameter} $k$.
A feasible solution $S$ consists of a multiset of resources $S$ and a coverage profile:
an integer $k_e$ for each edge $e$ satisfying $k_e \leq d_e$.
The solution should have the following properties:
(i) $\sum_e k_e \geq k$;
(ii) at any edge $e$, the sum of capacities of the resource intervals from $S$ containing $e$ is at least $k_e$;
(iii) for any edge $e$, at most one of the short resources containing $e$
is picked (however, multiple copies of a long resource may be included).
The objective is to find a feasible solution having minimum cost.
See Figure~\ref{fig:dd} for an example (in the figure, short resources are shaded).

\begin{theorem}
\label{thm:xEEE}
There exists a $16$-approximation algorithm for the {\lspc} problem.
\end{theorem}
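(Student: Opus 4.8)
The plan is to prove \autoref{thm:xEEE} by rounding a natural linear programming relaxation of \lspc{}, combined with a guess-the-most-expensive-resource device to handle the partiality parameter~$k$. Writing $L$ and $S$ for the sets of long and short resources, I would introduce a variable $x_i \ge 0$ for each long resource $i$ (the number of copies chosen), a variable $y_i \in [0,1]$ for each short resource $i$ (whether it is chosen), and a coverage variable $k_e$ for each edge, giving the relaxation
\begin{align*}
\min \; & \sum_{i \in L} c_i x_i + \sum_{i \in S} c_i y_i \\
\text{s.t.}\; & \sum_{i \in L:\, i \sim e} w_i x_i + \sum_{i \in S:\, i \sim e} w_i y_i \ge k_e, \quad \forall e \in E, \\
& \sum_{i \in S:\, i \sim e} y_i \le 1, \quad \forall e \in E, \\
& \sum_{e \in E} k_e \ge k, \\
& 0 \le k_e \le d_e, \quad x_i \ge 0, \quad 0 \le y_i \le 1.
\end{align*}
The first family encodes the capacity requirement~(ii), the second encodes the at-most-one-short-per-edge requirement~(iii), and the single constraint $\sum_e k_e \ge k$ encodes the partiality requirement~(i).

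First I would guess the cost $C$ of the most expensive resource used by the optimal solution, enumerating over the $O(m)$ distinct resource costs; after discarding every resource of cost exceeding $C$, we retain an instance on which $\opt$ is unchanged and $\opt \ge C$. I then solve the LP on the reduced instance to obtain a fractional optimum $(x^\star, y^\star, k^\star)$. The rounding proceeds by \emph{separating the contribution of long and short resources}: at each edge I split the fractional coverage $k^\star_e$ into the part supplied fractionally by long resources and the part supplied by short resources, and round the two pieces independently, which costs a factor $2$. For the long part, the capacity constraints restricted to $L$ form a consecutive-ones (interval) system, so once a target integral profile is fixed it can be covered within a constant factor by the full-cover interval-covering algorithm of Bar-Noy et al.~\cite{Bar-Noy} (losing a factor $4$). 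For the short part, a short resource covers only its single edge, so the selection decouples across edges: once a target short-coverage at each edge is fixed, I independently pick the cheapest short resource meeting it, respecting constraint~(iii) by construction.

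Finally I would repair the partiality constraint. Rounding the coverage profile down can create a shortfall relative to $k$; because every surviving resource costs at most $C \le \opt$, any bounded deficit can be patched by adding a constant number of the cheapest remaining resources at additional cost $O(C) = O(\opt)$. Multiplying the factor $2$ from the long/short split, the factor $4$ from the interval full-cover subroutine, and the factor $2$ absorbed by the over-covering/patching step yields the claimed $2 \cdot 4 \cdot 2 = 16$ bound.

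The step I expect to be the main obstacle is handling the global constraint $\sum_e k_e \ge k$: it is a single knapsack-type constraint that couples all the edges and destroys the clean interval structure that makes the underlying full-cover problem tractable. The delicate point is to round the fractional coverage profile $k^\star$ to an integral target that simultaneously (a) still sums to at least $k$ after the guess-the-max patch, (b) is coverable at cost $O(\opt)$ through the interval subroutine, and (c) keeps the short-selection constraint~(iii) intact; reconciling all three while accounting the losses to a clean constant is where the careful argument is needed.
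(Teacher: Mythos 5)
Your proposal diverges fundamentally from the paper's proof, and it contains a genuine gap that I do not see how to repair within your framework. The paper does not use an LP at all: it first shows (\autoref{lem:SLRA}, via Theorem 1 of \cite{esa2011}) that there is always a structured \emph{SLRA solution} --- one in which, at every edge, a single long resource together with the unique short resource suffices to meet the coverage --- of cost at most $16$ times optimal, and then shows (\autoref{thm:xGGG}, via the vertex-cut/interval-cut decomposition of \autoref{lem:decomp}) that the \emph{optimum} SLRA solution can be computed exactly by dynamic programming over triples $\langle [a,b],q,h\rangle$. The entire factor of $16$ comes from the structural lemma; the algorithmic part is lossless. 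Your route instead charges everything multiplicatively against the fractional LP value, and that is where it breaks: the natural LP you write has an \emph{unbounded integrality gap}, driven precisely by constraint (iii). Consider a single edge $e$ with $d_e = k = 2$, no long resources, and two short resources: one with $w=1$, $c=0$ and one with $w=T$, $c=1$ for large $T$. The LP sets $y_1 = 1-\lambda$, $y_2 = \lambda$ with $\lambda = 1/(T-1)$, achieving coverage $2$ at cost $1/(T-1)$, while any integral solution must pick the expensive short resource alone (only one short per edge is allowed) and pays $1$. Guessing the most expensive resource in $\opt$ prunes nothing here, since that resource has cost $1 = \opt$. So no bound of the form ``algorithm's cost $\le \text{const} \times \lp$'' can hold, and your $2 \cdot 4 \cdot 2$ accounting, which is entirely relative to the LP value, cannot yield $16$ without strengthening the relaxation (e.g., with knapsack-cover inequalities) --- a substantial piece of machinery your sketch does not supply.

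Two further steps are also unsupported as stated. First, the guarantee of Bar-Noy et al.~\cite{Bar-Noy} for the long part is relative to the \emph{optimal integral cover of the target profile you fix}, not to the fractional LP cost of the long variables; since the capacitated interval system is not totally unimodular, the integral cover of the rounded profile can exceed the fractional cost by an unbounded factor, so the claimed factor $4$ does not chain onto the LP. Second, the shortfall created by rounding the coverage profile is not ``bounded'': if the fractional solution sets $k^\star_e = 1/2$ on $2k$ edges, rounding down loses all $k$ units, so the patch is not a constant number of resource copies, and any patching must itself respect constraint (iii) at edges where a short resource was already committed --- an interaction your argument does not address. Correctly identifying the global constraint $\sum_e k_e \ge k$ as the obstacle was the right instinct, but the resolution the paper uses is to fold the partiality parameter $q$ and the residual-coverage parameter $h$ directly into the DP state rather than to round a fractional profile; I would encourage you to compare your sketch with the $h$-free SLRA machinery of Section 5.5 to see how the coupling is handled exactly rather than approximately.
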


\begin{figure*}[t]
\begin{center}
\fbox{
\includegraphics[scale=0.5]{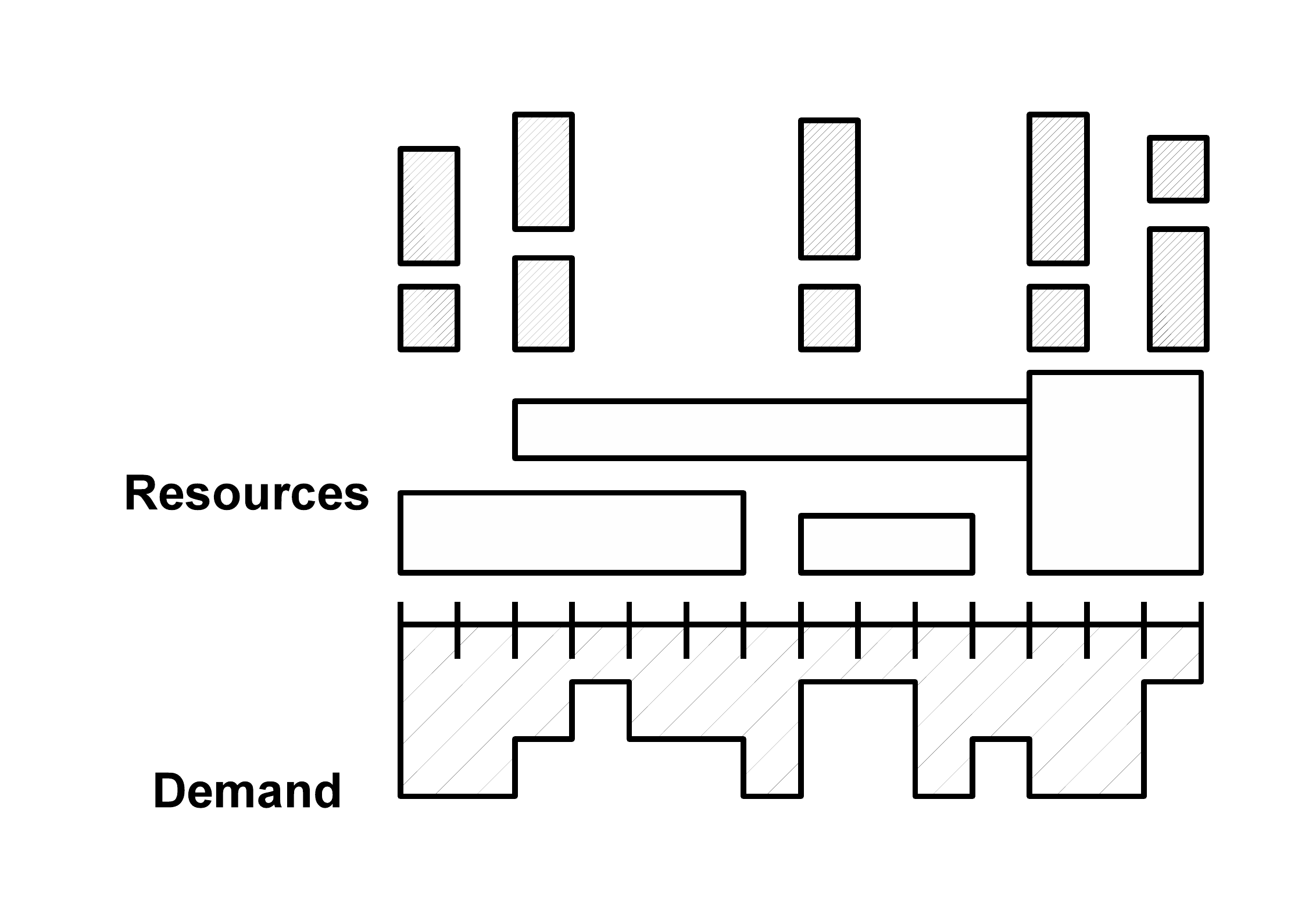}
}
\end{center}
\caption{The {\lspc} problem}
\label{fig:dd}
\end{figure*}

\section{Overview of Our Algorithm}
In this section, we give an overview of our algorithm and describe the various results needed to prove the claimed approximation guarantee. We start with some notations.

\label{sec:mountain}
For a job $j$, let its {\em length} be $\ell_j = |I_j|$.
Let  $\ell_{\min}$ be the shortest job length, and $\ell_{\max}$ the longest job length.
The proof of \autoref{lem:XXX} is inspired by the algorithm for the {\ufpmax} problem, due to Bansal et al.~\cite{BansalFKS09}.

\begin{lemma}
\label{lem:XXX}
The input set of jobs can be partitioned into groups, $\calM_1, \calM_2, \ldots, \calM_L$, such that
each $\calM_i$ is a mountain range and $L \leq 4\cdot\lceil{\log \frac{\ell_{\max}}{\ell_{\min}}}\rceil$.
\end{lemma}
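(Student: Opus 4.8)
The plan is to bucket the jobs by length and then dispatch each bucket by a simple grid-coloring argument, in the spirit of the UFP grouping technique of Bansal et al.~that the excerpt already credits. For $t = 0, 1, \ldots, T-1$ with $T = \lceil \log \frac{\ell_{\max}}{\ell_{\min}}\rceil$, let $C_t$ be the class of jobs whose length satisfies $2^t \ell_{\min} \le \ell_j < 2^{t+1}\ell_{\min}$ (placing any job of length exactly $\ell_{\max}$ into the top class). Every job falls into exactly one class, and there are only $T$ classes, so it suffices to partition each single class into at most $4$ mountain ranges; the bound $L \le 4T$ then follows by combining the partitions.

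Fix a class $C_t$ and write $\lambda = 2^t \ell_{\min}$, so that every job in $C_t$ has length in $[\lambda, 2\lambda)$. I would overlay on the path a grid of marked edges spaced $\lambda$ apart, splitting the path into cells of width $\lambda$, and assign each job $j \in C_t$ to the cell $i$ that contains its left endpoint $s_j$. Because $\ell_j \ge \lambda$, the interval $I_j$ is forced to cross the right boundary of its cell; concretely, every job assigned to cell $i$ contains the marked edge at the right end of that cell. Hence all jobs sharing a cell share a common edge, i.e.\ they form a mountain $M_i$ in the sense of the definition, with that marked edge serving as the peak.

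It then remains to bound the overlap between these mountains. Since $s_j$ lies in cell $i$ and $\ell_j < 2\lambda$, the right endpoint $t_j = s_j + \ell_j$ lies strictly before the right boundary of cell $i+2$; thus the span $\cup_{j \in M_i} I_j$ is contained in the union of cells $i, i+1, i+2$. Consequently, two mountains $M_i$ and $M_{i'}$ have non-overlapping spans whenever $|i - i'| \ge 3$. Coloring the cells by the residue of $i$ modulo $3$ therefore groups the mountains into at most $3 \le 4$ color classes, and within each color class the mountains have pairwise non-overlapping spans, which is precisely a mountain range. This partitions $C_t$ into at most $4$ mountain ranges, and summing over the $T$ classes yields a partition of all jobs into $L \le 4\lceil \log \frac{\ell_{\max}}{\ell_{\min}}\rceil$ mountain ranges.

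The step I expect to require the most care is the claim that jobs sharing a cell share a common edge — this rests entirely on the lower bound $\ell_j \ge \lambda$ forcing each such interval to cross the same grid line — together with the off-by-one bookkeeping (edges versus vertices, half-open cells, and the single length-exactly-$\ell_{\max}$ boundary job) needed to make the ``span touches at most three cells'' estimate, and hence the modular coloring, fully rigorous. None of these steps is deep, but the precise constant is sensitive to exactly how the cells, grid lines, and closed versus half-open length ranges are set up, which is why I would fix those conventions explicitly before running the counting.
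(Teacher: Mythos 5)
Your proof is correct and follows essentially the same route as the paper's: dyadic length classes within a factor of $2$, a grid of spacing $\lambda$ inside each class so that jobs sharing a grid position form a mountain, and grouping the resulting mountains by residue class to obtain mountain ranges. The only difference is the anchoring: you assign each job to the cell containing its \emph{left endpoint}, which makes each mountain's span one-sided (cells $i, i+1, i+2$) so that residues mod $3$ suffice, whereas the paper anchors each job to a grid \emph{vertex} $q\alpha$ it contains, giving a two-sided span and requiring residues mod $4$ --- so your variant even yields the marginally stronger bound $L \leq 3\cdot\lceil \log \frac{\ell_{\max}}{\ell_{\min}} \rceil$, comfortably within the stated $4\cdot\lceil \log \frac{\ell_{\max}}{\ell_{\min}} \rceil$.
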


\begin{proof}
\label{sec:DDD}
We first categorize the jobs according to their  lengths into $r$ categories $C_1$, $C_2$, $\cdots, C_r$, where
$r = \lceil \log \frac{\ell_{\max}}{\ell_{\min}}\rceil$.
The category $C_i$ consists of all the jobs with lengths in the range $[2^{i-1}{\ell_{\min}}, 2^i{\ell_{\min}})$.
Thus all the jobs in any single category have comparable lengths:
any two jobs $j_1$ and $j_2$ in the category satisfy $\ell_{1} < 2\ell_{2}$, where
$\ell_1$ and $\ell_2$ are the lengths of $j_1$ and $j_2$ respectively.

Consider any category $C$ and let the lengths of the jobs in $C$ lie in the range $[\alpha, 2\alpha)$.
We claim that the category $C$ can be partitioned into $4$ groups $G_0, G_1, G_2, G_3$, such that
each $G_i$ is a mountain range.
To see this,
we divide the set of jobs in $C$ into classes $H_1, H_2, \ldots, H_q, \ldots$ where $H_q$ consists of the jobs
containing the vertex $q \cdot \alpha$ (a job contains a vertex if the associated interval contains
this vertex). Here $q$ can possibly take any integer value.
Note that every job belongs to some class since all the jobs have
length at least $\alpha$; if a job belongs to more than one class, assign it to any one class arbitrarily.
Clearly, each class $H_q$ forms a mountain because any job in $H_q$ contains the vertex $q \alpha$.
For $0 \le i \le 3$, let $G_i$ be the union of the classes $H_q$ satisfying $q \equiv (i \mod 4)$.
Since each job has length at most $2\alpha$, two classes $H_j$ and $H_{j+4}$ can't have an overlap, as they are separated by a distance $4\alpha$. Hence, each $G_i$ is a mountain range.
Thus, we get a decomposition of the input jobs into $4r$ mountain ranges.
\end{proof}

Assuming \autoref{lem:XXX} and Theorem~\ref{thm:xCCC}, we now outline the proof of
Theorem~\ref{thm:xAAA}.
Let the optimal solution consists of $k_i$ jobs from mountain range $\calM_i, i=1,\ldots,L$ (where $L$ is given by \autoref{lem:XXX}), such that $k=\sum_i k_i$.
Thus, if we knew $k_1, k_2, \ldots, k_L$, we could invoke Theorem \ref{thm:xCCC} on each mountain range $\calM_i$
(along with $k_i$ as the partiality parameter) to determine a set of resources $R_i$ having cost within a constant factor
of the optimum for this mountain range. Taking the union of $R_1, R_2, \ldots, R_L$ yields a feasible solution $R$
for the original problem instance. It is not difficult to argue that $R$ is within a factor of $cL$ of the optimum solution.
The only issue in the above approach is that we do not know the values $k_1, k_2, \ldots, k_L$ (guessing them explicitly
would take exponential time). However, this issue can be handled by using dynamic programming. The details are given below.

\begin{proof}[Proof of Theorem~\ref{thm:xAAA}]
Assuming Theorem \ref{thm:xCCC},  we prove Theorem \ref{thm:xAAA}.
Let $\cJ$ be the input set of jobs, $\calI$ be the input set of resources and $k$ be the partiality parameter.
Invoke Lemma \ref{lem:XXX} on the input set of jobs $\cJ$ and obtain a partitioning of $\cJ$
into mountain ranges $\calM_1, \calM_2, \ldots, \calM_{L}$, where $L=4\cdot \ceil{\log(\ell_{\max}/\ell_{\min})}$.
Theorem \ref{thm:xCCC} provides a $c$-approximation algorithm $\cA$ for the {\PResAll} problem wherein
the input set of jobs form a single mountain range, where $c$ is some constant.
We shall present a $(cL)$-approximation algorithm for the {\PResAll} problem.

For $1\leq q \leq L$ and $1\leq \kappa \leq k$, let $\cA(q, \kappa)$ denote the
cost of the (approximately optimal) solution returned by the algorithm in Theorem \ref{thm:xCCC}
with $\calM_q$ as the input set of jobs, $\calI$ as the input set of resources and $\kappa$
as the partiality parameter.
Similarly, let $\opt(q, \kappa)$ denote the cost of the optimal solution for covering $\kappa$ of the jobs in the
mountain range $\calM_q$. Theorem~\ref{thm:xCCC} implies that $\cA(q,\kappa) \leq c\cdot \opt(q,\kappa)$.

The algorithm employs dynamic programming.
We maintain a $2$-dimensional DP table $\DP[\cdot,\cdot]$.
For each $1\leq q\leq L$ and $1\leq \kappa \leq k$,
the entry $\DP[q,\kappa]$ would
store the cost of a (near-optimal) feasible solution
covering $\kappa$ of the jobs from $\calM_1 \cup \calM_2 \cup \cdots \cup \calM_q$.
The entries are calculated as follows.
\begin{equation*}
\DP[q,\kappa] = \min_{\kappa' \leq \kappa} \{\DP[q-1, \kappa - \kappa'] + \cA(q,\kappa')\}.
\end{equation*}

The above recurrence relation considers covering $\kappa'$ jobs from the mountain $M_q$,
and the remaining $\kappa - \kappa'$  jobs from the mountains $M_1, \cdots, M_{q-1}$.
Using this dynamic program, we compute a feasible solution to the original problem instance
(i.e., covering $k$ jobs from all the mountain ranges $\calM_1,\calM_2, \ldots, \calM_L$);
the solution would correspond to the entry $\DP[L,k]$.
Consider the optimum solution $\opt$ to the original problem instance.
Suppose that $\opt$ covers $k_q$ jobs from the mountain range $\calM_q$ (for $1\leq q \leq L$),
such that $k_1 + k_2 +\cdots + k_L = k$.
Observe that
\begin{eqnarray*}
\DP[L,k]
\leq \sum_{q=1}^L \calA(q, k_q)
\leq c\cdot{\sum_{q=1}^L \opt(q, k_q)},
\end{eqnarray*}
where the first statement follows from the construction of the dynamic programming table
and the second statement follows from the guarantee given by algorithm $\calA$.
However the maximum of $\opt(q, k_q)$ (over all $q$) is a lower bound for $\opt$
(we cannot say anything stronger than this since $\opt$ might
use the same resources to cover jobs across multiple subsets $\calM_q$).
This implies that $\DP[L,k] \leq c\cdot L \cdot\opt$. This proves
the $(cL)$-approximation ratio.

It is easy to see that $L$ is $O(\log(n+m))$ as argued below.
It suffices if we consider only those vertices  where some job or resource
starts or ends; the other vertices can be ignored.
Such a transformation will not affect the set of feasible solutions.
Thus, without loss of generality, we can assume that the number of vertices is at most $2(n+m)$.
Therefore, $\ell_{\max} \leq 2(n+m)$ and $\ell_{\min} \geq 1$.
Hence, the overall algorithm has an $O(\log (n+m))$ approximation ratio.
\end{proof}

We now sketch the proof of Theorem \ref{thm:xCCC}. As mentioned earlier, there are two parts: single mountains and extension to mountain ranges via the {\lspc} problem.

For the case of a single mountain, we prove \autoref{thm:xDDD}. The basic intuition is as follows. Given the structure of the jobs, we will show that there is a
{\em near-optimal} feasible solution that exhibits a nice property:
the jobs discarded from the solution are extremal either in their
left end-points or their right end-points.
Let $\J = \{ j_1, j_2, \ldots, j_n \}$ be the input set of jobs.

\begin{lemma}
\label{BBB}
Consider the {\PResAll} problem for a single mountain.
Let $S=(R_S, J_S)$ be a feasible solution such that $R_S$ covers the set of jobs $J_S$ with $|J_S|=k$.
Let $C_S$ denote its cost.
Let $L = < l_1, l_2, \ldots, l_n >$ denote the jobs in increasing order of their left end-points.
Similarly, let $R = < r_1, r_2, \ldots, r_n >$ denote the jobs in decreasing order of their right end-points.
Then, there exists a feasible solution $X=(R_X, J_X)$ having cost at most $2 \cdot C_S$ such that
\begin{equation}\label{eqn:aa}
{\J} \setminus J_X = \{ {l_i} : i \le q_1 \} \cup \{ {r_i} : i \le q_2 \}
\end{equation}
for some $q_1,q_2 \ge 0$ where $|\cJ \setminus J_X|=n-k$.
\end{lemma}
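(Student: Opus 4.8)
The plan is to exploit the defining feature of a mountain: a single peak edge $e^\star$ contained in every job. Relative to $e^\star$ each job $I_j$ has a left part and a right part, and whether $I_j$ covers a given left edge depends only on its left endpoint (symmetrically on the right). First I would reinterpret the target ``prefix'' structure: the discarded set $\cJ\setminus J_X$ equals $\{l_i:i\le q_1\}\cup\{r_i:i\le q_2\}$ exactly when $J_X$ is the set of jobs surviving two threshold cuts — discard the $q_1$ jobs reaching furthest left and the $q_2$ jobs reaching furthest right — i.e. $J_X$ consists of the ``innermost'' $k$ jobs (those nested inside a window $[\sigma,\tau]\supseteq e^\star$), with ties resolved by the orderings $L$ and $R$. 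On the resource side I would simply take $R_X$ to be two copies of $R_S$; then $\mathrm{cost}(R_X)=2C_S$ and $P_{R_X}=2P_{R_S}\ge 2P_{J_S}$, so $R_X$ covers every job set $J$ with $P_J\le 2P_{J_S}$. Hence it suffices to produce a window-form set $J_X$ with $|J_X|=k$ and $P_{J_X}(e)\le 2P_{J_S}(e)$ at every edge $e$.

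The tool I would use is a one-sided monotonicity fact. Discarding the $q$ leftmost jobs (in $L$-order) is the most profile-efficient way to lower the left part of the profile: at a left edge through which $a$ jobs pass, the resulting load is $\max(0,a-q)$, and since $S$ itself discards $n-k$ jobs, choosing $q=n-k$ already makes the left profile of the surviving set at most $P_{J_S}$ on every left edge. The symmetric statement holds on the right. Thus the two single-sided greedy sets $\tilde J$ (discard the leftmost $n-k$) and $\hat J$ (discard the rightmost $n-k$) dominate $J_S$ on the left and on the right respectively. To get down to window form I would first run an exchange argument for free: while some covered job $j$ contains a discarded job $j'$, swap them; since $I_{j'}\subseteq I_j$ this never increases $P_{J_S}$ and preserves both the cost and the count $k$. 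After this process no discarded job lies inside a covered one, so in the $(s_j,t_j)$-plane the covered jobs form a staircase separating them from the discarded ones.

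The hard part — and the place where the factor $2$ is spent — is that a staircase is not yet a window: neither $\tilde J$ nor $\hat J$ alone satisfies $P\le 2P_{J_S}$ everywhere ($\tilde J$ can overload the right and $\hat J$ the left), and the two cuts cannot be chosen independently, because a rightmost (hence wide) job that we decline to discard also raises the left profile, and vice versa. So the main obstacle is to choose the coupled pair $(q_1,q_2)$ with $q_1+q_2=n-k$ for which the single window $J_X$ is simultaneously dominated on both sides by $2P_{J_S}$. I would attack this by sweeping $q_1$ from $n-k$ down to $0$ (with $q_2=n-k-q_1$): at $q_1=n-k$ the left is controlled and at $q_1=0$ the right is, and I would argue, using the staircase structure together with the factor-$2$ slack, that there is a cut where both the left-excess and the right-excess vanish — the slack from the second copy of $R_S$ being exactly what lets a wide kept job be tolerated on the side where it protrudes. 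Establishing this crossing property rigorously (controlling the non-monotone interaction of the two profiles) is the crux; once it is in hand, the corresponding window gives $J_X$ with $|J_X|=k$, discarded set of the required prefix form, and cost $\mathrm{cost}(R_X)=2C_S$, completing the proof.
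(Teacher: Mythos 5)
There is a genuine gap, and you have flagged it yourself: the entire lemma reduces, in your plan, to the ``crossing property'' --- the existence of a coupled cut $(q_1,q_2)$ with $q_1+q_2=n-k$ for which the windowed set satisfies $P_{J_X}\le 2P_{J_S}$ on both sides simultaneously --- and you offer no proof of it. Your two endpoint observations do not set up a valid intermediate-value argument: at $q_1=n-k$ the left profile is dominated but the right can exceed $2P_{J_S}$ arbitrarily, and as you decrease $q_1$ the left-excess at a left edge $e$ behaves like $a-q_1$ against a benchmark $2P_{J_S}(e)$ that can be as small as $2(a-(n-k))$, so controlling the left forces $q_2\le P_{J_S}(e)$ at every left edge --- a constraint that interacts with the symmetric right-side constraints through the wide jobs you mention, and you give no monotonicity or invariant that guarantees a feasible $q_1$ exists. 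Since the asserted crossing is exactly the content of the lemma, the proposal as written is a reduction of the statement to itself.

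The paper closes this hole with a different mechanism that you are missing: it never establishes $P_{J_X}\le 2P_{J_S}$ by global profile counting at all. Instead it constructs the cut by alternately removing the leftmost- and rightmost-extending jobs, with the invariant that jobs of $J_S$ are removed in \emph{pairs} consisting of the leftmost and the rightmost $J_S$-job still surviving (with ``single'' and ``artificially paired'' degenerate cases). Because every job in a mountain contains the peak edge, each such pair $\{j_1,j_2\}$ has the property that \emph{every} surviving job lies inside $I_{j_1}\cup I_{j_2}$ (Observation~\ref{obs1}). One then takes an arbitrary bijection $f$ from the removed $J_S$-jobs $U$ to the newly covered jobs $J_X\setminus J_S$ and observes that two copies of the resources covering $\{j_1,j_2\}$ cover $\{f(j_1),f(j_2)\}$, since any edge used by the images is contained in the span of the pair. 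Doubling $R_S$ (your $R_X=2R_S$ is the same move) then covers $J_X$ pair by pair, with no sweep and no interaction analysis. So your setup --- prefix reinterpretation, doubled resources, the one-sided domination facts --- is sound and consistent with the paper, but the step you would need to invent is the extreme-pair matching, not the crossing argument; to repair your writeup, replace the sweep by the alternating paired-removal process and the bijection argument above.
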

\begin{proof}
We give a constructive proof to determine the sets $J_X$ and $R_X$.
We initialize the set $J_X$=${\cal J}$. At the end of the algorithm,
the set $J_X$ will be the desired set of jobs covered by the solution.
The idea is to remove the jobs that extend most to the right or the left from the consideration of $J_X$.
The most critical aspect of the construction is to ensure that whenever we exclude any job from consideration
of $J_X$ that is already part of $J_S$, we do so in pairs of the leftmost and rightmost extending jobs of $J_S$
that are still remaining in $J_X$.
We terminate this process when the size of $J_X$ equals the size of $J_S$, i.e., $k$.
We also initialize the set $U=\phi$.
At the end of the algorithm, this set will contain the set of jobs removed from ${\J}$ that belonged to $J_S$
while constructing $J_X$.

We now describe the construction of $J_X$ formally.
We maintain two pointers {\lptr} and {\rptr}; {\lptr} indexes the jobs in the sequence ($L$) of their left end-points
and {\rptr} indexes the jobs in the sequence ($R$) of their right end-points.
We keep incrementing the pointer {\lptr} and removing the corresponding job from $J_X$
(if it has not already been removed) until either the size of $J_X$ reaches $k$
or we encounter a job (say {\ljob}) in $J_X$ that belongs to $J_S$;  we do not yet remove the job {\ljob}.
We now switch to the pointer {\rptr} and start incrementing it and removing the corresponding job from $J_X$
(if it has not already been removed) until either the size of $J_X$ reaches $k$
or we encounter a job (say {\rjob}) in $J_X$ that belongs to $J_S$;  we do not yet remove the job {\rjob}.
If the size of $J_X$ reaches $k$, we have the required set $J_X$.

Now suppose that $|J_X| \ne k$.
Note that both {\lptr} and {\rptr} are pointing to jobs in $S$.
Let {\ljob} and {\rjob} be the jobs pointed to by {\lptr} and {\rptr} respectively (note that
these two jobs may be same).

We shall remove one or both of {\ljob} and {\rjob} from $J_X$ and put them in $U$.
We classify these jobs into three categories: {\em single}, {\em paired} and {\em artificially paired}.

Suppose that $|J_X| \ge k+2$.
In this case, we have to delete at least 2 more jobs; so we delete both {\ljob} and {\rjob} and add them to $U$
as {\em paired} jobs.
In case {\ljob} and {\rjob} are the same job, we just delete this job and add it to $U$ as a {\em single} job.
We also increment the {\lptr} and {\rptr} pointers to the next job indices in their respective sequence.
We then repeat the same process again, searching for another pair of jobs.

Suppose that $|J_X|=k+1$.
In case {\ljob} and {\rjob} are the same job, we just delete this job and get the required set $J_X$ of size $k$;
We add this job to the set $U$ as a {\em single} job.
On the other hand, if {\ljob} and {\rjob} are different jobs,
we remove {\ljob} from $J_X$ and add it to $U$ as {\em artificially paired} with
its pair as the job {\rjob}; note that we do not remove {\rjob} from $J_X$.

This procedure gives us the required set $J_X$.
We now construct $R_X$ by simply doubling the resources of $R_S$; meaning, that for each
resource in $R_S$, we take twice the number of copies in $R_X$.
Clearly $C_X = 2 \cdot C_S$.
It remains to argue that $R_X$ covers $J_X$.
For this, note that $U=J_S-J_X$ and hence $|U|=|J_X-J_S|$ (because $|J_X|=|J_S|=k$).
We create an arbitrary bijection $f : U \rightarrow J_X-J_S$.
Note that $J_X$ can be obtained from $J_S$ by deleting the jobs in $U$ and adding the jobs of $J_X - J_S$.
We now make an important observation:
\begin{observation}
\label{obs1}
For any {\em paired} jobs or {\em artificially paired} jobs $j_1$, $j_2$ added to $U$,
all the jobs in $J_X$ are contained within the
span of this pair, i.e., for any $j$ in $J_X$, $s_j \ge \min \{ s_{j_1}, s_{j_2} \}$ and $t_j \le \max \{ t_{j_1}, t_{j_2} \}$.
Similarly for any {\em single} job $j_1$ added to $U$, all jobs in $J_X$ are contained in the span of $j_1$.
\end{observation}

For every {\em paired} jobs, $j_1$, $j_2$, Observation~\ref{obs1} implies that taking 2 copies of the
resources covering $\{ j_1, j_2 \}$ suffices to cover $\{ f({j_1}), f({j_2}) \}$.
Similarly, for every {\em single} job $j$, the resources covering $\{ j \}$ suffice to cover $\{ f(j) \}$.
Lastly for every {\em artificially paired} jobs $j_1, j_2$ where $j_1 \in U$ and $j_2 \notin U$, taking 2 copies
of the resources covering $\{ j_1, j_2 \}$ suffices to cover $\{ f({j_1}), j_2 \}$.
Hence the set $R_X$ obtained by doubling the resources $R_S$ (that cover $J_S$) suffices to cover the jobs in $J_X$.
\end{proof}

Recall that Bar-Noy et al.~\cite{Bar-Noy} presented a $4$-approximation algorithm for
the {\ResAll} problem (full cover version). Our algorithm for handling a single mountain works
as follows.
Given a mountain consisting of the collection of jobs $\J$ and the number $k$,
do the following for all possible pairs of numbers $(q_1, q_2)$ such that the set
$J_X$ defined as per Equation~\ref{eqn:aa} in Lemma~\ref{BBB} has size $k$.
For the collection of jobs $J_X$, consider the issue of selecting a minimum cost set of
resources to cover these jobs; note that this is a full cover problem. Thus, the $4$-approximation
of \cite{Bar-Noy} can be applied here.
Finally, we output the best solution across
all choices of $(q_1, q_2)$.  Lemma~\ref{BBB} shows that this is an $8$-factor approximation
to the {\PResAll} problem for a single mountain. This completes the proof of \autoref{thm:xDDD}.

\autoref{thm:xEEE} is proved in Section \ref{sec:lspc}. The reduction to the {\lspc} problem is given in Section \ref{app:red}.

\section{{\lspc} Problem: Proof of Theorem \ref{thm:xEEE}}
\label{sec:lspc}
Finally, we complete the description of our algorithm by providing a $16$-approximation algorithm for the {\lspc} problem.
We extend the notion of profiles and coverage to intervals of the path.
For an interval $[a,b]$, we say that an edge $e \in [a,b]$ if both of its end-points lie in $[a,b]$.
Let $[a,b] \subseteq [1,|V|]$ be a {\em range}.
By a profile over $[a,b]$, we mean a function $Q$ that assigns a value $Q(e)$ to each edge $e \in [a,b]$.
A profile $Q$ defined over a range $[a,b]$ is said to be {\em good}, if for all edges $e \in [a,b]$,
$Q(e)\leq d_e$ (where $d_e$ is the input demand at $e$).
In the remainder of the discussion, we shall only consider good profiles and so, we shall simply write
``profile" to mean a ``good profile".
The {\em measure} of $Q$ is defined to be the sum  $\sum_{e\in [a,b]} Q(e)$.

Let $S$ be a multiset of resources and let $Q$ be a profile over a range  $[a,b]$.
We say that $S$ is {\em good}, if for any edge $e$,
it includes at most one short resource containing $e$.
We say that $S$ covers the profile $Q$,
if for any edge $e\in [a,b]$, the sum of capacities of resources
active in $S$ and containing $e$ is at least $Q(e)$.
Notice that $S$ is a feasible solution to the input problem instance,
if there exists a profile $Q$ over the entire range $[1,|V|]$ such that
$Q$ has measure $k$ and $S$ is a cover for $Q$.
For an edge $e$, let $Q^{\sh}_S(e)$
denote the capacity of the unique short resource from $S$ containing $e$, if one exists; otherwise,
$Q^{\sh}_S(e) = 0$.

Let $S$ be a good multiset of resources and let $Q$ be a profile over a range  $[a,b]$.
For a long resource $i\in S$, let $f_S(i)$ denote the number of copies of $i$ included in $S$.
The multiset $S$ is said to be a {\em single long resource assignment cover} (SLRA cover) for $Q$,
if for any edge $e \in [a,b]$, there exists a long resource $i\in S$ such that
$w_if_S(i)\geq Q(e)-Q^{\sh}_S(e)$ (intuitively, the resource $i$ can cover the residual demand by itself,
even though there are other long resources in $S$ containing $e$).

We say that a good multiset of resources $S$ is an {\em SLRA solution} to the input {\lspc} problem instance,
if there exists a profile $Q$ over the range $[1,|V|]$ having measure $k$ such that $S$ is an SLRA cover for $Q$.
The lemma below shows that near-optimal SLRA solutions exist.

\begin{lemma}
\label{lem:SLRA}
Consider the input instance of the {\lspc} problem.
There exists an SLRA solution having cost at most 16 times the cost of the optimal solution.
\end{lemma}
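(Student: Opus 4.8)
The plan is to start from an optimal (general) solution of the {\lspc} instance and massage it into an SLRA solution while paying only a constant factor. Let $(R^\star,\{k_e\})$ be an optimal solution, let $Q^\star$ be its coverage profile (so $Q^\star(e)=k_e$ and its measure is at least $k$), and let $OPT$ be its cost. The short resources of $R^\star$ already obey the goodness constraint and contribute $Q^{\sh}_{R^\star}(e)$ at every edge, so I would keep \emph{both} the coverage profile $Q^\star$ and the short resources of $R^\star$ untouched. With these fixed, the only obstruction to $R^\star$ being an SLRA cover is that the residual demand $D_e := Q^\star(e)-Q^{\sh}_{R^\star}(e)$ is covered by the long resources only in the summed sense, $\sum_{i \ni e,\, i \text{ long}} w_i f_{R^\star}(i) \ge D_e$, rather than by a single long resource. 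Thus the whole task reduces to the following conversion: given long (interval) resources that cover the profile $D$ in the summed sense, build a multiset of long resources covering $D$ in the single-resource sense (for each edge $e$ some long resource $i$ has $w_i f(i) \ge D_e$) at cost at most a constant times the long-resource cost of $R^\star$. Reusing the short resources verbatim then yields an SLRA solution of cost at most that same constant times $OPT$.

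For the conversion I would first perform two harmless roundings, each costing a factor of $2$: round every long capacity $w_i$ down to the nearest power of two (doubling copies restores the lost capacity), and round each residual $D_e$ up to the nearest power of two $\hat D_e$. Now partition the edges by height level, $E_t := \{ e : \hat D_e = 2^t\}$. For a fixed $t$ I must supply, for every edge $e \in E_t$, a single long resource carrying (with its copies) height at least $2^t$ over $e$; since the summed long coverage at such an edge is at least $D_e \ge 2^{t-1}$, there is enough raw material in $R^\star$ to do so. The level-$t$ subproblem is an interval-covering problem: cover $E_t$ using the long resources of $R^\star$, each scaled up by taking copies to height $2^t$, at minimum cost. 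I would solve it by a left-to-right sweep that, over each maximal run of $E_t$, repeatedly selects the available long resource reaching farthest to the right and scales it to height $2^t$. The quantitative heart of the argument is to charge the cost of this level-$t$ cover against the long resources that $R^\star$ already uses to push the demand above height $2^{t-1}$ along $E_t$, and to show that no long resource of $R^\star$ is charged across more than a bounded number of levels; composing the two rounding factors with this layered-covering factor is what produces the final bound of $16$.

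The main obstacle is precisely this per-level interval-covering bound, and it is where the single-resource restriction actually bites. The difficulty is that at an edge of $E_t$ the residual demand in $R^\star$ may be assembled from many partially overlapping long resources of the same rounded capacity class, none of which individually spans enough of $E_t$; scaling any one of them up to height $2^t$ wastes capacity on the edges where it overlaps its neighbours, and one must show that this wasted capacity, compounded along the path and across overlapping resources, stays within a constant factor. The sweep must therefore be charged carefully to the long resources of $R^\star$ active at each critical point, in the spirit of the critical-edge charging used earlier in \autoref{lem:small}, so that the geometric decay of the thresholds $2^t$ keeps the total charge a constant multiple of the long-resource cost. A secondary point to verify is the bookkeeping: that the retained coverage profile still has measure at least $k$, and that leaving the short resources of $R^\star$ unchanged preserves goodness (at most one short resource per edge), so that the object produced is a bona fide SLRA solution of cost at most $16\,OPT$.
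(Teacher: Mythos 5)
Your outer reduction is exactly the one the paper uses: fix the optimal coverage profile $Q$, keep the short resources of the optimum verbatim (goodness and the measure-$k$ bookkeeping are indeed immediate), form the residual profile $Q_l(e)=Q(e)-Q^{\sh}(e)$, and reduce everything to converting a multiset of long resources that covers $Q_l$ in the summed sense into one that covers it in the single-long-resource (SLRA) sense at constant-factor cost. But at precisely this point the paper stops: it invokes Lemma~\ref{lem:esa-SLRA}, a reformulation of Theorem~1 of \cite{esa2011}, which supplies the factor-$16$ conversion as a black box, and then takes the union with $\opt_s$. You instead attempt to prove the conversion from scratch, and your write-up leaves its quantitative core unproven --- you yourself flag the per-level interval-covering charge as ``the main obstacle'' and describe why it is hard (overlapping same-class resources, wasted capacity when one resource is scaled to full height) without resolving it. Since the entire content of the lemma beyond trivial bookkeeping \emph{is} this conversion, the proposal has a genuine gap exactly where the proof has to do work.

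Moreover, the sketched accounting is not obviously repairable as stated. After spending two factors of $2$ on rounding capacities down and residuals up to powers of two, your layered sweep must achieve factor $4$ to land on $16$, yet nothing in the sketch bounds the total charge: a single long resource $i$ (with its copies) contributes summed height $w_i f(i)$ across its whole span, which may intersect the level sets $E_t$ of \emph{many} distinct levels $t$, each possibly fragmented into many maximal runs, so the same resource can be charged repeatedly both across levels and across runs within a level. The appeal to ``geometric decay of the thresholds $2^t$'' does not immediately help, because the cost of the level-$t$ cover you build is $\lceil 2^t/w_i\rceil c_i$ per selected resource --- i.e., proportional to the per-unit cost rate $c_i/w_i$ times $2^t$, not to $2^t$ with a fixed base cost --- so summing over levels at which a resource is charged need not telescope. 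Establishing such a bound requires a genuinely more structured decomposition of the long-resource cover, which is exactly what the proof of Theorem~1 in \cite{esa2011} provides; your argument would be complete if you either carried out that charging analysis in full or, as the paper does, cited the known result.
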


To prove Lemma \ref{lem:SLRA}, we will use the following lemma, which is a reformulation of Theorem 1 {in} \cite{esa2011}.
For a multiset of resources $S$, let $c(S)$ denote its cost.
\begin{lemma}
\label{lem:esa-SLRA}
\cite{esa2011} Let $\wh{S}$ be a multiset of long resources covering a profile $\wh{Q}$ over the range $[1,|V|]$.
Then, there exists a multiset of long resources $S'$ such that $S'$ is a SLRA cover for $Q$
and $c(S')\leq 16\cdot c(\wh{S})$.
\end{lemma}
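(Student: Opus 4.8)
The plan is to recognize that this lemma is exactly Theorem~1 of \cite{esa2011} transported into the notation of the {\lspc} problem, so the ``proof'' is a reduction: set up the correct derived instance, invoke the external structural theorem, and then read off both the SLRA property and the cost bound. The first simplification I would make is to note that, since $S'$ is required to consist of long resources only, the short-resource term vanishes ($Q^{\sh}_{S'}(e)=0$), so the SLRA-cover condition degenerates to the clean statement that for every edge $e$ there is a \emph{single} long resource $i\in S'$ with $w_i f_{S'}(i)\ge \wh{Q}(e)$ (I read the ``$Q$'' in the statement as $\wh{Q}$). Thus the content reduces to: starting from a multiset $\wh{S}$ of long intervals whose \emph{summed} capacities dominate $\wh{Q}$, produce a multiset in which a single interval (with enough copies) dominates each edge, paying at most $16$ times the cost.

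Next I would pin down the correspondence with the source theorem. The profile $\wh{Q}$ plays the role of the edge-demand instance of \cite{esa2011}; each long resource $i$, with interval $I_i$, capacity $w_i$ and cost $c_i$, maps to an interval resource there; and the multiplicity function $f_{\wh{S}}(\cdot)$ records copies, so that $c(\wh{S})=\sum_i c_i f_{\wh{S}}(i)$ is precisely the objective of the source problem. The hypothesis that $\wh{S}$ covers $\wh{Q}$ is exactly feasibility for the full-cover instance on this profile, namely $\sum_{i\,\sim\,e} w_i f_{\wh{S}}(i)\ge \wh{Q}(e)$ for every edge $e$. I would then verify that the \emph{single-resource-assignment} guarantee output by Theorem~1 of \cite{esa2011} coincides verbatim with the SLRA-cover condition above, and that its approximation factor is the claimed $16$; the cost bound $c(S')\le 16\cdot c(\wh{S})$ is then immediate from the correspondence.

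For a reader who wants to see why such a $16$-factor structural statement can hold, the intuition behind the imported result is a geometric layering of the demand combined with a single-dominant-interval argument within each layer: one slices $\wh{Q}$ into logarithmically many bands by powers of two so that within a band the residual demand is essentially uniform, converts a summed interval cover of a near-uniform band into one where a single interval (taking the required number of copies) dominates each edge using the benign structure of interval covers on a line, and then sums the geometrically decreasing contributions across bands so the total blow-up stays bounded by a constant. I would present this only as the motivating idea and would not re-derive it, since it is exactly the technical core carried over from \cite{esa2011}.

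The main obstacle is definitional bookkeeping rather than new combinatorics: one must confirm that the ``copies'' model here ($f_S(i)$, with cost $c_i f_S(i)$ and capacity $w_i f_S(i)$) is the same multiplicity model used in \cite{esa2011}; that the SLRA requirement ``one resource covers the residual demand by itself'' matches their single-assignment output once the short-resource residual $Q^{\sh}_{S'}$ is zeroed out (legitimate here because $S'$ has no short resources); and that feasibility of $\wh{S}$ for $\wh{Q}$ is their notion of a full cover on the induced profile. Once these identifications are made precise, the lemma follows directly from the cited theorem with the stated factor of $16$.
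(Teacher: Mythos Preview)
Your proposal is correct and matches the paper's treatment: the paper gives no proof of this lemma at all, stating it explicitly as ``a reformulation of Theorem~1 in \cite{esa2011}'' and using it as a black box. Your identification of the correspondence---that with $S'$ consisting only of long resources the short-resource term $Q^{\sh}_{S'}$ vanishes, so the SLRA condition becomes exactly the single-resource-assignment guarantee of the cited theorem, and that the covering hypothesis on $\wh{S}$ is feasibility for the full-cover instance on the profile $\wh{Q}$---is precisely the bookkeeping needed to justify the reformulation, and is in fact more than the paper itself spells out.
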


\begin{proof}[Proof of Lemma \ref{lem:SLRA}]
Let $\opt$ be the optimum solution and let $Q$ be the profile of measure $k$ covered by $\opt$.
Let $\opt_l$ and $\opt_s$ be the multiset of long and short resources contained in $\opt$, respectively.
Define $Q_l$ to be the residual profile over $[1,|V|]$: $Q_l(e)=Q(e)- Q^{\sh}_S(e)$.
The multiset $\opt_l$ covers the profile $Q_l$.
Invoke Lemma \ref{lem:esa-SLRA} on $\opt_l$ and $Q_l$ (taking $\wh{S}=\opt_l$ and $\wh{Q}=Q_l$)
and obtain a multiset of long resources $S'$ which forms a SLRA cover for $Q_l$.
Construct a new multiset $S$, by taking the union of $S'$ and $\opt_s$.
Notice that $S$ is a SLRA solution.  The cost  of $S'$ is at most 16 times the cost of $\opt_l$.
So, $S$ has cost at most 16 times the cost of $\opt$.
\end{proof}

Surprisingly, we can find the {\em optimum} SLRA solution $S^*$ in polynomial time,
as shown in Theorem~\ref{thm:xGGG} below.
Lemma \ref{lem:SLRA} and Theorem \ref{thm:xGGG} imply that
$S^*$ is a $16$-factor approximation to the optimum solution.
This completes the proof of Theorem~\ref{thm:xEEE}.

\begin{theorem}
\label{thm:xGGG}
The optimum SLRA solution $S^*$ can be found in polynomial time.
\end{theorem}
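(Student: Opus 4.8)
The plan is to first strip away the profile $Q$ and recast the search for an optimum SLRA solution as a purely combinatorial selection problem on the path, and then solve that problem by an interval dynamic program whose correctness rests on a laminarity property of the long resources. The starting observation is that, for a fixed good multiset $S$, the best profile it can SLRA-cover assigns to each edge $e$ the value $\min(d_e,\, Q^{\sh}_S(e) + M(e))$, where $Q^{\sh}_S(e)$ is the capacity of the (at most one) short resource chosen at $e$ and $M(e) = \max\{ w_i f_S(i) : i \in S \text{ long},\ i \sim e \}$ is the largest effective height of a single long resource covering $e$ (so the residual $Q(e)-Q^{\sh}_S(e)$ is exactly what one long resource must supply). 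Hence $S$ is an SLRA solution precisely when $\sum_e \min(d_e,\, Q^{\sh}_S(e) + M(e)) \ge k$, and the task reduces to: pick at most one short per edge and a multiset of long resources with multiplicities so that this sum is at least $k$, at minimum cost. Since the $d_e$ are integers bounded by $n$, the effective height $w_i f_S(i)$ of a long resource never needs to exceed $\max_e d_e \le n$, so only polynomially many multiplicities per resource are ever useful, and covering exactly $k$ units suffices (we may always shrink $Q$).

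The obstacle is that $M(\cdot)$ is the upper envelope of the intervals $I_i$ taken at heights $w_i f_S(i)$, so many long resources can overlap a single edge, and a naive sweep would have to remember the entire set of currently active resources. The key lemma I would prove is that there is an optimum in which the chosen long resources admit a laminar decomposition: for each chosen resource $i$, define its \emph{region} to be the minimal interval spanning all edges at which $i$ attains the envelope; then each region is contained in the corresponding $I_i$, and any two regions are nested or disjoint, with a taller resource's region nested strictly inside a shorter one's. This follows by an exchange argument on the envelope — where two chosen resources overlap, the shorter one lies on the envelope only on a contiguous portion disjoint from the taller one's region, and a resource that is tallest at both endpoints of an interval is, by interval containment, available throughout it. Consequently the envelope height at every edge equals the height of the innermost region containing that edge.

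Guided by this structure, I would set up an interval DP with value $g([a,b], h, t)$ equal to the minimum cost to attain at least $t$ units of coverage among the edges of $[a,b]$, given an ambient long-height $h$ already guaranteed (and paid for) by an enclosing resource whose span contains $[a,b]$. A transition either installs one long resource $i$ as the outermost resource over a contiguous region $[p,q] \subseteq I_i \cap [a,b]$ at some useful height $h_i > h$, paying $(h_i/w_i)\,c_i$ and recursing on $[a,p)$ and $(q,b]$ with ambient $h$ and on $[p,q]$ with ambient $\max(h,h_i)$; or, at the level of single edges, it sets the coverage to $\min(d_e,\, h + s_e)$, where $s_e$ is the capacity of an optional single short resource at $e$ whose cost is paid only if used. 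The coverage parameter $t$ is distributed across sub-intervals by the usual convolution over this index. Since $h$ ranges over the polynomially many useful heights, $t \le k \le n$, the intervals number $O((n+m)^2)$, and each transition tries $O(m)$ resources and $O(n)$ heights, the whole DP runs in polynomial time; the laminarity lemma guarantees it searches a family containing an optimum, and the optimum SLRA solution $S^*$ is read off from $g([1,|V|], 0, k)$.

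I expect the main difficulty to be the laminarity lemma; everything afterward is a routine, if multiply-indexed, interval DP. The exchange argument needs the most care in the nested case, where a tall resource splits a shorter resource's on-envelope set into two pieces: one must check that crediting the shorter resource a single maximal region (and charging it only once, with taller resources handled as nested recursive children) still reproduces the required envelope — which is exactly the property that lets the DP pay for each resource once. A secondary point to verify is that installing a resource over a sub-interval $[p,q] \subsetneq I_i$ is harmless, since its height is in fact available on all of $I_i$ and crediting it on less can only under-count coverage.
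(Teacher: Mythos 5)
Your proposal is correct in substance, and it reaches essentially the same algorithm as the paper by a genuinely different justification. The paper's proof rests on a decomposition lemma (Lemma~\ref{lem:decomp}, itself built on Lemma~\ref{lem:esa-timecut}, imported from the earlier work \cite{esa2011}): any $h$-free SLRA cover either uses shorts only, splits at a vertex-cut, or contains a long resource $i^*$ outside whose span only shorts are needed; this yields a DP over triples $([a,b],q,h)$ with three cases $E_1,E_2,E_3$, where choosing $\alpha$ copies of $i$ (with $\alpha w_i > h$) raises the ambient parameter to $\alpha w_i$ on the recursive call for $[s_i,t_i]$. You instead prove your own structural fact from scratch: reformulating the objective as $\sum_e \min(d_e,\,Q^{\sh}_S(e)+M(e))$ with $M(e)$ the upper envelope of single-resource heights $w_i f_S(i)$, and showing the envelope regions of an optimum form a laminar family with taller regions nested inside shorter ones, so that the envelope at each edge is the innermost region's height. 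Your DP $g([a,b],h,t)$ is then isomorphic to the paper's table $M([a,b],q,h)$ (your ambient $h$ is exactly the paper's ``$h$-freeness'', your $t$ its $q$); the one operational difference is that your single transition --- install $i$ over a sub-region $[p,q]\subseteq I_i\cap[a,b]$ and recurse on both sides with ambient $h$ --- subsumes the paper's separate vertex-cut case $E_2$ and its more restrictive interval-cut case $E_3$ (which forces shorts-only outside $I_{i^*}$). What each buys: the paper's route is shorter because the hard structural work is cited from \cite{esa2011}, while yours is self-contained, at the price of the laminarity lemma, which is the one place your sketch needs tightening --- equal-height resources can produce crossing regions unless you fix a tie-breaking rule for who ``attains'' the envelope (your exchange argument then goes through, since interleaved regions would force $h_B \ge h_C$ and $h_C \ge h_B$ simultaneously). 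Two small repairs: the installation cost should be $\lceil h_i/w_i\rceil c_i$, i.e., restrict $h_i$ to multiples $\alpha w_i$ with $\alpha w_i > h$ exactly as in the paper's $E_3$; and your polynomial bound on useful heights presumes $\max_e d_e$ is polynomially bounded, an assumption the paper also makes explicitly when invoking this theorem.
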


The rest of the section is devoted to proving Theorem~\ref{thm:xGGG}.
The algorithm goes via dynamic programming.
The following notation is useful in our discussion.
\begin{itemize}
\item
Let $S$ be a good set of resources consisting of only short resources, and let $[a,b]$ be a range.
For a profile $Q$ defined over $[a,b]$, and an integer $h$,
 $S$ is said to be an {\em $h$-free cover} for $Q$, if for any $e\in [a,b]$,
$Q^{\sh}_S(e)\geq Q(e)-h$.
The set $S$ is said to be an {\em $h$-free $q$-cover} for $[a,b]$,
if there exists a profile $Q$ over $[a,b]$  such that $Q$ has measure $q$ and $S$
is a $h$-free cover for $Q$.
\item
Let $S$ be a good multiset of resources and let $[a,b]$ be a range.
For a profile $Q$ defined over $[a,b]$, and an integer $h$,
the multiset $S$ is said to be an {\em $h$-free SLRA cover} for $Q$,
if for any edge $e\in [a,b]$ satisfying $Q(e)-Q^{\sh}_S(e) > h$,
there exists a long resource $i\in S$ such that $w_if_S(i)\geq Q(e)-Q^{\sh}_S(e)$.
For an integer $q$, we say $S$ is an {\em $h$-free SLRA $q$-cover} for the range $[a,b]$,
if there exists a profile $Q$ over $[a,b]$ such that $Q$ has measure $q$ and $S$
is a $h$-free SLRA cover for $Q$.
\end{itemize}
Intuitively, $h$ denotes the demand covered by long resources already selected (and their cost accounted for)
in the previous stages of the algorithm; thus,
edges whose residual demand is at most $h$ can be ignored.
The notion of ``$h$-freeness" captures this concept.

We shall first argue that any $h$-free SLRA cover $S$ for a profile $Q$ over a range $[a,b]$ exhibits
certain interesting decomposition property.
Intuitively, in most cases, the range can be partitioned into two parts (left and right),
and $S$ can be partitioned into two parts $S_1$ and $S_2$ such that
$S_1$ can cover the left range and $S_2$ can cover the right range
(even though resources in $S_1$ may contain some edges in the right range and those in $S_2$
may be contain edges in the left range).
In the cases where the above decomposition is not possible,
there exists a long resource spanning almost the entire range.

\begin{lemma}
\label{lem:decomp}
Let $[a,b]$ be any range, $Q$ be a profile over $[a,b]$ and let $h$ be an integer.
Let $S$ be a good multiset of resources providing an $h$-free SLRA-cover for $Q$.
Then, one of the following three cases holds:
\begin{itemize}
\item
The set of short resources in $S$ form a $h$-free cover for Q.
\item
{\it Vertex-cut: } There exists a vertex $v^*$, $a\leq v^*\leq b-1$, and a partitioning of $S$ into $S_1$ and $S_2$
such that $S_1$ is an $h$-free SLRA-cover for $Q_1$ and $S_2$ is an $h$-free SLRA-cover for $Q_2$,
where $Q_1$ and $Q_2$ profiles are obtained by restricting $Q$ to $[a,v^*]$ and $[v^*+1,b]$, respectively.
\item
{\it Interval-cut:}
There exists a long resource $i^*\in S$
such that the set of short resources in $S$ forms a $h$-free cover for both $Q_1$ and $Q_2$, where
$Q_1$ and $Q_2$  are the profiles obtained by restricting $Q$ to $[a,s_{i^*}-1]$ and  $[t_{i^*}+1,b]$
respectively.
\end{itemize}
\end{lemma}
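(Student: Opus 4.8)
The plan is to classify the edges according to whether the short resources in $S$ already handle them, and then reduce the remaining structure to an interval-covering question about the long resources. Call an edge $e \in [a,b]$ \emph{hard} if $Q(e) - Q^{\sh}_S(e) > h$, and \emph{soft} otherwise. If there are no hard edges, then $Q^{\sh}_S(e) \ge Q(e) - h$ for every $e \in [a,b]$, which is exactly the assertion that the short resources in $S$ form an $h$-free cover for $Q$; this is the first case. So assume hard edges exist, and let $e_L$ and $e_R$ be the leftmost and rightmost hard edges. By the $h$-free SLRA property, each hard edge $e$ has at least one \emph{witness}: a long resource $i \in S$ with $e \in [s_i,t_i]$ and $w_i f_S(i) \ge Q(e) - Q^{\sh}_S(e)$.

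The first dichotomy I would draw is whether some single long resource spans all the hard edges. If there is a long resource $i^* \in S$ with $s_{i^*} \le e_L$ and $t_{i^*} \ge e_R$, then $[a, s_{i^*}-1]$ and $[t_{i^*}+1,b]$ contain no hard edges, so the short resources form an $h$-free cover of $Q$ restricted to each of these ranges; this is precisely the interval-cut case. Thus it remains to produce a vertex-cut under the assumption that \emph{no} single long resource contains both $e_L$ and $e_R$. Note that in this regime $e_L < e_R$, since a single hard edge is always spanned by its own witness.

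For the vertex-cut I would run a greedy ``reach'' from the left. Short resources never straddle a vertex, so each can be assigned to the side containing its single edge without affecting anything; the entire difficulty is deciding, for every long resource, which side of $v^*$ it goes to (it must go entirely to one side) so that every hard edge left of $v^*$ keeps a witness in $S_1$ and every hard edge right of $v^*$ keeps a witness in $S_2$. Starting with $\rho = e_L$, repeatedly update $\rho \leftarrow \max\{t_i : i \text{ long},\ s_i \le \rho,\ t_i > \rho\}$. If at some step this set is empty while $\rho < e_R$, set $v^* := \rho$: then no long resource straddles $v^*$, so each long resource lies entirely in $[a,v^*]$ or entirely in $[v^*+1,b]$, giving the partition $S = S_1 \sqcup S_2$. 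The witness of any hard edge $e \le v^*$ satisfies $s_i \le e \le v^*$, hence $t_i \le v^*$ by the stall condition, so it lies in $S_1$; symmetrically the witness of any hard edge $e > v^*$ satisfies $t_i \ge e > v^*$, forcing $s_i > v^*$, so it lies in $S_2$. Since $e_L \le v^* < e_R$ we have $a \le v^* \le b-1$, so this is a valid vertex-cut.

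The hard part is the remaining case, in which the reach attains a value $\ge e_R$ through a \emph{chain} of pairwise-overlapping long resources $i_0, i_1, \ldots$ even though no single one spans $[e_L,e_R]$. Here a straddle-free cut point need not exist, so some long resource genuinely crosses every candidate $v^*$, and a straddling resource that is the sole witness of a hard edge on the far side of the cut would break the partition. My plan is an exchange argument: choose $v^*$ to be the right endpoint $t_{i_0}$ of the first chain resource, place $i_0$ (together with the witnesses of the left-hand hard edges) in $S_1$ and the straddling successor $i_1$ (together with everything starting to its right) in $S_2$, and then show that any hard edge ending up on the ``wrong'' side still retains a witness on its own side, using that its residual demand is covered by a \emph{single} long resource and that the overlap of consecutive chain resources lets one trade a straddling witness for a correctly-placed one. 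I expect the bulk of the work, and the main obstacle, to be exactly this reassignment of witnesses for the straddling long resources; once the partition is shown to preserve a witness for every hard edge on its correct side, the $h$-free SLRA-cover property of $S_1$ on $Q_1$ and of $S_2$ on $Q_2$ follows directly from the definitions.
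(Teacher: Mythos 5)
Your proposal tracks the paper's proof exactly up to the point where the real work begins, and then stops short. The paper's proof is essentially a reduction: it defines the set of hard edges $\calT' = \{e \in [a,b] : Q(e)-Q^{\sh}_S(e) > h\}$, observes (as you do) that $\calT' = \emptyset$ gives the first case, defines the residual profile $\wh{Q}(e) = Q(e) - Q^{\sh}_S(e)$ on $\calT'$, and then invokes Lemma~\ref{lem:esa-timecut} (a reformulation of Lemma~4 of \cite{esa2011}) as a black box: that cited lemma delivers exactly the dichotomy ``vertex-cut with a partition of the long resources, or a single resource spanning all of $\calT'$,'' and the two outcomes translate immediately into the vertex-cut and interval-cut cases of the present lemma. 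Your handling of the empty-$\calT'$ case, of the spanning-resource case (soft edges outside $[s_{i^*},t_{i^*}]$, hence short resources $h$-free cover the two flanking ranges), and of the stalled-reach subcase (no long resource straddles $v^*=\rho$, so every witness lands on the correct side) are all correct and match the paper's translation steps. The per-edge assignment of short resources is also the right move.

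The gap is the remaining case, which is precisely the combinatorial core that the paper outsources to \cite{esa2011} and that you only sketch: the hard edges are covered by a chain of pairwise-overlapping long resources, no single resource spans them all, and every candidate cut vertex between $e_L$ and $e_R$ is straddled by some long resource. Your plan --- cut at $v^* = t_{i_0}$ and repair by a local exchange --- faces a concrete obstruction you do not resolve: a long resource $j$ straddling the cut may be the \emph{unique} witness of a hard edge on each side (its capacity-times-multiplicity may exceed both residuals while every other resource through those edges falls short), yet $j$ need not span all hard edges, so neither placing $j$ on one side nor falling back to the interval-cut case is available, and there is no guarantee that overlapping chain neighbors have enough capacity to be ``traded in'' as substitute witnesses. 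Handling this requires a genuinely different, adaptive choice of the cut (in \cite{esa2011} this is done by an extremal/inductive argument over the witness structure, not a fixed cut at $t_{i_0}$). Since you yourself flag this reassignment as the bulk of the work and leave it unproven, the proposal as it stands does not establish the lemma; to complete it you would either need to prove the chain case in full or, as the paper does, reduce to and cite Lemma~\ref{lem:esa-timecut}.
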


We first extend the notion of an SLRA cover to subsets of edges.
Let $\calT\subseteq E$ be a set of edges and let $\wh{Q}$ be a profile over the set $\calT$.
A good multiset of resources $S$ is said to be a SLRA cover for $\calT$,
if for any edge $e\in \calT$, there exists a long resource $i\in S$ such that
$w_if_{S}(i)\geq Q(e)-Q^{\sh}_{S}(e)$. We will use the following lemma, which is a reformulation of Lemma 4 in \cite{esa2011}.

\begin{lemma}
\label{lem:esa-timecut}
Let $\wh{S}$ be a multiset consisting of only long resources.
Let $\wh{Q}$ be a profile over a non-empty set of edges  $\calT'\subseteq [a,b]$,
for some $a$ and $b$.
Suppose $\wh{S}$ is a SLRA cover for $\wh{Q}$. Then one of the following properties is true:
\begin{itemize}
\item
There exists a vertex $v^*\in [a,b-1]$ and a partition of $\wh{S}$ into $\wh{S}_1$ and $\wh{S}_2$ such that
$\wh{S}_1$ is a SLRA cover for $\wh{Q}_1$ and $\wh{S}_2$ is a SLRA cover for $\wh{Q}_2$,
where $\wh{Q}_1$ and $\wh{Q}_2$ are the profiles obtained by restricting $\wh{Q}$
to the edges in $\calT'\cap [a,v^*]$ and $\calT'\cap [v^*+1,b]$, respectively.
\item
There exists a resource $i^*\in \wh{S}$ spanning all the edges in $\calT'$.
\end{itemize}
\end{lemma}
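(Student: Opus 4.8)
The plan is to prove this essentially as a restatement of Lemma~4 of \cite{esa2011}, so the first thing I would do is line up the two vocabularies and verify that the hypothesis here matches theirs verbatim. Because $\wh{S}$ consists only of long resources, $Q^{\sh}_{\wh{S}}(\cdot)\equiv 0$, and the SLRA-cover condition for the edge set $\calT'$ collapses to the clean statement: for every edge $e\in\calT'$ there is a \emph{single} resource $i\in\wh{S}$ whose span contains $e$ and satisfies $w_i f_{\wh{S}}(i)\ge\wh{Q}(e)$. This one-resource-per-edge structure is exactly the feature that makes the decomposition possible, and it is what I would exploit in a direct argument; once the notational correspondence is checked, \cite[Lemma~4]{esa2011} yields the stated dichotomy directly.

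For a self-contained proof I would argue combinatorially via a witness assignment. For each $e\in\calT'$ fix a witness $\pi(e)\in\wh{S}$: a resource containing $e$ with $w_{\pi(e)}f_{\wh{S}}(\pi(e))\ge\wh{Q}(e)$. Let $e_\ell$ and $e_r$ be the leftmost and rightmost edges of $\calT'$. Since every resource is an interval, any resource of $\wh{S}$ containing both $e_\ell$ and $e_r$ contains the entire range $[e_\ell,e_r]\supseteq\calT'$, so the natural first branch is on whether $e_\ell$ and $e_r$ can be served together: if some resource containing $e_\ell$ reaches to or past $e_r$, that resource spans all of $\calT'$ and we land in the second (spanning) case. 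Otherwise every resource containing $e_\ell$ stops strictly before $e_r$, and I would search for a vertex $v^*$ with $e_\ell\le v^*<e_r$ at which $\wh{S}$ splits into $\wh{S}_1\sqcup\wh{S}_2$ with $\wh{S}_1$ covering $\wh{Q}_1$ and $\wh{S}_2$ covering $\wh{Q}_2$. The useful observation is that a resource \emph{crossing} $v^*$ is not an obstacle by itself: it can simply be placed on whichever side needs it, since it still contains, and hence can cover, the edges assigned to it there. The most natural construction is a left-to-right scan that tracks the furthest right endpoint reached by a witness of an already-included edge and cuts at the first position where this reach stalls before the next edge of $\calT'$; at such a point the witnesses of all left edges terminate weakly left of $v^*$, so putting them (and every non-witness resource) in $\wh{S}_1$ and the rest in $\wh{S}_2$ produces SLRA covers for $\wh{Q}_1$ and $\wh{Q}_2$.

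The main obstacle is precisely guaranteeing conflict-freeness of this split, and the naive scan does not by itself suffice. The bad case is an edge on one side of the cut that is coverable \emph{only} by a resource which is simultaneously the sole cover of an edge on the other side; then that $v^*$ is infeasible, and one must instead cut earlier and push the far-reaching forced witness entirely onto the right side (a left-to-right reach scan can be fooled into running past a perfectly good earlier cut for exactly this reason). The technical heart of the lemma is the argument that one can always either choose the cut point and the witnesses so as to avoid every such conflict, or else conclude that a single resource is forced to span $\calT'$, collapsing into the second case; this uses both the single-resource SLRA property and the freedom in the witness choice in an essential way. Since this is exactly what is established in \cite[Lemma~4]{esa2011}, my proof would, after verifying the correspondence described above, invoke that result rather than re-derive the conflict analysis from scratch.
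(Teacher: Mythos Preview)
Your proposal is correct and matches the paper exactly: the paper does not prove this lemma at all but simply states it as ``a reformulation of Lemma~4 in \cite{esa2011}'' and cites that result, which is precisely what you do after checking that the long-resources-only hypothesis makes $Q^{\sh}_{\wh{S}}\equiv 0$ so the SLRA condition collapses to the single-resource-per-edge form used in \cite{esa2011}. Your additional combinatorial sketch (witness assignment, left-to-right reach scan, and the conflict discussion) is extra intuition the paper does not provide, but your ultimate proof step---invoke \cite[Lemma~4]{esa2011}---is identical to the paper's.
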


\begin{proof}[Proof of Lemma \ref{lem:decomp}]
Consider a good multiset of resources $S$ forming a $h$-free SLRA cover for a profile $Q$ over a range $[a,b]$.
Define the set of edges $\calT'$:
\[
\calT' = \{e\in [a,b]~:~Q(e)-Q^{\sh}_S(e)>h\}.
\]
If $\calT'$ is empty, then $S$ is a $h$-free cover for $Q$;
this corresponds to the first case of Lemma \ref{lem:decomp}.
So, assume $\calT'\neq \emptyset$.
Define a profile $\wh{Q}$ over the edges in $\calT'$: for any $e\in \calT'$,
let $\wh{Q}(e) = Q(e)-Q^{\sh}_S(e)$.
Notice that $S$ is a SLRA cover for the profile $\wh{Q}$.
Invoke Lemma \ref{lem:esa-timecut} (with $\wh{S}=S$).
Let us analyze the two cases of the above lemma.
Consider the first case in Lemma~\ref{lem:esa-timecut}.
In this case, there exists a vertex $v^*$ and a partitioning of $S$ into $S_1$ and $S_2$,
with the stated properties.
In this case, we see that $S_1$ and $S_2$ are $h$-free SLRA covers
for $[a,v^*]$ and $[v^*+1,b]$, respectively. This corresponds to the second case of Lemma \ref{lem:decomp}.
Consider the second case in Lemma~\ref{lem:esa-timecut}. In this case, there exists a long resource $i^*\in S$ such that $i^*$ spans
all the edges in $\calT'$. This means that for any $e\in [a,s_{i^*}-1]$ or $e\in [t_{i^*}+1,b]$,
$Q(e)-Q^{\sh}_S(e) \leq h$. Otherwise, $e \in \calT'$ and $i^*$ will contain $e$. This corresponds to the third case of Lemma \ref{lem:decomp}.
\end{proof}

We now discuss our dynamic programming algorithm.
Let $H=\max_{e \in E} d_e$ be the maximum of  the input demands.
The algorithm maintains a table $M$ with an entry for each triple $\langle [a,b], q, h\rangle$,
where $[a,b]\subseteq [1,|V|]$, $0\leq q\leq k$ and $0\leq h\leq H$.
The entry $M([a,b],q,h)$ stores the cost of the optimum $h$-free SLRA $q$-cover for the range $[a,b]$;
if no solution exists, then $M([a,b],q,h)$ will be $\infty$.
Our algorithm outputs the solution corresponding to the entry $M([1,|V|],k,0)$; notice that
this is optimum SLRA solution $S^*$. Since we are computing $M([1,|V|],k,0)$, the computation will depend only on $k$ and not on $H=\max_{e \in E} d_e$, as $h=0$. Computation of entries in both the tables $M$ and $A$ requires polynomial time, as is evident from the recurrence relations.

In order to compute the table $M$, we need an auxiliary table $A$.
For a triple $[a,b]$, $q$ and $h$, let $A([a,b],q,h)$ be the optimum $h$-free $q$-cover for $[a,b]$
(using only the short resources); if no solution exists $A([a,b],q,h)$ is said to be $\infty$.
It is straightforward to compute the table $A$ and this is explained in Section \ref{sec:table-A}.

\begin{figure*}
\fbox{
\begin{minipage}{0.95\textwidth}
\begin{eqnarray*}
E_1 &=& A([a,b],q,h).\\
E_2 &=& \min_{\substack{c\in [a,b-1] \\ q_1\leq q}} M([a,c],q_1,h) + M([c+1,b],q-q_1,h).\\
E_3 &=&
\quad
\min_{
\substack{(i\in \calL, \alpha\leq H)~:~\alpha w_i>h\\
           q_1,q_2,q_3~:~q_1+q_2+q_3 = q
         }
}
\begin{pmatrix}
\alpha \cdot c_i
+ A([a,s_i-1],q_1,h) \\
+M([s_i,t_i],q_2,\alpha w_i)
+ A([t_i+1,b],q_3,h) \\
\end{pmatrix}
\end{eqnarray*}
\end{minipage}
}
\caption{Recurrence relation for $M$}
\label{fig:formula}
\end{figure*}

Based on the decomposition lemma (Lemma \ref{lem:decomp}), we can develop a recurrence relation
for a triple $[a,b]$, $q$ and $h$.
We compute $M([a,b],q,h)$ as the minimum over three quantities $E_1$, $E_2$ and $E_3$ corresponding to the three cases
of the lemma. Intuitive description of the three quantities is given below and precise formulas are provided
in Figure \ref{fig:formula}. In the figure, $\calL$ is the set of all long resources\footnote{The input demands
$d_e$ are used in computing the table $A(\cdot,\cdot,\cdot)$}.
\begin{itemize}
\item
{\it Case 1: }
No long resource is used and so, we just use the
corresponding entry $A([a,b],q,h)$ of the table $A$.
\item
{\it Case 2: } There exists a vertex-cut $v^*$. We consider all possible values
of $v^*$. For each possible value of $v^*$, we try all possible ways in which $q$ can be divided between the left and right ranges.
\item
{\it Case 3: }
There exists a long resource $i^*$ such that the ranges to the left of and to the right of
$i^*$ can be covered solely by short resources.
We consider all the long resources $i$ and also the number of copies $\alpha$ to be picked.
Once $\alpha$ copies of $i$ are picked, $i$ can cover all edges with residual demand at most
$\alpha w_i$ in an SLRA fashion, and so the subsequent recursive calls can ignore these edges.
Hence, this value is passed to the recursive call.
We also consider different ways in which $q$ can be split into three parts -- left, middle and right.
The left and right parts will be covered by the solely short resources and the middle part will use both
short and long resources.
Since we pick $\alpha$ copies of $i$, a cost of $\alpha c_i$ is added.
\end{itemize}
We set $M([a,b],q,h)=\min\{E_1, E_2, E_3\}$. For the base case: for any $[a,b]$, if $q=0$ or $h = H$,
then the entry is set to zero.

The order in which the entries of the table are filled is explained in Section~\ref{sec:DP-ordering}. Computation of the entries in $A$ is explained in Section~\ref{sec:table-A}.
Using Lemma \ref{lem:decomp}, we can argue that the above recurrence relation correctly computes
all the entries of $M$. For the sake of completeness, a proof is included in Section~\ref{sec:recur-proof}.

\subsection{DP Ordering}
\label{sec:DP-ordering}
Define a partial order $\prec$ as follows.
For pair of triples $z=([a,b],q,h)$ and $z'=([a',b'],q',h')$,
we say that $z\prec z'$, if one of the following properties is true:
(i)$[a',b']\subseteq [a,b]$;
(ii) $[a,b]=[a',b']$ and $q<q'$;
(iii) $[a,b]=[a',b']$, $q=q'$ and $h>h'$.
Construct a directed acyclic graph (DAG) $D$ where the triples
are the vertices and an  edge is drawn from a triple $z$ to a triple $z'$,
if $z\prec z'$. Let $\pi$ be a topological ordering of the vertices in $D$.
We fill the entries of the table $M$ in the order of appearance in $\pi$.
Notice that the computation for any triple $z$ only refers to triples appearing
earlier than $z$ in $\pi$.

\subsection{Computing the table $A$}
\label{sec:table-A}
We now describe how to compute the auxiliary table $A$.
For a triple consisting of an edge $e$, $q\leq k$ and
$h\leq H$, define $\gamma(e,q,h)$ as the cheapest cost of
covering $q-h$ demand from the short resources containing $e$.
This is a Knapsack problem and can be computed by dynamic programming. Time-complexity of the Knapsack problem is $O(n_e(q-h))$, where $n_e$ is the number of short resources containing $e$.


Then, for a triple $\langle[a,b],q, h\rangle$, the entry $A([a,b],q,h)$ is governed
by the following recurrence relation. Of the demand $q$ that needs to be covered,
the optimum solution may cover a demand  $q_1$ from the edge $e$, and a demand $q-q_1$ from the  range $[a,b-1]$.
We try all possible values for $q_1$ and choose the best:
\[
A([a,b],q,h) =
\min_{
  \substack{
     q_1\leq \min\{q, d_b\}
   }
}
A([a,b-1],q -q_1,h) + \gamma(b,q_1,h).
\]
It is not difficult to verify the correctness of the above recurrence relation.

\subsection{Correctness of the Recurrence Relation (Figure \ref{fig:formula})}
\label{sec:recur-proof}
We prove \autoref{thm:xGGG} by induction on the position in which a triple appears in the topological ordering $\pi$.
The base case corresponds to triples that do not have a parent in $D$. \autoref{thm:xGGG} is trivially true in this case.

Consider any triple $z=([a,b],q,h)$. Let $S$ be the optimum $h$-free SLRA $q$-cover for $[a,b]$.
There exists a profile $Q$ over $[a,b]$ such that $Q$ has measure $q$ and $S$ is a $h$-free SLRA cover
for $Q$. Let us invoke Lemma \ref{lem:decomp} and consider its three cases.

Suppose the first case of the lemma is true.
Let $S_s$ be the set of short resources contained in $S$.
Then, $S_s$ is a $h$-free cover for $Q$. Therefore $E_1=A([a,b],q,h)\leq c(S_s) \leq c(S)$.

Suppose the second case of the lemma is true.
Let $v^*$ be the vertex  and $S_1$ and $S_2$ be the partition given by the lemma.
Let $Q_1$ and $Q_2$ be the profiles obtained by restricting $Q$ to the ranges $[a,v^*]$ and $[v^*+1,b]$,
respectively. Let the measures of $Q_1$ and $Q_2$ be $q_1$ and $q_2$, respectively.
Then $S_1$ is a $h$-free $q_1$-cover for $[a,v^*]$ and $S_2$ is a $h$-free $q_2$-cover for $[v^*+1,b]$.
Therefore, by induction, $M([a,v^*],q_1,h)\leq c(S_1)$
and $M([v^*+1,b],q_2,h)\leq c(S_2)$.
In computing the quantity $E_2$, we try all possible ways of partitioning the range $[a,b]$ and dividing the number $q$.
Hence, $E_2\leq c(S_1)+c(S_2)$. Since $c(S)=c(S_1)+c(S_2)$, we see that $E_2\leq c(S)$.

Suppose the third case of lemma is true.
Let $i^*$ be the long resource given by the lemma.
Let $S_1$ be   short resources  in $S$ that contain edges in $[a,s_{i^*}-1]$.
Similarly, let $S_3$ be the set of short resources  in $S$ that contain edges in $[t_{i^*}+1,b]$.
Let $S_2$ be the multiset of long resources  in $S$ and the set of short resources  in $S$
that contain edges in $[a,b]$.
Let $Q_1$, $Q_2$ and $Q_3$ be the profiles obtained by restricting $Q$ to the ranges $[a,s_{i^*}-1]$,
$[s_{i^*},t_{i^*}]$ and $[t_{i^*}+1,b]$, respectively.
The lemma guarantees that $S_1$ and $S_2$ are $h$-free covers for $Q_1$ and $Q_3$ respectively.
Let $q_1$, $q_2$ and $q_3$ be the measures of $Q_1$, $Q_2$ and $Q_3$, respectively.
We see that $A([a,s_{i^*}+1],q_1,h)\leq c(S_1)$ and $A([t_{i^*}+1,b],q_3,h)\leq c(S_3)$.
Let $\alpha^*=f_S(i^*)$ be the number of copies of $i^*$ present in $S$.
Notice that if $\alpha^* w_{i^*}\leq h$, then $i^*$ is not a useful resource,
because $i^*$ will be covering only edges in $[s_{i^*},t_{i^*}]$ with residual demands at most $h$;
but all such edges are free and need not be covered.
So, without loss of generality, assume that $\alpha^* w(i^*)>h$.
Since $i^*$ spans the entire range $[s_{i^*},t_{i^*}]$,
the resource $i^*$ can cover all edges in the above range with residual demands at most $\alpha^* w(i^*)$.
Let $S_2'=S_2-\{i^*\}$. Notice that $S_2'$ is a $(\alpha^* w_i)$-free SLRA cover for the profile $Q_2$.
Therefore, $S_2'$ is a $(\alpha^* w_i)$-free $q_2$-cover for the range $[s_{i^*},t_{i^*}]$.
Hence, by induction, $M([s_{i^*},t_{i^*}],q_2,\alpha^* w(i^*)) \leq c(S_2')$.
Therefore, $E_3 \leq c(S_1)+c(S_2)+c(S_3)=c(S)$.

The quantity $E= \min \{E_1, E_2, E_3\}$; so
$E\leq c(S)$. The proof is now complete.
\qed

\subsubsection{Running time}
Computing $E_1$ requires the table entry $A([a,b],q,h)$. Once we have filled the table $A$ in polynomial time, this requires constant time. So, computing $E_1$ requires polynomial time. Computing $E_2$ requires the table entries $M([a,v^*],q_1,h)$ and $M([v^*+1,b],q_2,h)$ for all possible values of $a \le v^* \le b$. So we need to compute $2(b-a+1)$ entries of the table $M$ which is polynomial. Hence, computing $E_2$ requires polynomial time. Computing $E_3$ requires the table entries $A([a,s_{i^*}+1],q_1,h)$, $M([s_{i^*},t_{i^*}],q_2,\alpha^* w(i^*))$ and $A([t_{i^*}+1,b],q_3,h)$. Since each of them can be computed in polynomial time, computing $E_3$ requires polynomial time. Hence, the overall running time is polynomial.

\section{Single Mountain Range: Proof of Theorem~\ref{thm:xCCC}}
\label{app:red}
In this section, we prove Theorem~\ref{thm:xCCC} via a reduction to {\lspc}. Recall that in the {\lspc} problem, we are given a demand profile over the set of edges $E$,
which specifies an integral demand $d_e$ for every edge $e$.
The input resources are of two types, {\em short} and {\em long}.
A short resource spans only one edge, whereas a long resource can span one or more edges.
Each resource $i$ has a cost $c_i$ and a capacity $w_i$.
The input also specifies a {\em partiality parameter} $k$.
A feasible solution $S$ consists of a multiset of resources $S$ and a coverage profile:
an integer $k_e$ for each edge $e$ satisfying $k_e \leq d_e$.
The solution should have the following properties:
(i) $\sum_e k_e \geq k$;
(ii) at any edge $e$, the sum of capacities of the resource intervals from $S$ containing $e$ is at least $k_e$;
(iii) for any edge $e$, at most one of the short resources containing $e$
is picked (however, multiple copies of a long resource may be included).
The objective is to find a feasible solution having minimum cost.

The reduction proceeds in two steps.
\subsection{First Step}
Let the input instance be $\cA$, wherein the input set of jobs form a mountain range $\cM = \{M_1, M_2, \cdots, M_r\}$. We will transform the instance $\cA$ to an instance $\cB$, with some nice properties:
(1) the input set of jobs in $\cB$ also form a mountain range;
(2) every resource $i$ in the instance $\cB$ is either narrow or wide (see Section~\ref{sec:overview} for the definitions);
(3) the cost of the optimum solution for the instance $\cB$ is at most $3$ times the optimal cost for
the instance $\cA$;
(4) given a feasible solution to $\cB$, we can construct a feasible solution to $\cA$ preserving the cost.

Consider each resource $i$ in $\cA$ and let $M_p, M_{p+1}, \cdots, M_q$ (where $1 \leq p \leq q \leq r$) be the sequence of mountains that $i$ intersects. Clearly, $i$ fully spans the mountains $M_{p+1}, \cdots, M_{q-1}$.
We will split the resource $i$ into at most $3$ new resources $i_1, i_2, i_3$; we say that $i_1$, $i_2$ and $i_3$
are {\em associated with} $i$.
The resource $i_2$ will fully span the mountains $M_{p+1}, \cdots, M_{q-1}$.
The span of the resource $i_1$ is the intersection of the span of  $i$ with the mountain $M_p$. Likewise, the span of the resource $i_3$ is the intersection of the span of $i$ with the
mountain $M_q$.
The capacities and the costs of $i_1$, $i_2$ and $i_3$ are declared to be the same as that
of $i$. We include $i_1, i_2, i_3$ in $\cB$.
The input set of jobs and the partiality parameter $k$, in $\cB$ are identical to that of $\cA$.
This completes the reduction.

It is easy to see that the first two properties are satisfied by $\cB$.
Let us now consider third property .
Given any solution $S$ for the instance $\cA$, we can construct a solution $S'$ for $\cB$ as follows.
For each copy of resource $i$ picked in $S$, include a single copy of $i_1$, $i_2$ and $i_3$ in $S'$.
Clearly, the cost of the solution $S'$ is at most thrice that of the cost of $S$.
Regarding the fourth property, given a solution $S$ to $\cB$, we can construct a solution $S'$ to $\cA$
as follows. Consider any resource $i$ in $\cA$ and let $i_1$, $i_2$ and $i_3$ be the resources in $\cB$
associated with $i$. Let $f_1, f_2, f_3$ be the number of copies of $i_1,i_2,i_3$ picked by solution $S$.
Let $f=\max\{f_1,f_2,f_3\}$. Include $f$ copies of the resource $i$ in the solution $S'$.
It is easy to see that $S'$ is a feasible solution to $\cA$ and that the cost of $S'$ is
at most the cost of $S$.

\subsection{Second Step}
In this step we reduce the problem instance $\cB$ to an {\lspc} instance $\cC$, with the following properties:
(1) the cost of the optimum solution for the instance $\cC$ is at most $8$ times the optimal cost for
the instance $\cB$;
(2) Given a feasible solution to $\cC$, we can construct a feasible solution to $\cB$ preserving the cost.

\subsubsection*{Reduction}
In the instance $\cC$, we retain only the peak edges of the various mountains in the instance $\cB$
so that the number of edges in $\cC$ is the same as the number of mountains $r$ in $\cB$. Let the mountain ranges
in $\cB$ ordered from left to right be $M_1, \ldots, M_r$, with $e_p$ being the peak edge of $M_p$.
For any  peak edge $e$ in the instance $\cB$, let $d_e$ be the number of jobs in $\cB$ that
contain the edge $e$; we assign  demand $d_e$ to the edge $e$ in the instance $\cC$.
 For any wide resource $i$ in $\cB$, fully spanning mountains $M_p, M_{p+1}, \cdots, M_q$,
create a long resource $i'$ in $\cC$ with the span $e_p, e_{p+1}, \ldots, e_q$. The cost and capacity of $i'$ are
the same as that of $i$.

The narrow resources in the instance $\cB$ are used to construct the short resources in the instance $\cC$
as follows.
Consider any specific mountain $M$ in the instance $\cB$ along with the collection of narrow resources $R$ that are
contained in the span of $M$, and let $e$ be the peak edge of $M$. Let $\cA_{SM}$ be the algorithm implied in Theorem~\ref{thm:xDDD} for the single mountain $M$. For any integer $\kappa$ ($1 \leq \kappa \leq d_e$), we add a short resource $i_s^{e,\kappa}$ with capacity $\kappa$. The cost $C$ of this resource is determined as follows. We apply $\cA_{SM}$ on $M$, with $\kappa$ as the
partiality parameter, and the set of narrow resources $R$ as the only resources.
Then, Theorem~\ref{thm:xDDD} gives us a solution of cost $C$ consisting of a multiset $R'$ of some resources in $R$,
that covers $\kappa$ of the jobs in the mountain $M$. The cost of the short resource $i_s^{e,\kappa}$ will be $C$.
We will call the (multi)set of narrow resources $R' \subseteq R$ in the instance $\cB$
as {\em associated} with the short resource $i_s^{e,\kappa}$.  This completes the description of the instance $\cC$
of the {\lspc} problem.

\subsubsection*{Validity of the reduction}
We will now argue the validity of the reduction.
Let us consider the first property:
the cost of the optimum solution to the instance $\cC$ has cost at most $8$ times
the cost of the optimum solution to the instance $\cB$.
The following lemma is useful for this purpose.

\begin{lemma}
\label{lem:LLL}
Let $J$ be a subset of jobs and $R$ be multiset of resources in the instance $\cB$
such that $R$ covers $J$ (note that $R$ contains only narrow or wide resources and $J$ forms a mountain range).
Let $R_1$ and $R_2$ be the narrow and the wide resources in $R$ respectively. Let $R_2'$ be a multiset
constructed by picking twice the number of copies of each resource in $R_2$.
Then, $J$ can be partitioned into two sets $J_1$ and $J_2$ such that $J_1$ is solely covered by the resources in $R_1$
and $J_2$ is solely covered by the resources in $R_2'$.
\end{lemma}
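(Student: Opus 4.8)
The plan is to exploit the non-overlapping structure of the mountains and to argue independently within each mountain $M_p$, then take unions. Since the spans of distinct mountains are disjoint and every narrow resource is contained in the span of a single mountain (by the construction of $\cB$ in the first reduction step), at any edge $e$ in the span of $M_p$ the only relevant objects are the jobs $J^{(p)} := J \cap M_p$, the narrow resources of $R_1$ lying inside $M_p$, and the wide resources that fully span $M_p$. Let $e_p$ be the peak edge of $M_p$, let $W_p$ be the total capacity of the wide resources of $R_2$ spanning $M_p$ (a wide resource containing any edge of $M_p$ must, by definition of wide, span all of $M_p$ and hence contain $e_p$), and let $N_p(e)$ be the total capacity of narrow resources of $R_1$ containing $e$. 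Because $R$ covers $J$, for every edge $e$ in the span of $M_p$ we have $W_p + N_p(e) \ge P_J(e) = P_{J^{(p)}}(e)$, the last equality holding since jobs of other mountains never reach the span of $M_p$.

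Next I would construct the split of $J^{(p)}$. The key structural fact is that every job in $M_p$ contains $e_p$, so on the left of the peak the job pieces $[s_j,e_p]$ are nested by their left endpoints, and symmetrically on the right. I would let $J_2^L$ be the (at most) $W_p$ jobs of $J^{(p)}$ with smallest left endpoints and $J_2^R$ the (at most) $W_p$ jobs with largest right endpoints, and set $J_2^{(p)} = J_2^L \cup J_2^R$ and $J_1^{(p)} = J^{(p)} \setminus J_2^{(p)}$. The crucial computation, which I expect to be the heart of the argument, is that for any edge $e \le e_p$ the number of jobs of $J_2^L$ containing $e$ equals $\min(W_p, P_{J^{(p)}}(e))$: a job of $J_2^L$ contains such an $e$ iff its left endpoint is at most $e$, and exactly $\min(W_p, P_{J^{(p)}}(e))$ of the $W_p$ leftmost jobs start at or before $e$. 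Hence after removing $J_2^{(p)}$ the residual profile at $e$ is at most $\max(0, P_{J^{(p)}}(e) - W_p)$, which is $\le N_p(e)$ by the covering inequality above. The symmetric argument with $J_2^R$ disposes of the edges to the right of $e_p$, and the peak edge is handled by either side. Thus the narrow resources $R_1$ cover $J_1^{(p)}$ at every edge.

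For the wide side I would observe that $|J_2^{(p)}| \le 2W_p$ and that every job of $J_2^{(p)}$ contains $e_p$, so the profile $P_{J_2^{(p)}}$ attains its maximum $|J_2^{(p)}|$ at $e_p$ and is at most $2W_p$ everywhere in the span of $M_p$. Since $R_2'$ supplies capacity exactly $2W_p$ at every edge of that span (it takes twice the copies of each wide resource of $R_2$), $R_2'$ covers $J_2^{(p)}$. Finally I would assemble the global sets $J_1 = \bigcup_p J_1^{(p)}$ and $J_2 = \bigcup_p J_2^{(p)}$; the disjointness of the mountain spans guarantees that $R_1$ covers $J_1$ and $R_2'$ covers $J_2$ edge by edge, yielding the claimed partition.

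The main obstacle is the removal computation in the second paragraph: the factor of two in $R_2'$ is precisely what lets us treat the two sides of each peak separately, and one must verify carefully that discarding the $W_p$ extreme jobs on each side cancels exactly the excess $\max(0, P_{J^{(p)}} - W_p)$ permitted by the covering inequality, using the nested (common-peak) structure of the jobs within a mountain. The remaining points — reducing to a single mountain and recombining across mountains — are routine consequences of the non-overlapping spans and of narrow resources being confined to a single mountain.
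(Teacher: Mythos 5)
Your proposal is correct and follows essentially the same route as the paper's proof: in each mountain you discard the $h$ leftmost and $h$ rightmost jobs (your $W_p$ is the paper's uniform wide height $h$), cover them by the doubled wide resources since their profile never exceeds $2h$, and cover the remainder by the narrow resources via the residual inequality $P_{J_1}(e)\le \max(0,P_J(e)-h)\le P_{R_1}(e)$, finally recombining over the disjoint mountain spans. Your explicit $\min(W_p, P_{J^{(p)}}(e))$ computation just spells out the step the paper states in one line, so the two arguments are the same.
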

\begin{proof}
For now, we assume that the mountain range comprises of a single mountain.
Let $P_R(\cdot)$, $P_{R_1}(\cdot)$, $P_{R_2}(\cdot)$ and $P_{R_2'}(\cdot)$ denote the profile of the resources in $R, R_1, R_2$ and $R_2'$ respectively.
Note that $P_{R_2}(\cdot)$ is a uniform bandwidth profile having uniform height, say $h$. This is because these correspond to wide resources, which span all of this mountain.
Let $J_L$ be the first $h$ jobs among all the jobs in $J$ sorted in ascending ordered by their left end-points.
Similarly,
let $J_R$ be the first $h$ jobs among all the jobs in $J$ sorted in descending order by their right end-points.
Intuitively, $J_L$ and $J_R$ correspond to the $h$ left-most and the $h$ right-most jobs in the mountain.

Let $J_2 = J_L \cup J_R$ and $J_1 = J \setminus J_2$.
Let $P_J(\cdot)$, $P_{J_1}(\cdot)$ and $P_{J_2}(\cdot)$ denote the profiles of the jobs in $J$, $J_1$ and $J_2$ respectively.

Note that the profile $P_{R_2'}(t)$ has height $2h$ throughout the span of the mountain whereas the profile
$P_{J_2}(\cdot)$ has height at most $2h$ at any edge. Thus $R_2'$ covers $J_2$.

We will now show that $R_1$ covers $J_1$.
Note that $P_{J_1}(e)=P_J(e)-P_{J_2}(e)$ for any edge $e$.
We partition the edges into two parts: $E_0=\{e:P_{J_1}(e)=0 \}$ and $E_{>0}=\{e:P_{J_1}(e)>0 \}$.
For the edges in $E_0$, there are no jobs remaining in $J_1$ for $R_1$ to cover.
For the edges in $\E_{>0}$, we note that $P_{J_1}(e) \le P_J(e)-h$
(because $J_2$ comprises of the left-most $h$ and right-most $h$ jobs of the mountain).
Also note that the profile $P_{R_1}(e) = P_R(e)-P_{R_2}(e) = P_R(e)-h$.
Since, $R$ covers $J$, this implies that $R_1$ is sufficient to cover $J_1$.

The proof can easily be extended to a mountain range as the mountains within a mountain range are disjoint.
\end{proof}

We are now ready to show that our reduction is valid. Let $\opt(\cB)$ and $\opt(\cC)$ be the cost of an optimal solution for the instances $\cB$ and $\cC$ respectively. 
\begin{lemma}
$\opt(\cC) \le 8 \cdot \opt(\cB)$. Further, given a feasible solution for $\cC$, one can convert it to a feasible solution for $\cB$ without increasing the cost.
\end{lemma}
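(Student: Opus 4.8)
The plan is to prove the two assertions separately, both relying on the split of a cover into its narrow and wide parts supplied by Lemma~\ref{lem:LLL}, together with the structural fact that a wide resource fully spanning a mountain delivers a \emph{uniform} amount of capacity over that mountain's entire span (and, crucially, over every edge of the peak edges of the mountains it spans, which become the edges of $\cC$).

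For the bound $\opt(\cC) \le 8\cdot\opt(\cB)$ I would start from an optimal solution $(R,J)$ of $\cB$ with $|J|=k$, split $R$ into its narrow part $R_1$ and wide part $R_2$, and apply Lemma~\ref{lem:LLL}: double $R_2$ into $R_2'$ and partition $J = J_1 \sqcup J_2$ with $J_1$ covered by $R_1$ and $J_2$ by $R_2'$. From this I build a feasible \lspc solution $S$ on $\cC$ by taking, for each wide resource of $R_2$, two copies of its associated long resource, and for each mountain $M_p$ with peak edge $e_p$ setting $a_p = |J_1 \cap M_p|$ and including the single short resource $i_s^{e_p,a_p}$; the coverage profile is $k_{e_p} = |J \cap M_p|$, so $\sum_p k_{e_p}=k$. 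Since every job of $M_p$ contains $e_p$ and mountains have disjoint spans, the profile of $J_2$ at $e_p$ equals $b_p := |J_2 \cap M_p|$, and as $R_2'$ covers $J_2$ the doubled long resources supply capacity at least $b_p$ at $e_p$; combined with the capacity $a_p$ of the short resource this meets $k_{e_p}=a_p+b_p$, with exactly one short resource per edge, so $S$ is feasible. For the cost, $i_s^{e_p,a_p}$ was priced by the $8$-approximation $\cA_{SM}$ of Theorem~\ref{thm:xDDD}, hence costs at most $8$ times the optimal single-mountain cover of $a_p$ jobs of $M_p$, which is at most $8$ times the cost of the narrow resources of $R_1$ lying in $M_p$; summing gives $8\,c(R_1)$, while the long resources cost $2\,c(R_2)$. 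Thus $c(S) \le 8\,c(R_1) + 2\,c(R_2) \le 8\,\opt(\cB)$, so the factor-$2$ doubling is absorbed by the factor-$8$ narrow pricing.

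For the converse I would take any feasible $\cC$-solution with coverage $\{k_{e_p}\}$, some long-resource copies, and at most one short resource $i_s^{e_p,\kappa_p}$ per peak, and form a $\cB$-solution by including, with identical multiplicities, the wide resources associated to the chosen long resources and, for each used short resource, its associated multiset $R'_p$ of narrow resources (which covers some set $T'_p$ of $\kappa_p$ jobs of $M_p$ and has cost equal to that of the short resource). This preserves the cost exactly. To certify that $k$ jobs are covered I work mountain by mountain: the long resources through $e_p$ correspond to wide resources fully spanning $M_p$, so they provide a uniform capacity $w_p$ over all of $M_p$; feasibility at $e_p$ gives $k_{e_p}\le \kappa_p + w_p$, and $k_{e_p}\le d_{e_p}=|M_p|$. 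I cover $T'_p$ with $R'_p$, and if $k_{e_p}>\kappa_p$ add an arbitrary set $T''_p$ of $k_{e_p}-\kappa_p\le w_p$ further jobs of $M_p$; the uniform capacity $w_p$ covers $T''_p$ because any job profile in a mountain is bounded above by the number of jobs it contains, and the narrow and wide capacities stack, so $M_p$ contributes $k_{e_p}$ covered jobs. Disjointness of the mountain spans lets these per-mountain covers be combined into one feasible $\cB$-cover of $\sum_p k_{e_p}\ge k$ jobs.

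The main obstacle will be the wide resources in the converse direction: a single wide resource may span several mountains, so I must argue that using its capacity simultaneously in each of those mountains is legitimate rather than a shared budget being overspent. The point to make precise is that a wide resource, being one interval, offers its full capacity on every edge it contains, so the per-mountain uniform capacities $w_p$ are genuinely available at all spanned mountains at once; this is exactly what lets the covering be verified separately on each (disjoint) mountain span, and it is also what makes the profile inequality $c(R'_p)\text{-profile} + w_p \ge P_{T'_p} + P_{T''_p}$ hold edge-by-edge across the whole span of $M_p$.
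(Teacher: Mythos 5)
Your proposal is correct and follows essentially the same route as the paper's proof: the forward direction applies Lemma~\ref{lem:LLL} to split the optimal $\cB$-cover into narrow and (doubled) wide parts, prices each mountain's short resource via the $8$-approximation of Theorem~\ref{thm:xDDD} against the narrow resources inside that mountain, and sums to $8\,c(R_1)+2\,c(R_2)\le 8\cdot\opt(\cB)$; the converse replaces long resources by their wide counterparts and short resources by their associated narrow multisets, topping up with arbitrary extra jobs covered by the uniform wide capacity. Your additional care about capacities stacking edge-by-edge and about a single wide resource serving several (span-disjoint) mountains simultaneously makes explicit two points the paper leaves implicit, but does not change the argument.
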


\begin{proof}
Let $\opt=(R,J)$ denote the optimal solution for the problem instance $\cB$,
where $J$ is the set of jobs picked by the solution and $R$ is the set of resources covering $J$ (we have $|J|=k$).
Let $R_1$ and $R_2$ be the set of narrow and wide resources in $R$.
Apply Lemma \ref{lem:LLL} for the solution $(R,J)$ and obtain a partition of $J$ into $J_1$ and $J_2$
along with $R_1$ (covering $J_1$) and $R_2'$ (covering $J_2$).
Let $\calM = M_1, M_2, \ldots, M_r$ be the input mountain range in the instance $\cB$ with peak edges
$e_1, e_2, \ldots, e_r$, respectively. Consider any mountain $M_q$.
Let $k_q$ be the number of jobs picked in $J$ from the mountain $M_q$.
Let $R_{1,q}$ be the set of (narrow) resources from $R_1$ contained within the span of $M_q$.
Thus, the set of resources $R_{1,q}$ cover the set of jobs in $M_q\cap J_1$ and let $k_q' = |M_q\cap J_1|$.
Corresponding to the value $k_q'$, we would have included a short resource, say $i_q^{e_1,k_q'}$
in the instance $\cC$, where $e_1$ is the peak edge of $M_q$; cost of $i_q$ is at most $8$ times the cost of $R_{1,q}$
(as guaranteed by Theorem \ref{thm:xDDD}).
The set of long resources in $R_2'$ cover at least $k_q-k_q'$ jobs within the mountain $M_q$.

Construct a solution to the instance $\cC$ by including $i_1, i_2, \ldots, i_q$;
and for each copy of a wide resource $i$ in $R_2'$,
include a copy of its corresponding long resource. Notice that this is a feasible solution to the instance $\cC$.
The cost of the short resources $\{i_1, i_2, \ldots, i_q\}$ is at most $8$ times the cost of $R_1$
and the cost of the long resources is the same as that of $R_2'$, which is at most twice that of $R_2$.
Cost of $\opt$ is the sum of costs of $R_1$ and $R_2$.
Hence, cost of the constructed solution is at most $8$ times the cost of $\opt$.

We now prove the second property:
let $S$ be a given a solution to the instance $\cC$ of the {\lspc} problem of cost $c$;
the solution also provides a coverage profile, $k_e$ for each edge $e$ (such that $\sum_e k_e = k$).
We produce a feasible solution $S'=(R',J')$ to the instance $\cB$ with the same cost $c$.
For each long resource picked by $S$, we retain the corresponding
wide resource in $R'$ (maintaining the number of copies).
Consider any edge $e$ in the {\lspc} instance and let $M$ be the corresponding mountain in the instance $\cB$.
The solution $S$ contains at most one short resource $i_s^{e,k_e'}$ containing $e$ of capacity $k_e'=w_{i_s}$.
Consider the multiset of short resources $R'$ in the instance $\cB$ associated with the resource $i_s^{e,k_e'}$.
The multiset $R'$ covers a set of $k_e'$ jobs contained in the mountain $M$.
Include all these $k_e'$ jobs in $J'$. Choose any other $k_e-k_e'$ jobs contained in $M$
and add these to $J'$; notice that the wide resources retained in $R'$ can cover these jobs.
This way we get a solution $S'$ for the instance $\cB$.
Cost of the solution $S'$ is at most the cost of $S$.
\end{proof}

\noindent
{\it Proof of Theorem \ref{thm:xCCC}: }
By composing the reductions given in the two steps,
we get a reduction from the {\PResAll} problem on a single mountain range to the {\lspc} problem.
The first step and the second step incur a loss in approximation of $3$ and $8$, respectively.
Thereby, the combined reduction incurs a loss of $24$.
Theorem \ref{thm:xEEE} provides a $16$-approximation algorithm for the {\lspc} problem.
Combining the reduction and the above algorithm, we get an algorithm for the {\PResAll} for a single mountain
range with an approximation ratio of $16\times 24 = 384$.

Note that the running time of the algorithm depends on $\max_{e \in E} d_e$. We can assume that $d_e$ is polynomially bounded for all $e \in E$, because initially all demands are 1 and so resources must have polynomially bounded capacity. Hence, the algorithm runs in polynomial time.

\section{Overall Algorithm}
Now that we have completed the description of the algorithm, we give an overall review of the algorithm.

\begin{enumerate}
	\item Use the decomposition \autoref{lem:XXX} to partition the input jobs into a set of mountain ranges.
	\item We obtain a constant factor approximation algorithm where the input jobs form a mountain.
	\item We then extend this result to a mountain range by reducing the problem to the {\lspc} problem.
	\item We extend this to several mountain ranges by using dynamic programming.
\end{enumerate}

\section{The {\PCResAll} problem}
In this section, we consider the {\PCResAll} problem. We prove the following:

\begin{theorem}
\label{DDD}
There is a $4$-factor approximation algorithm for the {\PCResAll} problem.
\end{theorem}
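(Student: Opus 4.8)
The plan is to give a cost-preserving reduction from {\PCResAll} to the {\ZeroOneResAll} problem, and then invoke the $4$-approximation algorithm for {\ZeroOneResAll} due to Chakaravarthy et al.~\cite{esa2011}. The central gadget is to model each penalty as a dedicated resource. Given a {\PCResAll} instance with jobs $\cJ$, resources $\calI$, and penalties $p_j$, I would construct a {\ZeroOneResAll} instance as follows: keep all jobs in $\cJ$ (but now every job must be covered) and all the original resources, and for every job $j$ introduce one new \emph{penalty resource} $r_j$ whose span equals $I_j$, with capacity $1$ and cost $p_j$. Since every job has demand $1$, including $r_j$ raises the covered profile by exactly one unit on precisely the edges of $I_j$, i.e.\ its profile is identical to the job profile $P_{\{j\}}$. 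Thus selecting $r_j$ is equivalent to ``excusing'' job $j$ at a cost of $p_j$.

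Next I would reconcile the fact that {\PCResAll} permits a multiset of resources, whereas {\ZeroOneResAll} uses each resource at most once. Because all job demands equal $1$ and the capacities $w_i$ are positive integers, the demand profile $P_{\cJ}(\cdot)$ has height at most $n$ at every edge, so no optimal multiset solution ever needs more than $n$ copies of any single resource. Hence I would place $n$ distinct copies of each original resource (each inheriting capacity $w_i$ and cost $c_i$) into the {\ZeroOneResAll} instance; this keeps the instance of polynomial size while letting a $0$-$1$ selection faithfully simulate an arbitrary multiset of the original resources. A single copy of each penalty resource suffices.

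I would then show that the reduction preserves the optimal cost exactly. For one direction, take any {\PCResAll} solution $(R,J)$; for each uncovered job $j \in \cJ \setminus J$ add $r_j$, and retain $R$. Since $P_R$ covers $P_J$ and the added penalty resources contribute exactly $P_{\cJ \setminus J}$, the combined resources cover $P_{\cJ}$, so all jobs are covered; the cost is $\mathrm{cost}(R) + \sum_{j \notin J} p_j$, matching the {\PCResAll} cost. Conversely, given a {\ZeroOneResAll} solution $R'$, set $P = \{ j : r_j \in R' \}$ and let $R = R' \setminus \{ r_j : j \in P \}$ be the chosen original resources, reinterpreted as a multiset. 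From $P_R + P_P \ge P_{\cJ}$ I get $P_R \ge P_{\cJ \setminus P}$, so with $J = \cJ \setminus P$ the resources $R$ cover $J$, and the cost again matches. Hence the two instances have equal optimum cost, and running the $4$-approximation of \cite{esa2011} on the constructed instance and translating its output back (which does not increase the cost) yields a {\PCResAll} solution of cost at most $4 \cdot \opt$.

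I expect the main obstacle to be the careful reconciliation of the multiset semantics of {\PCResAll} with the $0$-$1$ semantics of {\ZeroOneResAll}: one must verify that capping the number of useful copies at $n$ loses nothing in optimality while keeping the instance polynomial, and that the penalty-resource profiles coincide \emph{exactly} with the job profiles so that the two coverage inequalities translate cleanly in both directions. Once these profile identities are in place, the gadget construction and the cost bookkeeping are routine.
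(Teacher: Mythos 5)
Your proposal is correct and takes essentially the same route as the paper: the paper likewise reduces {\PCResAll} to the $4$-approximable {\ZeroOneResAll} problem (via the intermediate {\smfc} formulation) by introducing, for each job $j$, a use-at-most-once penalty resource of cost $p_j$ spanning $I_j$, with the original resources duplicated so that a $0$-$1$ selection simulates the multiset semantics. Your reverse direction via the profile identity $P_{\cJ} = P_{\cJ \setminus P} + P_P$ is a streamlined, equally valid substitute for the paper's \emph{standard form} exchange argument, and your explicit cap of $n$ copies per resource makes precise the duplication step the paper leaves implicit.
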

The proof proceeds by exhibiting a reduction from the {\PCResAll} problem to
the following full cover problem.

\noindent
{\it  Problem Definition:} We are given a demand profile which specifies an integral demand $d_e$ for each edge $e$.
The input resources are of two types, called S-type (short for single) and M-type (short for multiple).
A resource $i$ has
a capacity $w_i$, and cost $c_i$. A valid solution consists of a multiset of resources such that
it includes at most $1$ copy of any S-type resource; however arbitrarily many copies of any M-type resource may be picked. A feasible solution $S$ is a valid solution such that for any edge $e$, the total
capacity of the resources in $S$ containing $e$ is at least the demand $d_e$ of the edge $e$.
The objective is to find a feasible solution having minimum cost.
We call this problem the Single Multiple Full Cover ({\smfc}) problem.

The full cover problem, {\ZeroOneResAll} is considered in \cite{esa2011}.
The {\ZeroOneResAll} problem specifies demands  for
edges, and a feasible solution consist of a set of resources such that the demand of every edge is
fulfilled by the cumulative capacity of the resources containing that edge. The main
qualification is that in this problem setting, any resource may be picked up {\em at most
once}. In \cite{esa2011},  it is shown that this problem admits a $4$-factor
approximation algorithm. The {\smfc} problem easily reduces to the {\ZeroOneResAll} problem:
{\em S-type} resources may be picked up at most once, and keep copies of the {\em M-type}
resources so that it suffices to select any one of the copies.
Thus the algorithm and the performance guarantee claimed in \cite{esa2011} also implies the following:
\begin{theorem}\label{EEE}
There is a $4$-factor approximation to the {\smfc} problem.
\end{theorem}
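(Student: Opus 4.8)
The plan is to establish Theorem~\ref{EEE} through an approximation-preserving reduction from the {\smfc} problem to the {\ZeroOneResAll} problem, and then appeal to the $4$-approximation algorithm for {\ZeroOneResAll} given in \cite{esa2011}. The two problems differ in only one respect: {\smfc} allows arbitrarily many copies of an M-type resource, while {\ZeroOneResAll} permits every resource to be selected at most once. The reduction will therefore model each M-type resource by a sufficient supply of identical single-use copies, while S-type resources are handled verbatim.

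First I would build the {\ZeroOneResAll} instance. The demand profile $d_e$ is inherited unchanged. Each S-type resource is copied directly into the new instance; because such a resource may be used at most once in {\smfc}, the single-use rule of {\ZeroOneResAll} captures it exactly. For each M-type resource $i$ with capacity $w_i$ and cost $c_i$, I would instead introduce $N_i := \ceil{H/w_i}$ identical copies, where $H=\max_{e\in E} d_e$, each having the same span, capacity $w_i$, and cost $c_i$. Selecting any $t \le N_i$ of these copies (each usable once) in the new instance reproduces exactly the effect of taking $t$ copies of $i$ in {\smfc}, at the same total cost.

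Next I would check both directions of the reduction so that the two optima coincide. Given any feasible {\ZeroOneResAll} solution on the constructed instance, mapping every selected copy of an M-type resource back to a copy of $i$ and keeping the chosen S-type resources produces a feasible {\smfc} solution of identical cost, since coverage at each edge and total cost are both preserved. Conversely, consider an optimal {\smfc} solution. I claim that, without loss of generality, it uses at most $N_i$ copies of any M-type resource $i$: since every edge has demand at most $H$, already $\ceil{H/w_i}$ copies of $i$ saturate every edge that $i$ spans, so any surplus copies can be deleted without destroying feasibility or raising the cost. This pruned optimum therefore embeds as a feasible {\ZeroOneResAll} solution of the same cost. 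Hence the optimal costs of the two instances are equal, and running the $4$-approximation of \cite{esa2011} on the constructed {\ZeroOneResAll} instance and back-mapping yields a feasible {\smfc} solution of cost at most four times the {\smfc} optimum.

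The one point that needs care is the running time: the reduction must be polynomial, which requires $N_i$ to be polynomially bounded. This holds in our setting because $H$ is polynomially bounded --- in the instances arising from {\PResAll} all original job demands equal $1$, which forces the relevant resource capacities, and hence the demand values $d_e$, to be polynomially bounded, as remarked earlier. The truncation argument that caps the number of useful copies of each M-type resource at $N_i$ is the main step to get right, but it is elementary: no edge's demand exceeds $H$, so $N_i$ copies already meet any demand that resource $i$ can contribute to.
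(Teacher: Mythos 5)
Your proposal is correct and takes essentially the same approach as the paper, which likewise proves Theorem~\ref{EEE} by reducing {\smfc} to the {\ZeroOneResAll} problem---keeping S-type resources as single-use and duplicating each M-type resource into enough single-use copies---and then invoking the $4$-approximation of \cite{esa2011}. Your write-up simply makes explicit what the paper's one-line argument leaves implicit: the bound $\lceil H/w_i \rceil$ on the number of copies needed, the cost-preserving maps in both directions, and the polynomial bound on $H$ in the instances at hand.
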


We proceed to exhibit our reduction from the {\PCResAll} problem to the {\smfc} problem.
Given an instance $\I$ of the {\PCResAll} problem, we will construct an instance $\O$ of the
{\smfc} problem, such that any optimal solution $\opt(\I)$ can be converted (at no extra
cost) into an optimal solution $\opt(\O)$ for the instance $\O$. Consider any job $j$
in the instance $\I$; we will create a S-type resource $r(j)$ in the instance $\O$ corresponding to
$j$. The resource $r(j)$ will have the same length, left and right end-points as those of the job $j$, and will
have a cost $p_j$ (the penalty associated with job $j$). The resources in instance $\I$
will be labeled as M-type resources in the instance $\O$. The other parameters, such as
demands of edges, are inherited by $\O$ from the instance $\I$.

We show that
any feasible solution $S_\I$ to the {\PCResAll} problem corresponds to a feasible solution $S_\O$ (of the same cost) for the {\smfc} problem. Let ${\cal J}'$ denote the set of jobs that are not covered by the
solution $S_\I$ (thus, the solution pays the penalty for each of the jobs in $\cJ'$).

The multiset of resources in $S_\O$ consists of the (M-type) resources that exist in the solution $S_\I$, and
the S-type resources $r(j)$ in $\O$ corresponding to every job $j$ in $\cJ'$.
Any job $j$ that is actually covered by the set of resources in $S_\I$ is also covered
in the solution $S_\O$, and the resources utilized to cover the job are the same. A job $j$ that is
not covered by the resources in $S_\I$ pays a penalty $p_j$ in the solution $S_\I$; however this
job $j$ in $\O$ can be covered by the S-type resource $r(j)$ in the solution
$S_\O$. Thus, the solution $S_\O$ is a feasible solution to the instance $\O$, and has cost
equal to the cost of the solution $S_\I$.

In the reverse direction, suppose we are given a solution $S_\O$ to the instance $\O$.
We will convert the solution into a {\em standard} form, i.e. a solution in which if a
S-type resource $r(j)$ (for some job $j$) is included, then this resource is used to
cover job $j$. Suppose job $j$ is covered by some other resources in the solution
$S_\O$, while resource $r(j)$ covers some other jobs (call this set $J'$). We can clearly {\em exchange}
the resources between job $j$ and the set of jobs $J'$ so that job $j$ is covered by
resource $r(j)$. So we may assume that the solution $S_\O$ is in standard form.
But now, given a standard form solution $S_\O$, we can easily construct a
feasible solution $S_\I$ for the {\PCResAll} instance $\I$: if a job $j$ in $S_\O$
is covered by the S-type resource $r(j)$, then in $S_\I$, this job will not be
covered (and a penalty $p_j$ will be accrued); all jobs $j$ in $S_\O$ that are
covered by M-type resources will be covered by the corresponding resources
in $S_\I$.

This completes the reduction, and the proof of Theorem~\ref{DDD}.

\chapter {Conclusion and Open Problems}
\label {chap6}

In this thesis, we presented several algorithms for solving the {\ufpround}, {\ufpmax} and {\ufpbag} problems on paths and trees. We saw that some special cases of the {\ufpround} problem can have much better algorithms. We also showed how an algorithm for {\ufpround} can be used to solve the {\ufpmax} and {\ufpbag} problems. The idea of convex decomposition of fractional LP solutions is useful for this. We gave improved constant factor approximation algorithms for all these problems under the \emph{no bottleneck assumption}. We also studied the {\oic} problem and gave a constant factor competitive algorithm. Finally, we studied the {\PResAll} and the {\PCResAll} problems and gave $O(\log (n+m))$-approximation and $4$-approximation algorithms for them. There are several areas where there is a scope for improvements. We discuss some of them below.

For {\ufpround} on paths, we gave a 3-approximation algorithm for the case of uniform capacities. This algorithm requires $4r$ colors, where $r$ is the maximum congestion. However, we don't know of any example where the optimum coloring requires more than $2r$ colors. Moreover, our greedy algorithm when directly applied (without partitioning into small and large demands) also requires at most $2r$ colors on all examples that we have tried. It will be good to prove that this (or some other algorithm) requires at most $2r$ colors or prove that there is an example which requires more than $2r$ colors.

For arbitrary capacities and demands with NBA, we believe that the 24-approximation algorithm can be improved significantly. Again there is no example where the optimum coloring requires more than $2r$ colors. To improve the constant factor (24), we may need to consider $\frac{1}{2}$-small and $\frac{1}{2}$-large demands. It may also be the case that if we don't divide the demands into these two classes, a much better approximation is possible. But we need some new techniques for doing this.

Improving the $(2+\epsilon)$-approximation for {\ufpmax} with NBA is a formidable challenge. If we follow the small and large demands paradigm, to get a 2-approximation we need to have optimal solutions for both these instances, which is not possible for small demands (as it is NP-hard). So, we have to consider the demands together. Here, some new ideas are required to handle them together, as the existing techniques don't work well for these two classes.

For {\ufpbag}, improving the 65-approximation should not be very difficult. Again, considering $\frac{1}{2}$-small and $\frac{1}{2}$-large demands can be useful here. Moreover, we are using the approximation algorithm for \emph{throughout maximization for real-time scheduling} as a black box. If we can directly attack the problem, a much better approximation is possible.

For {\ufpround} and {\ufpmax} on trees, if we can use the tree structure more effectively, instead of breaking it into two paths and thereby losing a factor of 2, a better approximation is possible. A possible approach could be to consider the requests based on the depth of the least common ancestor (LCA) of the source and destination of a request.

For the {\oic} problem on paths with arbitrary capacities and arbitrary demands with NBA, designing an algorithm with a small constant approximation factor would be a significant challenge. The best lower bound for this problem with uniform capacities and arbitrary demands is $\frac{24}{7} \approx 3.43$ by Epstein et al. \cite{EpsteinL08}, improving the lower bound of 3 by Kierstead and Trotter for unit capacities and unit demands. Clearly, there is a big gap between the upper and lower bounds which needs to be closed. For trees, closing the gap between the upper bound of $O(\log n)$ and the lower bound of $\Omega\left(\frac{\log n}{\log \log n}\right)$ is a long-standing open problem.

A far more challenging task is to design good approximation algorithms for these problems without NBA. For {\ufpmax} on paths, a breakthrough was achieved when a $(7+\eps)$-approximation was given by \cite{BonsmaSW11}. To do this, they had to introduce new techniques, one of which is a novel geometric dynamic programming algorithm for the maximum weight independent set of rectangles problem. Since the congestion bound $r$ is very bad without NBA, for {\ufpround} we need significantly new ideas. A combination of the congestion bound $r$ and clique bound $\omega$ may do the job. We may also require a completely new and better lower bound.

For the {\PResAll} problem, the main goal is to either come up with a constant factor approximation algorithm, or to show that none exists by establishing a matching lower bound. One way to design the former is to design a constant factor approximation algorithm for the {\PCResAll} problem having the {\em Lagrangian Multiplier Preserving} property. Note that by using the Jain-Vazirani framework, we can immediately obtain a constant factor approximation algorithm for the {\PResAll} problem. It is also not clear whether the factors $O(\log n)$ and 4 for the {\PResAll} and {\PCResAll} problems respectively are the best possible.

Here are some future directions and open questions for these problems.

\begin{itemize}
	\item Is there a 2-approximation algorithm for {\ufpround} with uniform capacities?
	\item Can we improve the approximation factor of {\ufpround}, {\ufpmax} and {\ufpbag} problems on paths and trees?
	\item What is the approximability of these problems without the {\nba}? For {\ufpmax} on paths, a $(7+\eps)$-approximation is known.
	\item Is there a better constant factor competitive algorithm for the {\oic} problem on paths?
	\item For the {\oic} problem on trees, is it possible to close the gap between the upper bound of $O(\log n)$ and the lower bound of $\Omega\left(\frac{\log n}{\log \log n}\right)$?
	\item Is there a constant factor approximation algorithm for the {\PResAll} problem?
	\item Is there a constant factor approximation algorithm for the {\PCResAll} problem having the {\em Lagrangian Multiplier Preserving} property?
	\item What is the hardness of approximation of these problems?
\end{itemize}

\bibliographystyle{plain}
\bibliography{thesis}

\thispagestyle{plain}

\begin{center}
{\large \textbf{Biography of the Author}} 
\end{center}
\ \\
Arindam Pal completed Bachelor of Engineering from Jadavpur University, Kolkata in 2000 and Master of Engineering from Indian Institute of Science, Bangalore in 2002, both in Computer Science and Engineering. He worked as a Software Engineer in Microsoft and Yahoo! from February 2002 to July 2007. From August 2007 to November 2012, he worked for his  Ph.D. degree at the Department of Computer Science and Engineering, IIT Delhi. He is currently working as a Research Scientist at TCS Innovation Labs Kolkata. His research areas are approximation algorithms, combinatorial optimization, graph theory and machine learning.

\end{document}